\DeclareMathAlphabet{\mathbfi}{OML}{cmm}{b}{it}
\let\originalleft\left
\let\originalright\right
\renewcommand{\left}{\mathopen{}\mathclose\bgroup\originalleft}
\renewcommand{\right}{\aftergroup\egroup\originalright}
\newenvironment{equations}[1][]{\subequations\ifx\relax#1\relax\else\label{#1}\fi\align\ignorespaces}{\endalign\ignorespacesafterend\endsubequations}
\def\@spliteq#1{\begin{equation}\begin{split}#1\end{split}\end{equation}}
\def\@spliteqstar#1{\begin{equation*}\begin{split}#1\end{split}\end{equation*}}
\def\splitequation{\collect@body\@spliteq}
\def\csname splitequation*\endcsname{\collect@body\@spliteqstar}
\def\csname endsplitequation*\endcsname{\ignorespacesafterend}
\def\normalthmheadings{\def\@spbegintheorem##1##2##3##4{\trivlist
                 \item[\hskip\labelsep{##3##1\ ##2\@thmcounterend}]##4\addcontentsline{toc}{subsection}{##1\ ##2\@thmcounterend}}
\def\@spopargbegintheorem##1##2##3##4##5{\trivlist
      \item[\hskip\labelsep{##4##1\ ##2}]{##4(##3)\@thmcounterend\ }##5}}
\def\reversethmheadings{\def\@spbegintheorem##1##2##3##4{\trivlist
                 \item[\hskip\labelsep{##3##2\ ##1\@thmcounterend}]##4\addcontentsline{toc}{subsection}{##2\ ##1\@thmcounterend}}
\def\@spopargbegintheorem##1##2##3##4##5{\trivlist
      \item[\hskip\labelsep{##4##2\ ##1}]{##4(##3)\@thmcounterend\ }##5}}
\spnewtheorem*{remark*}{Remark}{\itshape}{\rmfamily}
\spnewtheorem*{example*}{Example}{\itshape}{\rmfamily}
\let\oldendproof\endproof
\def\endproof{\hfill\squareforqed\oldendproof}
\newcommand{\mathe}{\mathrm{e}}
\newcommand{\mathi}{\mathrm{i}}
\newcommand{\total}{\mathop{}\!\mathrm{d}}
\newcommand{\abs}[1]{{\left\lvert{#1}\right\rvert}}
\newcommand{\unitmatrix}{\mathbbm{1}}
\newcommand{\eqend}[1]{\,#1}
\newcommand{\bigo}[1]{\mathcal{O}\left({#1}\right)}
\newcommand{\brst}{\mathop{}\!\mathsf{s}\hskip 0.05em\relax}
\newcommand{\st}{\mathop{}\!\hat{\mathsf{s}}\hskip 0.05em\relax}
\newcommand{\stq}{\mathop{}\!\mathsf{q}\hskip 0.05em\relax}
\newcommand{\normord}[1]{\mathopen{:}{#1}\mathclose{:}}
\DeclareMathOperator{\supp}{supp}
\journalname{Communications in Mathematical Physics}
\begin{document}

\title{Anomalies in time-ordered products and applications to the BV--BRST formulation of quantum gauge theories}
\titlerunning{Anomalies in time-ordered products and quantum gauge theories}

\author{Markus B. Fr{\"o}b}
\institute{Institut f{\"u}r Theoretische Physik, Universit{\"a}t Leipzig,\\ Br{\"u}derstra{\ss}e 16, 04103 Leipzig, Germany\\ \email{mfroeb@itp.uni-leipzig.de}}
\authorrunning{M. B. Fr{\"o}b}

\date{17. July 2019}

\maketitle
\begin{abstract}
We show that every (graded) derivation on the algebra of free quantum fields and their Wick powers in curved spacetimes gives rise to a set of anomalous Ward identities for time-ordered products, with an explicit formula for their classical limit. We study these identities for the Koszul--Tate and the full BRST differential in the BV--BRST formulation of perturbatively interacting quantum gauge theories, and clarify the relation to previous results. In particular, we show that the quantum BRST differential, the quantum antibracket and the higher-order anomalies form an $L_\infty$ algebra. The defining relations of this algebra ensure that the gauge structure is well-defined on cohomology classes of the quantum BRST operator, i.e., observables. Furthermore, we show that one can determine contact terms such that also the interacting time-ordered products of multiple interacting fields are well defined on cohomology classes. An important technical improvement over previous treatments is the fact that all our relations hold off-shell and are independent of the concrete form of the Lagrangian, including the case of open gauge algebras.
\end{abstract}

\section{Introduction}
\label{sec_intro}

Studying quantum fields on curved spacetimes has lead to the discovery of many interesting effects, including the Hawking radiation of black holes~\cite{hawking1975}, the Unruh effect for accelerated observers~\cite{unruh1976,crispinohiguchimatsas2008} and, treating (quantum) gravity as an effective field theory of quantised metric perturbations, anisotropies in the cosmic microwave background which have been experimentally observed~\cite{planck2015a,planck2015b,planck2015c}. A mathematically sound framework for this study is locally covariant algebraic quantum field theory~\cite{brunettifredenhagenverch2003}. In the algebraic approach, instead of directly studying expectation values or matrix elements, one first constructs the interacting field algebra, which includes renormalisation. Imposing covariance under diffeomorphisms severely restricts the renormalisation freedom~\cite{hollandswald2001}, and in particular the renormalisation ambiguities in an arbitrary globally hyperbolic spacetime are the same as in flat spacetime (i.e., constants multiplying covariant interaction terms that are determined by power counting), with the only difference that curvature-dependent terms do appear. This was first shown for scalar fields in~\cite{hollandswald2001,hollandswald2002}, and later extended to Yang--Mills theories~\cite{hollands2008,hollands_rev} and general gauge theories in the BV--BRST framework~\cite{fredenhagenrejzner2013,rejzner2015} (based on earlier work on gauge theories in the algebraic framework~\cite{duetschboas2002,duetsch2005,brenneckeduetsch2008}), and also applied to superconformal theories~\cite{demedeiroshollands2013,taslimitehrani2017b} and perturbative quantum gravity~\cite{brunettifredenhagenrejzner2016,brunettietal2016}.

However, there are still questions that have not found a fully satisfactory answer. In particular, the (anomalous) Ward identities that express gauge invariance at the quantum level, derived in~\cite{hollands2008,hollands_rev}, have only been shown to hold for closed gauge algebras, where the commutator of two gauge transformations gives another gauge transformation. Nevertheless, in~\cite{demedeiroshollands2013,taslimitehrani2017b} it was applied to supersymmetric theories whose gauge algebra is open (i.e., the commutator of two gauge transformations contains an additional term proportional to the equations of motion), and while one would expect the same (or at least very similar) anomalous Ward identities to hold also in this case, the proof given in~\cite{hollands2008,hollands_rev} does not generalise easily. On the other hand, the treatment of~\cite{fredenhagenrejzner2013,rejzner2015} does not depend on the concrete form of the Lagrangian and thus also includes the general case of an open gauge algebra (although for some of the statements a closed gauge algebra was assumed), but did not treat explicitly the anomalies appearing in the construction of interacting observables. Moreover, Ward identities for (interacting) time-ordered products of interacting observables have not been derived at all, but these are necessary, e.g., for the observables in perturbative quantum gravity proposed in~\cite{brunettietal2016,froeb2018,froeblima2018}.

In this article, we give an answer to these open questions, and in fact relatively simple proofs starting from the realisation that anomalous Ward identities arise as a general feature in the construction of time-ordered products, for any derivation and not only the BRST differential. Our results are:
\begin{itemize}
\item The existence of a set of anomalous Ward identities for every graded derivation on the free-field algebra, Theorem~\ref{thm_anomward} (page~\pageref{thm_anomward}).
\item An explicit formula for the terms in the classical limit of these identities, Theorem~\ref{thm_classanom} (page~\pageref{thm_classanom}).
\item The particularisation of these theorems to the antibracket in the BV--BRST formalism, Theorems~\ref{thm_classanom_bv} and \ref{thm_freebrstward} (pages~\pageref{thm_classanom_bv} and~\pageref{thm_freebrstward}).
\item The (anomalous) Ward identities in perturbatively interacting gauge theories, Theorem~\ref{thm_brstward} (page~\pageref{thm_brstward}).
\item The $L_\infty$ structure of the quantum BRST operator, the quantum antibracket and higher-order anomalies, Theorem~\ref{thm_linfty} (page~\pageref{thm_linfty}).
\item The extension of classical observables to the quantum theory, and time-ordered products thereof, Theorem~\ref{thm_obs} (page~\pageref{thm_obs}).
\end{itemize}
We also give simplified proofs for some existing theorems, or (parts of) proofs that have not been given explicitly yet, and that are needed for the above results:
\begin{itemize}
\item The extra factors of $\hbar$ contained in connected time-ordered products, Theorem~\ref{thm_tc_hbar} (page~\pageref{thm_tc_hbar}).
\item The causal factorisation of interacting time-ordered products, Theorem~\ref{thm_tint_factor} (page~\pageref{thm_tint_factor}).
\item The vanishing of the anomaly in perturbatively interacting gauge theories for antifields, Theorem~\ref{thm_anom_antifields} (page~\pageref{thm_anom_antifields}).
\item That the terms violating perturbative agreement (linear in fields) are of the correct order in $\hbar$ to be removable by a field redefinition, Theorem~\ref{thm_pertagreement} (page~\pageref{thm_pertagreement}).
\end{itemize}

\section{Perturbative algebraic quantum field theory}
\label{sec_aqft}

Since extensive reviews of pAQFT exist~\cite{fredenhagenrejzner2012,hollandswald2015,fewsterverch2015,fredenhagenrejzner2016,hack2016}, and we are mainly interested in algebraic relations in the rest of the work, we give here only a summary to establish all necessary formulas. One starts with the construction of the free-field algebra on a globally hyperbolic manifold $(M,g)$ (i.e., where the Cauchy problem is globally well-posed). Assume thus given a set of dynamical fields $\{ \phi_K \}$, where the index $K$ distinguishes the type of field, and Lorentz, spinor and Lie algebra indices when necessary. Mathematically, these are sections of a vector bundle over $M$, obtained as the direct sum of the bundles corresponding to the individual fields. Repeated indices $K,L,\ldots$ are summed over, regardless of their position, and we denote by $\epsilon_K \in \{0,1\}$ the Grassmann parity of $\phi_K$. Assume further given an action $S = S_0 + S_\text{int}$, where the free action $S_0$ is at most quadratic in the fields and the interaction $S_\text{int}$ is at least cubic in the fields. We assume that $S$, $S_0$ and $S_\text{int}$ are bosonic (Grassmann even), and that $S_0$ does not contain terms linear in the fields, which is the case when any background fields (such as the spacetime metric $g$) fulfil their respective equations of motion. One can then write
\begin{equation}
\label{sec_aqft_action}
S_0 = \frac{1}{2} \int \phi_K(x) P_{KL}(x) \phi_L(x) \total x
\end{equation}
with the canonical (metric) volume form $\total x$, where $P_{KL}$ is a formally self-adjoint differential operator, $P_{KL}^* = (-1)^{\epsilon_K \epsilon_L} P_{LK}$. We further assume that $P_{KL} = 0$ if $\epsilon_K \neq \epsilon_L$ (i.e., $P_{KL}$ is itself a bosonic operator), and that it possesses unique retarded and advanced Green's functions:
\begin{equation}
\label{sec_aqft_retadv}
P_{KL}(x) G^\text{ret/adv}_{LM}(x,y) = \delta_{KM} \delta(x,y) = P^*_{LK}(y) G^\text{ret/adv}_{ML}(x,y)
\end{equation}
with $\delta(x,y)$ defined such that $\int f(x) \delta(x,y) \total x = f(y)$. The support of the retarded and advanced Green's functions fulfils
\begin{equation}
\supp \int G^\text{ret/adv}_{KL}(x,y) f_L(y) \total y \subset J^\pm\left( \supp f_K \right)
\end{equation}
for any test function $f_K$, where $J^+(S)$ [$J^-(S)$] is the causal future (past) of a set $S$. As test functions, we take here and in the following smooth and compactly supported sections of the vector bundle dual to the bundle of fields; in the scalar case, these are just smooth and compactly supported functions on $M$: $f \in \mathcal{C}^\infty_\text{c}(M)$. On a globally hyperbolic manifold, existence and uniqueness of the advanced and retarded Green's functions is guaranteed for normally hyperbolic operators (see, e.g.~\cite{baerginouxpfaeffle2007}), such as $P = \nabla^2 - m^2$ for the case of a single scalar field $\{ \phi_K \} = \{ \phi \}$ of mass $m$, or for a massless vector field in Feynman gauge $\{ \phi_K \} = \{ A_\mu \}$, where $P_{\mu\nu} = g_{\mu\nu} \nabla^2 - R_{\mu\nu}$. However, one may also be interested in other gauges, where the differential operator is not normally hyperbolic, but where nevertheless unique retarded and advanced Green's functions can be constructed~\cite{froebtaslimitehrani2018}. The retarded and advanced Green's function are related to each other by
\begin{equation}
\label{sec_aqft_retadv_rel}
G^\text{adv}_{KL}(x,y) = (-1)^{\epsilon_K \epsilon_L} G^\text{ret}_{LK}(y,x) \eqend{.}
\end{equation}

From the retarded and advanced Green's functions, one defines the Pauli--Jordan (or commutator) function
\begin{equation}
\label{sec_aqft_delta_def}
\Delta_{KL}(x,y) \equiv G^\text{ret}_{KL}(x,y) - G^\text{adv}_{KL}(x,y) = G^\text{ret}_{KL}(x,y) - (-1)^{\epsilon_K \epsilon_L} G^\text{ret}_{LK}(y,x) \eqend{.}
\end{equation}
We then define the algebra $\mathfrak{A}_0$ as the free $\ast$-algebra generated by the expressions $\phi_K(f_K)$, where $f_K$ is a test function, with the product denoted by $\star_\hbar$, the $\ast$-relation given by $\left[ \phi_K(f_K) \right]^\ast = \phi_K^\dagger(f_K^*)$, unit element $\unitmatrix$, and factored by the (anti-)commutation relation
\begin{splitequation}
\label{sec_aqft_commutator}
\left[ \phi_K(f_K), \phi_L(g_L) \right]_{\star_\hbar} &\equiv \phi_K(f_K) \star_\hbar \phi_L(g_L) - (-1)^{\epsilon_K \epsilon_L} \phi_L(g_L) \star_\hbar \phi_K(f_K) \\
&= \mathi \hbar \int f_K(x) \Delta_{KL}(x,y) g_L(y) \total x \total y \, \unitmatrix \equiv \mathi \hbar \Delta_{KL}(f_K,g_L) \, \unitmatrix \eqend{.}
\end{splitequation}
One can complete $\mathfrak{A}_0$ in the H{\"o}rmander (weak) topology~\cite{hoermander,hollands2008,brunettiduetschfredenhagen2009,fredenhagenrejzner2012b}, obtaining the \emph{free-field algebra} $\overline{\mathfrak{A}}_0$. In practice, this completion can be obtained by considering fixed two-point functions $G^+_{KL}(x,y)$ of Hadamard form, which are bisolutions
\begin{equation}
\label{sec_aqft_bisolutions}
P_{KL}(x) G^+_{LM}(x,y) = 0 = P^*_{LK}(y) G^+_{ML}(x,y)
\end{equation}
satisfying
\begin{equation}
\label{sec_aqft_bisolution_commutator}
G^+_{KL}(x,y) - (-1)^{\epsilon_K \epsilon_L} G^+_{LK}(y,x) = \Delta_{KL}(x,y)
\end{equation}
and a certain wave front set condition (microlocal spectrum condition)~\cite{radzikowski1996,brunettifredenhagenkoehler1996,brunettifredenhagen2000}. Since $P_{KL}$ was assumed to be a bosonic operator, we also have $G^+_{KL} = 0$ if $\epsilon_K \neq \epsilon_L$. One then defines a new set of generators of $\mathfrak{A}_0$, called \emph{normal-ordered products} and denoted by $\normord{\phi_{K_1} \cdots \phi_{K_n}}_G(f_{K_1} \otimes \cdots \otimes f_{K_n})$, and their integral kernels $\normord{\phi_{K_1}(x_1) \cdots \phi_{K_n}(x_n)}_G$ by
\begin{equation}
\normord{\phi_K}_G(f_K) \equiv \phi_K(f_K) \eqend{,}
\end{equation}
and then inductively such that
\begin{splitequation}
\label{sec_aqft_normord_product}
&\normord{\phi_{K_1}(x_1) \cdots \phi_{K_n}(x_n)}_G \star_\hbar \normord{\phi_{L_1}(y_1) \cdots \phi_{L_m}(y_m)}_G \\
&\quad= \normord{ \phi_{K_1}(x_1) \cdots \phi_{K_n}(x_n) \exp\left( \mathi \hbar \overleftrightarrow{G} \right) \phi_{L_1}(y_1) \cdots \phi_{L_m}(y_m) }_G \eqend{,}
\end{splitequation}
holds with
\begin{equation}
\overleftrightarrow{G} \equiv \int \overleftarrow{\frac{\delta_\text{R}}{\delta \phi_M(u)}} G^+_{MN}(u,v) \overrightarrow{\frac{\delta_\text{L}}{\delta \phi_N(v)}} \total u \total v \eqend{.}
\end{equation}
In this expression, the functional derivatives formally act to the left and right like on classical fields, which we indicate by the arrows. Moreover, since we work with fields of odd Grassmann parity, we need to distinguish left and right functional derivatives (denoted by the subscripts L and R), which satisfy graded Leibniz and commutation rules and can be exchanged according to the following:
\begin{equations}[sec_aqft_leftrightder_props]
\frac{\delta_\text{L}}{\delta \phi_K(x)} \left( F G \right) &= \frac{\delta_\text{L} F}{\delta \phi_K(x)} G + (-1)^{\epsilon_K \epsilon_F} F \frac{\delta_\text{L} G}{\delta \phi_K(x)} \eqend{,} \\
\frac{\delta_\text{R}}{\delta \phi_K(x)} \left( F G \right) &= (-1)^{\epsilon_K \epsilon_G} \frac{\delta_\text{R} F}{\delta \phi_K(x)} G + F \frac{\delta_\text{R} G}{\delta \phi_K(x)} \eqend{,} \\
\frac{\delta_\text{L} F}{\delta \phi_K} &= (-1)^{(1+\epsilon_F) \epsilon_K} \frac{\delta_\text{R} F}{\delta \phi_K} \eqend{,} \\
\frac{\delta_{\text{L}/\text{R}}^2 F}{\delta \phi_K \delta \phi_L} &= (-1)^{\epsilon_K \epsilon_L} \frac{\delta_{\text{L}/\text{R}}^2 F}{\delta \phi_L \delta \phi_K} \eqend{,} \qquad \frac{\delta_\text{L} \delta_\text{R} F}{\delta \phi_K \delta \phi_L} = \frac{\delta_\text{R} \delta_\text{L} F}{\delta \phi_L \delta \phi_K} \eqend{.}
\end{equations}
As an example, for two fields one obtains
\begin{splitequation}
\normord{ \phi_K \phi_L }_G\left( f_K \otimes g_L \right) &= \frac{1}{2} \left[ \phi_K(f_K) \star_\hbar \phi_L(g_L) - \mathi \hbar G^+_{KL}(f_K,g_L) \right] \\
&\quad+ \frac{1}{2} (-1)^{\epsilon_K \epsilon_L} \left[ \phi_L(g_L) \star_\hbar \phi_K(f_K) - \mathi \hbar G^+_{LK}(g_L,f_K) \right] \eqend{.}
\end{splitequation}
One can then consider the limit (in the weak topology) where the tensor product of test functions $f_{K_1} \otimes \cdots \otimes f_{K_n}$ tends to a distribution; due to the microlocal spectrum condition on the two-point functions $G^+_{KL}$ the product~\eqref{sec_aqft_normord_product} remains well-defined if one also imposes a certain wave front set condition on the limiting distribution. In particular, this includes the \emph{Wick monomials} for which $f_{K_1}(x_1) \otimes \cdots \otimes f_{K_n}(x_n) \to f_{K_1 \cdots K_n}(x_1) \delta(x_1,\ldots,x_n)$~\cite{brunettifredenhagenkoehler1996,hollandswald2001}.

Functional derivatives can also be defined for elements of $\mathfrak{A}_0$ by setting
\begin{equation}
\label{sec_aqft_funcder}
\frac{\delta_\text{L}}{\delta \phi_L(x)} \phi_K(f_K) = \frac{\delta_\text{R}}{\delta \phi_L(x)} \phi_K(f_K) \equiv f_K(x) \delta_{KL} \unitmatrix \eqend{,}
\end{equation}
which is compatible with the (anti-)commutation relation~\eqref{sec_aqft_commutator}, and requiring the functional derivatives to satisfy linearity and the graded Leibniz rule~\eqref{sec_aqft_leftrightder_props}, with the commuting product appearing there replaced by the star product. This functional derivative is compatible with the normal-ordered products, in the sense that
\begin{splitequation}
\label{sec_aqft_funcder_normalordered}
&\frac{\delta_\text{L}}{\delta \phi_L(x)} \normord{\phi_{K_1} \cdots \phi_{K_n}}_G(f_{K_1} \otimes \cdots \otimes f_{K_n}) = \sum_{k=1}^n f_{K_k}(x) \delta_{K_k L} (-1)^{\epsilon_L \sum_{m=1}^{k-1} \epsilon_{K_m}} \\
&\hspace{6em}\times \normord{\phi_{K_1} \cdots \hat{\phi}_{K_k} \cdots \phi_{K_n}}_G(f_{K_1} \otimes \cdots \otimes \hat{f}_{K_k} \otimes \cdots \otimes f_{K_n})
\raisetag{1.3em}
\end{splitequation}
holds (where the hat denotes omission), which can be easily proven by induction from relation~\eqref{sec_aqft_normord_product} for the respective integral kernels. The functional derivative can thus be extended straightforwardly to the completion $\overline{\mathfrak{A}}_0$. It follows by induction from the (anti-)commutator of two basic fields~\eqref{sec_aqft_commutator} and the graded Leibniz rule for the (anti-)commutator that the (anti-)commutator of two elements $F,G \in \overline{\mathfrak{A}}_0$ can be written as
\begin{equation}
\label{sec_aqft_commutator_general}
\left[ F, G \right]_{\star_\hbar} = F \star_\hbar \left[ \exp\left( \mathi \hbar \overleftrightarrow{\Delta} \right) - 1 \right] \star_\hbar G
\end{equation}
with
\begin{equation}
\overleftrightarrow{\Delta} \equiv \iint \overleftarrow{\frac{\delta_\text{R}}{\delta \phi_M(u)}} \Delta_{MN}(u,v) \overrightarrow{\frac{\delta_\text{L}}{\delta \phi_N(v)}} \total u \total v \eqend{,}
\end{equation}
and the functional derivatives act to the terms to the left or right of them only. In fact, this follows directly from the definition of the normal-ordered products~\eqref{sec_aqft_normord_product} using that the functional derivatives are compatible with the normal-ordered products~\eqref{sec_aqft_funcder_normalordered} and that the antisymmetric part of the bisolution is the commutator~\eqref{sec_aqft_bisolution_commutator}. For a basic field, we obtain in particular
\begin{equation}
\label{sec_aqft_commutator_singlefield}
\left[ F, \phi_K(f_K) \right]_{\star_\hbar} = \mathi \hbar \iint \frac{\delta_\text{R} F}{\delta \phi_L(y)} \Delta_{LK}(y,x) f_K(x) \total y \total x \eqend{.}
\end{equation}

The \emph{on-shell} free-field algebra is obtained from $\overline{\mathfrak{A}}_0$ by diving out terms corresponding to the free equations of motion. These are the normal-ordered products with one test function equal to the adjoint of the field equation acting on another test function, i.e., expressions of the form
\begin{equation}
\int \normord{\phi_{K_1}(x_1) \cdots \phi_{K_n}(x_n)}_G f_{K_1}(x_1) \cdots \left[ P^*_{LK_k}(x_k) f_{K_k}(x_k) \right] \cdots f_{K_n}(x_n) \total x_1 \cdots \total x_n
\end{equation}
for some $k \in \{1,\ldots,n\}$ and some $L$, and the completion of these expressions with a certain wave front set condition imposed on the limiting distributions as before. Since the fixed two-point function is a bisolution~\eqref{sec_aqft_bisolutions}, these expressions form a subspace closed under multiplication by arbitrary elements, which can be seen using integration by parts in the explicit formula~\eqref{sec_aqft_normord_product} for the product of two normal-ordered products, and thus constitute an ideal that we denote by $\mathfrak{I}_0$. The factor algebra $\overline{\mathfrak{A}}_0 / \mathfrak{I}_0$ is then the physical algebra which needs to be represented on a Hilbert space; we leave the details to the aforementioned reviews~\cite{fredenhagenrejzner2012,hollandswald2015,fewsterverch2015,fredenhagenrejzner2016,hack2016}.

Instead of normal ordering with respect to a fixed two-point function $G^+_{KL}$ of Hadamard form, one can also normal order with respect to the Hadamard parametrix $H^+_{KL}$, which is the common singular part of all Hadamard two-point functions. That is, for all $y$ in a normally geodesic neighbourhood of $x$, we have
\begin{equation}
G^+_{KL}(x,y) = H^+_{KL}(x,y) + W_{KL}(x,y) \eqend{,}
\end{equation}
where $W_{KL}$ is a smooth function, and $H^+_{KL}$ is locally and covariantly constructed from the geometry. One then defines the generators $\normord{\phi_{K_1} \cdots \phi_{K_n}}_H(f_{K_1} \otimes \cdots \otimes f_{K_n})$ in complete analogy to the $G$-normal ordered ones. In fact, one has the explicit relation
\begin{equation}
\label{sec_aqft_normordh_rel}
\normord{\phi_{K_1}(x_1) \cdots \phi_{K_n}(x_n)}_H = \sum_{S \subseteq \{1,\ldots,n\}} (-1)^S \normord{ \prod_{i \in S} \phi_{K_i}(x_i) }_G \prod_{j,k \not\in S} \frac{\mathi \hbar}{2} W_{K_jK_k}(x_j,x_k)
\end{equation}
for the respective integral kernels, where the factor $(-1)^S$ arises from anticommuting fermionic fields. Since the difference $W_{KL}$ is smooth, one can perform the completion in the same way as before, and in particular define the \emph{locally covariant Wick monomials}, which are the expressions~\eqref{sec_aqft_normordh_rel} smeared with $f_{K_1 \cdots K_n}(x_1) \delta(x_1,\ldots,x_n)$. Also the functional derivative extends to these generators in a straightforward way, and the analogue of relation~\eqref{sec_aqft_funcder_normalordered} holds for them.

By construction, any element $A \in \overline{\mathfrak{A}}_0$ can be written in the form
\begin{equation}
A = F_0 \unitmatrix + \sum_{k=1}^N \normord{\phi_{K_1} \cdots \phi_{K_k}}_G(F_k)
\end{equation}
for some finite $N$, where the $F_i$ are distributions satisfying a certain wave front set condition mentioned above~\cite{brunettifredenhagenkoehler1996,hollandswald2001}. Using the explicit relation~\eqref{sec_aqft_normordh_rel} between the normal-ordered and Hadamard-normal ordered products, a similar decomposition (with the same $N$) can be made using the Hadamard-normal ordered products. It follows in particular that any local element $A(f) = \int A(x) f(x) \total x \in \overline{\mathfrak{A}}_0$ can be written as a (finite) sum of locally covariant Wick monomials
\begin{equation}
\label{sec_aqft_local_sum_wick}
A(f) = \int f(x) \left[ \alpha_0(x) \unitmatrix + \sum_{k=1}^N \alpha_k(x) \normord{\phi_{K_1}(x) \cdots \phi_{K_k}(x)}_H \right] \total x \eqend{,}
\end{equation}
where the $\alpha_i(x)$ are functions determined from the $W_{KL}$ and their derivatives.

The Hadamard-normal ordered products are used in the construction of \emph{time-ordered products}. We denote by $\mathcal{F}$ the space of local smeared field polynomials, i.e. classical expressions of the form
\begin{equation}
\label{sec_aqft_calf_expr}
\int g\left[ \phi_K(x), \nabla \phi_K(x), \ldots, \nabla^s \phi_K(x) \right] f(x) \total x
\end{equation}
for some $s \in \mathbb{N}_0$, where $g$ is polynomial in its entries and $f$ is a test function. The unit in this vector space is $1(f) \equiv \int f(x) \total x$, and we will occasionally also denote by $\mathcal{F}_V$ the subspace where $\supp f \subset V$. The time-ordered products $\mathcal{T}$ are multilinear maps $\mathcal{T}_n\colon \mathcal{F}^{\otimes n} \to \overline{\mathfrak{A}}_0$, fulfilling a certain set of properties, in particular (see, e.g.~\cite{hollandswald2001,hollandswald2005,hollands2008,zahn2015} for a full detailed list):
\begin{itemize}
\item Locality and covariance: Given an isometric embedding $\psi\colon M \to M'$ that preserves the causal structure, $\psi_*$ the corresponding push-forward map, and $\alpha_\psi\colon \overline{\mathfrak{A}}_0(M,g) \to \overline{\mathfrak{A}}_0(M',g')$ the induced isomorphism between the free-field algebras (defined by $\alpha_\psi \phi_K(f,M,g) = \phi_K(\psi_* f, M',g')$ and by continuity on the completion), we have $\alpha_\psi \circ \mathcal{T}_n\left[ F^{\otimes n} \right] = \mathcal{T}'_n\left[ \left( \psi_* F \right)^{\otimes n} \right]$.
\item Graded symmetry: For elements $F,G \in \mathcal{F}$ with definite Grassmann parity, we have $\mathcal{T}_n\left[ \cdots F \otimes G \cdots \right] = (-1)^{\epsilon_F \epsilon_G} \mathcal{T}_n\left[ \cdots G \otimes F \cdots \right]$.
\item Neutral element: $\mathcal{T}_n\left[ F^{\otimes (n-1)} \otimes 1 \right] = \mathcal{T}_{n-1}\left[ F^{\otimes (n-1)} \right]$.
\item Causal factorisation: If $J^+(\supp F_i) \cap J^-(\supp F_j) = \emptyset$ for all $1 \leq i \leq \ell$, $\ell+1 \leq j \leq n$ for some $1 \leq \ell < n$, i.e. if all the $F_i$ lie to the future of all the $F_j$ or are spacelike separated from them, we have $\mathcal{T}_n\left( F_1 \otimes \cdots \otimes F_n \right) = \mathcal{T}_\ell\left( F_1 \otimes \cdots \otimes F_\ell \right) \star_\hbar \mathcal{T}_{n-\ell}\left( F_{\ell+1} \otimes \cdots \otimes F_n \right)$.
\item Field independence: We have
\begin{equation}
\label{sec_aqft_timeord_fieldindep}
\frac{\delta_\text{L}}{\delta \phi_K(x)} \mathcal{T}_n\left( F^{\otimes n} \right) = n \mathcal{T}_n\left[ F^{\otimes (n-1)} \otimes \frac{\delta_\text{L} F}{\delta \phi_K(x)} \right] \eqend{.}
\end{equation}
\end{itemize}
Note that expressions involving higher tensor powers $F^{\otimes n}$ of the same functional $F$, such as the neutral element property or the commutator~\eqref{sec_aqft_timeord_commutator}, only make sense when $F$ is bosonic; for fermionic functionals similar expressions hold with additional minus signs. To avoid unnecessary complication, and since one can always recover the time-ordered products with $n$ distinct factors of arbitrary Grassmann parity from the ones with an $n$-fold bosonic factor by polarisation, such expressions are always to be understood for bosonic $F$ in the following (except where explicitly noted), which in particular will be the case for all relations involving generating functions. For example, one obtains
\begin{equation}
\mathcal{T}_2\left( F \otimes G \right) = \frac{1}{2} \left[ \mathcal{T}_2\left[ \left( F + G \right) \otimes \left( F + G \right) \right] - \mathcal{T}_2\left( F \otimes F \right) - \mathcal{T}_2\left( G \otimes G \right) \right]
\end{equation}
if at least one of $F$ and $G$ is bosonic, and
\begin{equation}
\label{sec_aqft_timeordprod_polarisation2}
\alpha \beta \, \mathcal{T}_2\left( F \otimes G \right) = - \frac{1}{2} \mathcal{T}_2\left[ \left( \alpha F + \beta G \right) \otimes \left( \alpha F + \beta G \right) \right]
\end{equation}
with auxiliary Grassmann variables $\alpha$ and $\beta$ if $F$ and $G$ are both fermionic, using the multilinearity and graded symmetry property of the time-ordered products. In the field independence property, the functional derivative on the left-hand side is the derivative in the algebra $\overline{\mathfrak{A}}_0$ given by~\eqref{sec_aqft_funcder} and~\eqref{sec_aqft_funcder_normalordered}, while the functional derivative on the right-hand side acts on elements of $\mathcal{F}$. Using the formula for the commutator of an element of $\overline{\mathfrak{A}}_0$ and a basic field~\eqref{sec_aqft_commutator_singlefield}, we obtain from the field independence property the commutation relation
\begin{equation}
\label{sec_aqft_timeord_commutator}
\left[ \mathcal{T}_n\left( F^{\otimes n} \right), \phi_K(x) \right]_{\star_\hbar} = \mathi \hbar \, n \, \mathcal{T}_n\left[ F^{\otimes (n-1)} \otimes \int \frac{\delta_\text{R} F}{\delta \phi_L(y)} \Delta_{LK}(y,x) \total y \right] \eqend{.}
\end{equation}
If one only wants to construct the time-ordered products in the on-shell free-field algebra, it is sufficient to impose the commutator condition instead of the field independence property~\cite{brunettifredenhagen2000,hollandswald2001,hollandswald2002}, but for an off-shell construction field independence is necessary~\cite{brunettiduetschfredenhagen2009}.

The time-ordered products can be constructed inductively~\cite{brunettifredenhagen2000,hollandswald2001,hollandswald2002,brunettiduetschfredenhagen2009}: one starts with $\mathcal{T}_1\left( F \right) = \normord{F}_H$. By causal factorisation, time-ordered products with more than one factor are already defined (as algebra-valued distributions) in terms of time-ordered products with less factors except on the total diagonal, i.e., the integral kernel $\mathcal{T}_n\left[ F_1(x_1) \otimes \cdots \otimes F_n(x_n) \right]$ gives a finite result when smeared with test functions whose support does not include the points where $x_1 = \cdots = x_n$. In a neighbourhood of the total diagonal, one then performs an expansion of $\mathcal{T}_n$ in terms of the generators~\eqref{sec_aqft_normordh_rel} (normal-ordered with respect to the Hadamard parametrix), with c-number distributions as expansion coefficients. These coefficients are then extended to the total diagonal in such a way that all the required properties are preserved in the extension~\cite{hollandswald2001,hollandswald2002,hollands2008}. In particular, the locality and covariance property requires that the normal ordering is done with respect to the Hadamard parametrix (which is locally and covariantly constructed from the geometry), and not with respect to a two-point function. Since only the coefficients are extended, which are scalar c-number distributions, the construction also shows that time-ordered products preserve quantum numbers, in the following sense: For any grading $G = \mathbb{Z}/k \mathbb{Z}$ on the free-field algebra $\overline{\mathfrak{A}}_0 = \bigoplus_{g \in G} \overline{\mathfrak{A}}_0^g$, and the corresponding classical grading on $\mathcal{F} = \bigoplus_{g \in G} \mathcal{F}^g$, we have
\begin{equation}
g\left[ \mathcal{T}_n\left( F_1 \otimes \cdots \otimes F_n \right) \right] = \sum_{i=1}^n g(F_i)
\end{equation}
whenever the $F_i$ are homogeneous elements (with a definite grading).

The above construction is not unique, and it has been shown~\cite{hollandswald2001,hollandswald2003,duetschfredenhagen2004,brunettiduetschfredenhagen2009,khavkinemelatimoretti2019} that two different sets of time-ordered products $\mathcal{T}$ and $\hat{\mathcal{T}}$ are related by the expected renormalisation freedom. This can be most easily written by combining the time-ordered products into a generating function
\begin{equation}
\mathcal{T}\left[ \exp_\otimes\left( \frac{\mathi}{\hbar} F \right) \right] \equiv \sum_{n=0}^\infty \frac{1}{n!} \left( \frac{\mathi}{\hbar} \right)^n \mathcal{T}_n\left( F^{\otimes n} \right)
\end{equation}
with $\mathcal{T}_0 \equiv \unitmatrix$. We then have the relation
\begin{equation}
\label{sec_aqft_renormfreedom}
\hat{\mathcal{T}}\left[ \exp_\otimes\left( \frac{\mathi}{\hbar} F \right) \right] = \mathcal{T}\left[ \exp_\otimes\left( \frac{\mathi}{\hbar} F + \frac{\mathi}{\hbar} \mathcal{Z}\left( \mathe_\otimes^F \right) \right) \right] \eqend{,}
\end{equation}
where the generating functional
\begin{equation}
\mathcal{Z}\left( \mathe_\otimes^F \right) \equiv \sum_{n=0}^\infty \frac{1}{n!} \mathcal{Z}_n\left( F^{\otimes n} \right) \eqend{,} \qquad \mathcal{Z}_0 \equiv 0
\end{equation}
contains multilinear maps $\mathcal{Z}_n\colon \mathcal{F}^n \to \mathcal{F}$ fulfilling a certain set of properties, in particular (see, e.g.~\cite{hollandswald2005,hollands2008} for a full detailed list):
\begin{itemize}
\item Locality and covariance: Given an isometric embedding $\psi\colon M \to M'$ that preserves the causal structure and $\psi_*$ the corresponding push-forward map, we have $\psi^* \circ \mathcal{Z}'_n\left[ F^{\otimes n} \right] = \mathcal{Z}_n\left[ \left( \psi^* F \right)^{\otimes n} \right]$.
\item Graded symmetry: For elements $F,G \in \mathcal{F}$ with definite Grassmann parity, we have $\mathcal{Z}\left[ \cdots \otimes F \otimes G \otimes \cdots \right] = (-1)^{\epsilon_F \epsilon_G} \mathcal{Z}\left[ \cdots \otimes G \otimes F \otimes \cdots \right]$.
\item Neutral element: $\mathcal{Z}_n\left[ F^{\otimes (n-1)} \otimes 1 \right] = \mathcal{Z}_{n-1}\left[ F^{\otimes (n-1)} \right]$.
\item Support on the total diagonal: If there exist $i$, $j$ such that $\supp F_i \cap \supp F_j = \emptyset$, we have $\mathcal{Z}_n\left( F_1 \otimes \cdots \otimes F_n \right) = 0$.
\item Order in $\hbar$: If $F = \bigo{\hbar^0}$, then $\mathcal{Z}_n\left( F^{\otimes n} \right) = \bigo{\hbar}$.
\item Field independence:
\begin{equation}
\label{sec_aqft_renorm_fieldindependence}
\frac{\delta_\text{L}}{\delta \phi_K(x)} \mathcal{Z}_n\left( F^{\otimes n} \right) = n \mathcal{Z}_n\left[ F^{\otimes (n-1)} \otimes \frac{\delta_\text{L} F}{\delta \phi_K(x)} \right] \eqend{.}
\end{equation}
\end{itemize}
By the field independence property, one sees that also the $\mathcal{Z}_n$ preserve quantum numbers: $g\left[ \mathcal{Z}_n\left( F_1 \otimes \cdots \otimes F_n \right) \right] = \sum_{i=1}^n g(F_i)$ whenever the $F_i$ are homogeneous elements.

We also define the \emph{connected time-ordered products} $\mathcal{T}^\text{c}$ by
\begin{equation}
\label{sec_aqft_tc_def}
\mathcal{T}\left[ \exp_\otimes\left( \frac{\mathi}{\hbar} F \right) \right] = \exp_{\star_\hbar}\left[ \frac{\mathi}{\hbar} \mathcal{T}^\text{c}\left[ \mathe_\otimes^F \right] \right] \eqend{,}
\end{equation}
understood as an equality between generating functionals. Expanding in powers of $F$, this relation reads
\begin{splitequation}
\label{sec_aqft_tc_expanded}
&\mathcal{T}^\text{c}_n\left[ F_1 \otimes \cdots \otimes F_n \right] = \left( \frac{\mathi}{\hbar} \right)^{n-1} \sum_{k=1}^n \frac{(-1)^{k+1}}{k} \\
&\quad\times \sum_{K_1 \cup \cdots \cup K_k = \{ 1, \ldots, n\}, K_i \neq \emptyset} \mathcal{T}_{\abs{K_1}}\left[ \bigotimes_{\ell \in K_1} F_\ell \right] \star_\hbar \cdots \star_\hbar \mathcal{T}_{\abs{K_k}}\left[ \bigotimes_{\ell \in K_k} F_\ell \right]
\end{splitequation}
for bosonic $F_i$, while the appropriate sign factors for fermionic $F_i$ can be determined by multilinearity [similar to equation~\eqref{sec_aqft_timeordprod_polarisation2}]. Contrary to appearance, the connected time-ordered products are formal power series in $\hbar$~\cite{duetschfredenhagen2001}, and do not contain negative powers of $\hbar$. A simple proof can be given by noting that in the inductive construction of the time-ordered products, one only has to extend the connected part to the total diagonal:
\begin{theorem}
\label{thm_tc_hbar}
The connected time-ordered products are formal power series in $\hbar$.
\end{theorem}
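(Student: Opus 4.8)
The plan is to argue by induction on the number $n$ of factors, combining the inductive Epstein--Glaser construction of the time-ordered products with a single algebraic input: every $\star_\hbar$-commutator carries at least one explicit factor of $\hbar$. Indeed, by~\eqref{sec_aqft_commutator_general} one has $[A,B]_{\star_\hbar} = A \star_\hbar ( \mathe^{\mathi\hbar\overleftrightarrow{\Delta}} - 1 ) \star_\hbar B = \bigo{\hbar}$. The base case $n=1$ is immediate, since $\mathcal{T}^\text{c}_1(F) = \mathcal{T}_1(F) = \normord{F}_H = \bigo{\hbar^0}$ and the prefactor $(\mathi/\hbar)^{n-1}$ in~\eqref{sec_aqft_tc_expanded} is then trivial. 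Assuming $\mathcal{T}^\text{c}_m = \bigo{\hbar^0}$ for all $m < n$, I would treat separately the region away from the total diagonal and the extension to it.

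Away from the total diagonal I would use the standard covering of the configuration space of $n$ non-coincident points by the causal sets $C_I$, indexed by proper nonempty $I \subset \{1,\ldots,n\}$, on which the factors labelled by $I$ lie to the future of or are spacelike to those labelled by $I^\text{c}$. Grouping $F_> \equiv \sum_{i \in I} F_i$ and $F_< \equiv \sum_{j \in I^\text{c}} F_j$, the generating-function form of the causal factorisation property gives $\mathcal{T}[ \exp_\otimes( \frac{\mathi}{\hbar}(F_> + F_<) ) ] = \mathcal{T}[ \exp_\otimes( \frac{\mathi}{\hbar}F_> ) ] \star_\hbar \mathcal{T}[ \exp_\otimes( \frac{\mathi}{\hbar}F_< ) ]$, so that by the definition~\eqref{sec_aqft_tc_def} the connected generating functional $\frac{\mathi}{\hbar}\mathcal{T}^\text{c}[\mathe_\otimes^{F_>+F_<}]$ is the $\star_\hbar$-logarithm of a product. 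By the Baker--Campbell--Hausdorff formula it equals $a+b$ plus nested $\star_\hbar$-commutators of $a \equiv \frac{\mathi}{\hbar}\mathcal{T}^\text{c}[\mathe_\otimes^{F_>}]$ and $b \equiv \frac{\mathi}{\hbar}\mathcal{T}^\text{c}[\mathe_\otimes^{F_<}]$. Since $a$ contains only the factors in $I$ and $b$ only those in $I^\text{c}$, while $\mathcal{T}^\text{c}_n(F_1 \otimes \cdots \otimes F_n)$ is the coefficient of the monomial in which every $F_i$ appears exactly once --- and hence spans both nonempty groups --- it can only receive contributions from terms containing at least one bracket. In a nested commutator of $r$ building blocks there are $r$ prefactors $\mathi/\hbar$ and $r-1$ brackets; by~\eqref{sec_aqft_commutator_general} each bracket is $\bigo{\hbar}$ and each building block is $\bigo{\hbar^0}$ by the induction hypothesis, so the net order is $\hbar^{-1}$, exactly matching the single prefactor $\mathi/\hbar$ multiplying $\mathcal{T}^\text{c}_n$ on the left. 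Hence $\mathcal{T}^\text{c}_n = \bigo{\hbar^0}$ on every $C_I$, and therefore everywhere off the total diagonal. The simplest instance is $n=2$, where on the region in which the first factor is later one finds $\mathcal{T}^\text{c}_2(F_1 \otimes F_2) = \frac{\mathi}{2\hbar}[ \mathcal{T}_1(F_1), \mathcal{T}_1(F_2) ]_{\star_\hbar}$, manifestly of order $\hbar^0$ by~\eqref{sec_aqft_commutator_general}.

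It remains to extend $\mathcal{T}^\text{c}_n$ to the total diagonal, and here the essential observation --- the one indicated in the statement --- is that the disconnected contributions to $\mathcal{T}_n$ obtained by expanding the $\star_\hbar$-exponential in~\eqref{sec_aqft_tc_def} are $\star_\hbar$-products $\mathcal{T}^\text{c}_{m_1} \star_\hbar \cdots \star_\hbar \mathcal{T}^\text{c}_{m_k}$ with $\sum_i m_i = n$ and each $m_i < n$, which by the induction hypothesis are already defined (as algebra-valued distributions, including on the diagonal) once all lower orders are fixed. Thus the only object that genuinely has to be extended to the total diagonal at order $n$ is the connected part $\mathcal{T}^\text{c}_n$ itself. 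Since $\hbar$ enters only as a formal expansion parameter, this extension is carried out order by order in $\hbar$ on the scalar c-number coefficients appearing in the expansion in the Hadamard-normal-ordered generators~\eqref{sec_aqft_normordh_rel}; extending a distribution multiplying a fixed power $\hbar^j$ again yields a distribution multiplying $\hbar^j$, so no negative powers are created. Consequently the extended $\mathcal{T}^\text{c}_n$ is still $\bigo{\hbar^0}$, closing the induction.

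The main obstacle is the power-counting of the second paragraph: one must verify that connecting the two causal groups always costs precisely enough factors of $\hbar$ from the $\star_\hbar$-commutators to cancel the explicit $(\mathi/\hbar)^{n-1}$ in~\eqref{sec_aqft_tc_expanded}, uniformly in the partition $I$ and in the order of the BCH expansion. The bookkeeping that makes this cancellation \emph{exact} rather than merely bounded is that each building block is $\bigo{\hbar^0}$ by induction while each bracket restores exactly one power of $\hbar$; the remaining point, that the Epstein--Glaser extension preserves the order in $\hbar$ because it acts coefficientwise on a formal power series, is then routine.
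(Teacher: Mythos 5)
Your proof is correct and follows essentially the same route as the paper's: induction on the number of factors, causal factorisation of the time-ordered generating functional combined with the Baker--Campbell--Hausdorff (Dynkin) formula to control the connected part away from the total diagonal, power counting based on the fact that every $\star_\hbar$-commutator is $\bigo{\hbar}$, and the observation that the extension to the diagonal acts coefficientwise on the formal power series in $\hbar$ and so cannot create negative powers. Your bookkeeping is in fact somewhat more explicit than the paper's --- you spell out that the cross terms of $\mathcal{T}^\text{c}_n$ must contain at least one commutator and that the nested-commutator arguments involve only lower-order connected products covered by the induction hypothesis --- but it is the same argument.
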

\begin{proof}
For $n = 1$ we have $\mathcal{T}^\text{c}_1\left( F \right) = \mathcal{T}_1\left( F \right) = \normord{F}_H$, which is a formal power series in $\hbar$. Assume thus that is has been shown that $\mathcal{T}^\text{c}_k\left( F_1 \otimes \cdots \otimes F_k \right) = \bigo{\hbar^0}$ for all $k < n$, and consider functionals $F_+ = F_1 + \cdots + F_\ell$ and $F_- = F_{\ell+1} + \cdots + F_n$ such that $J^+(\supp F_+) \cap J^-(\supp F_-) = \emptyset$. From the definition of the connected time-ordered products~\eqref{sec_aqft_tc_def} and using the causal factorisation of the time-ordered products, we obtain
\begin{splitequation}
\exp_{\star_\hbar}\left[ \frac{\mathi}{\hbar} \mathcal{T}^\text{c}\left[ \mathe_\otimes^{F_+ + F_-} \right] \right] &= \mathcal{T}\left[ \exp_\otimes\left( \frac{\mathi}{\hbar} F_+ + \frac{\mathi}{\hbar} F_- \right) \right] \\
&= \sum_{n=0}^\infty \frac{1}{n!} \left( \frac{\mathi}{\hbar} \right)^n \mathcal{T}_n\left[ \sum_{k=0}^n \frac{n!}{k! (n-k)!} F_+^{\otimes k} \otimes F_-^{\otimes (n-k)} \right] \\
&= \left[ \sum_{k=0}^\infty \frac{1}{k!} \left( \frac{\mathi}{\hbar} \right)^k \mathcal{T}_k\left[ F_+^{\otimes k} \right] \right] \star_\hbar \left[ \sum_{n=0}^\infty \frac{1}{n!} \left( \frac{\mathi}{\hbar} \right)^n \mathcal{T}_n\left[ F_-^{\otimes n} \right] \right] \\
&= \mathcal{T}\left[ \exp_\otimes \left( \frac{\mathi}{\hbar} F_+ \right) \right] \star_\hbar \mathcal{T}\left[ \exp_\otimes \left( \frac{\mathi}{\hbar} F_- \right) \right] \\
&= \exp_{\star_\hbar}\left[ \frac{\mathi}{\hbar} \mathcal{T}^\text{c}\left[ \mathe_\otimes^{F_+} \right] \right] \star_\hbar \exp_{\star_\hbar}\left[ \frac{\mathi}{\hbar} \mathcal{T}^\text{c}\left[ \mathe_\otimes^{F_-} \right] \right] \eqend{.}
\raisetag{2em}
\end{splitequation}
The assertion now follows from the exponential (Baker--Campbell--Hausdorff) theorem, with an explicit formula due to Dynkin; see~\cite{achillesbonfliglioli2012} for a modern account of its history. It reads
\begin{splitequation}
&\exp_{\star_\hbar}\left[ \frac{\mathi}{\hbar} \mathcal{T}^\text{c}\left[ \mathe_\otimes^{F_+} \right] \right] \star_\hbar \exp_{\star_\hbar}\left[ \frac{\mathi}{\hbar} \mathcal{T}^\text{c}\left[ \mathe_\otimes^{F_-} \right] \right] \\
&\quad= \exp_{\star_\hbar}\left[ \frac{\mathi}{\hbar} \mathcal{T}^\text{c}\left[ \mathe_\otimes^{F_+} \right] + \frac{\mathi}{\hbar} \mathcal{T}^\text{c}\left[ \mathe_\otimes^{F_-} \right] + \sum_{k=2}^\infty \left( \frac{\mathi}{\hbar} \right)^k Z_k\left[ \mathcal{T}^\text{c}\left[ \mathe_\otimes^{F_+} \right], \mathcal{T}^\text{c}\left[ \mathe_\otimes^{F_-} \right] \right] \right] \eqend{,}
\end{splitequation}
where $Z_k$ is a $(k-1)$-fold commutator, homogeneous of degree $k$ in its arguments. We thus have
\begin{splitequation}
\mathcal{T}^\text{c}\left[ \mathe_\otimes^{F_+ + F_-} \right] = \mathcal{T}^\text{c}\left[ \mathe_\otimes^{F_+} \right] + \mathcal{T}^\text{c}\left[ \mathe_\otimes^{F_-} \right] + \sum_{k=2}^\infty \left( \frac{\mathi}{\hbar} \right)^{k-1} Z_k\left[ \mathcal{T}^\text{c}\left[ \mathe_\otimes^{F_+} \right], \mathcal{T}^\text{c}\left[ \mathe_\otimes^{F_-} \right] \right] \eqend{,}
\end{splitequation}
and since the commutator of two elements in $\overline{\mathfrak{A}}_0$ is at least of order $\hbar$, all terms in this formula are at least of order $\hbar^0$. Therefore the connected time-ordered products are of order $\hbar^0$ outside of the total diagonal, and since the extension to the total diagonal is independent of $\hbar$, this property is maintained in the extension.
\end{proof}
Note that these connected time-ordered products do not involve the choice of a state $\omega$ as in other approaches~\cite{hollands2008} and thus do not directly correspond to the connected products usually considered in quantum field theory (but they agree with the definition of~\cite{duetschfredenhagen2001} up to an overall factor); one might call them \emph{algebraic connected time-ordered products}.

One now has all the ingredients at hand to perturbatively construct the interacting theory in perturbation theory, and one defines the \emph{interacting time-ordered products} $\mathcal{T}_F$ with interaction $F$ by the generating functional
\begin{equation}
\label{sec_aqft_inttimeord_def}
\mathcal{T}_F\left[ \exp_\otimes\left( \frac{\mathi}{\hbar} G \right) \right] \equiv \mathcal{T}\left[ \exp_\otimes\left( \frac{\mathi}{\hbar} F \right) \right]^{\star_\hbar (-1)} \star_\hbar \mathcal{T}\left[ \exp_\otimes\left( \frac{\mathi}{\hbar} (F+G) \right) \right] \eqend{.}
\end{equation}
Note that since the time-ordered products are maps from $\mathcal{F}^{\otimes n}$, in particular their entries are smeared with a test function and we cannot simply set $F = S_\text{int}$. Instead, for $S_\text{int} = \int \mathcal{L} \total x$ with the Lagrange density $\mathcal{L}$, we set $L = \int g(x) \mathcal{L} \total x$ with a test function $g$, and consider $L$ as the interaction. The limit $g \to 1$ can then be realised at the algebraic level as an inductive limit~\cite{brunettifredenhagen2000}, called the \emph{algebraic adiabatic limit}, but we will not need it in the following. Again, we can show easily that the interacting time-ordered products are formal power series in $\hbar$, contrary to appearance. Rewriting the definition in terms of the connected time-ordered products and using the exponential theorem, we obtain
\begin{splitequation}
\label{sec_aqft_tfg_in_tcfg}
\mathcal{T}_F\left[ \exp_\otimes\left( \frac{\mathi}{\hbar} G \right) \right] &= \exp_{\star_\hbar}\left[ - \frac{\mathi}{\hbar} \mathcal{T}^\text{c}\left( \mathe_\otimes^F \right) \right] \star_\hbar \exp_{\star_\hbar}\left[ \frac{\mathi}{\hbar} \mathcal{T}^\text{c}\left( \mathe_\otimes^{F+G} \right) \right] \\
&= \exp_{\star_\hbar}\left[ \frac{\mathi}{\hbar} \mathcal{T}^\text{c}_F\left( \mathe_\otimes^G \right) \right]
\end{splitequation}
with
\begin{equation}
\mathcal{T}^\text{c}_F\left[ \mathe_\otimes^G \right] = - \mathcal{T}^\text{c}\left[ \mathe_\otimes^F \right] + \mathcal{T}^\text{c}\left[ \mathe_\otimes^{F+G} \right] + \sum_{k=2}^\infty \left( \frac{\mathi}{\hbar} \right)^{k-1} Z_k\left[ - \mathcal{T}^\text{c}\left[ \mathe_\otimes^F \right], \mathcal{T}^\text{c}\left[ \mathe_\otimes^{F+G} \right] \right] \eqend{.}
\end{equation}
Since $Z_k$ is a $(k-1)$-fold commutator and thus at least of order $\hbar^{k-1}$, and the connected time-ordered products are formal power series in $\hbar$ as shown previously, the right-hand side and thus $\mathcal{T}^\text{c}_F\left[ \mathe_\otimes^G \right]$ is again a formal power series in $\hbar$. Writing $G = G_1 + \cdots + G_n$ and expanding the relation~\eqref{sec_aqft_tfg_in_tcfg} in $G$, we obtain
\begin{splitequation}
&\mathcal{T}_{F,n}\left[ G_1 \otimes \cdots \otimes G_n \right] = \sum_{k=1}^n \frac{1}{k!} \left( \frac{\hbar}{\mathi} \right)^{n-k} \\
&\qquad\times \sum_{K_1 \cup \cdots \cup K_k = \{ 1, \ldots, n\}, K_i \neq \emptyset} \mathcal{T}^\text{c}_{F,\abs{K_1}}\left[ \bigotimes_{\ell \in K_1} G_\ell \right] \star_\hbar \cdots \star_\hbar \mathcal{T}^\text{c}_{F,\abs{K_k}}\left[ \bigotimes_{\ell \in K_k} G_\ell \right] \eqend{,}
\end{splitequation}
such that also the interacting time-ordered products do not contain negative powers of $\hbar$.

We furthermore define the \emph{retarded products} $\mathcal{R}$ as a special case of the interacting time-ordered products $\mathcal{T}_F$~\eqref{sec_aqft_inttimeord_def}, namely the linear term: $\mathcal{R}\left( \mathe_\otimes^F; G \right) = \mathcal{T}_{F,1}\left( G \right)$. Explicitly, we have
\begin{equation}
\label{sec_aqft_retprod_def}
\mathcal{R}\left( \mathe_\otimes^F; G \right) = \mathcal{T}\left[ \exp_\otimes\left( \frac{\mathi}{\hbar} F \right) \right]^{\star_\hbar (-1)} \star_\hbar \mathcal{T}\left[ \exp_\otimes\left( \frac{\mathi}{\hbar} F \right) \otimes G \right] \eqend{,}
\end{equation}
again understood as an equality between generating functionals, where
\begin{equation}
\mathcal{R}\left( \mathe_\otimes^F; G \right) = \sum_{n=0}^\infty \frac{1}{n!} \mathcal{R}_n\left( F^{\otimes n}; G \right) \eqend{.}
\end{equation}
Since the interacting time-ordered products are formal power series in $\hbar$, the retarded products are as well. It follows that the retarded products have a classical limit as $\hbar \to 0$, and it has been shown in~\cite{duetschfredenhagen2001,duetschfredenhagen2003} that in this limit they are equal to the classical retarded products,
\begin{equation}
\label{sec_aqft_rclass_limit}
\lim_{\hbar \to 0} \mathcal{R}\left( \mathe_\otimes^F; G \right) = R^\text{cl}\left( \mathe_\otimes^F; G \right) \eqend{,} \qquad \lim_{\hbar \to 0} \mathcal{R}_n\left( F^{\otimes n}; G \right) = R^\text{cl}_n\left( F^{\otimes n}; G \right) \eqend{,}
\end{equation}
identifying elements of $\overline{\mathfrak{A}}_0$ in this limit with classical local smeared field polynomials, i.e. elements of $\mathcal{F}$. An explicit formula for the classical retarded products is given by $R^\text{cl}_0\left( -; G \right) = G$ and the recursion relation~\cite[Eq. (42)]{duetschfredenhagen2003},
\begin{splitequation}
\label{sec_aqft_rcl_recursion}
&R^\text{cl}_{n+1}\left( F^{\otimes (n+1)}; G \right) = - \sum_{k=0}^n \frac{n!}{k! (n-k)!} \\
&\hspace{6em}\times R^\text{cl}_{n-k}\left( F^{\otimes (n-k)}; \iint \frac{\delta_\text{R} F}{\delta \phi_K(x)} \Delta^{\text{adv} (k)}_{KL}(x,y) \frac{\delta_\text{L} G}{\delta \phi_L(y)} \total x \total y \right) \eqend{,}
\end{splitequation}
with $\Delta^{\text{adv} (k)}_{KL}(x,y)$ defined by $\Delta^{\text{adv} (0)}_{KL}(x,y) = G^\text{adv}_{KL}(x,y)$ and the recursion
\begin{equation}
\Delta^{\text{adv} (k)}_{KL}(x,y) = - k \iint \Delta^{\text{adv} (k-1)}_{KM}(x,u) \frac{\delta_\text{L} \delta_\text{R} F}{\delta \phi_M(u) \delta \phi_N(v)} G^\text{adv}_{NL}(v,y) \total u \total v \eqend{.}
\end{equation}
In particular, at first order we have the expression
\begin{splitequation}
\label{sec_aqft_rcl_firstorder}
R^\text{cl}_1\left( F; G \right) &= - \iint \frac{\delta_\text{R} F}{\delta \phi_M(x)} G^\text{adv}_{MN}(x,y) \frac{\delta_\text{L} G}{\delta \phi_N(y)} \total x \total y \\
&= - \iint \frac{\delta_\text{R} G}{\delta \phi_M(x)} G^\text{ret}_{MN}(x,y) \frac{\delta_\text{L} F}{\delta \phi_N(y)} \total x \total y \eqend{,}
\end{splitequation}
and this relation is valid without any additional signs for $G$ of arbitrary Grassmann parity (while $F$ is assumed to be bosonic as stated before).

For later use, we now derive a relation that holds whenever $G$ is linear in fields, i.e. when $\delta^2 G/\delta \phi_K(x) \delta \phi_L(y) = 0$ for all $K$ and $L$. In this case, we have
\begin{splitequation}
&\iint \frac{\delta_\text{L} \delta_\text{R} F}{\delta \phi_M(u) \delta \phi_N(v)} G^\text{adv}_{NL}(v,y) \frac{\delta_\text{L} G}{\delta \phi_L(y)} \total v \total y \\
&\quad= \iint \frac{\delta_\text{L}}{\delta \phi_M(u)} \left[ \frac{\delta_\text{R} F}{\delta \phi_N(v)} G^\text{adv}_{NL}(v,y) \frac{\delta_\text{L} G}{\delta \phi_L(y)} \right] \total v \total y = - \frac{\delta_\text{L} R^\text{cl}_1\left( F; G \right)}{\delta \phi_M(u)} \eqend{,}
\end{splitequation}
and thus for $k > 0$
\begin{splitequation}
&\iint \frac{\delta_\text{R} F}{\delta \phi_K(x)} \Delta^{\text{adv} (k)}_{KL}(x,y) \frac{\delta_\text{L} G}{\delta \phi_L(y)} \total x \total y \\
&\quad= k \iint \frac{\delta_\text{R} F}{\delta \phi_K(x)} \Delta^{\text{adv} (k-1)}_{KL}(x,y) \frac{\delta_\text{L} R^\text{cl}_1\left( F; G \right)}{\delta \phi_L(y)} \total x \total y \eqend{.}
\end{splitequation}
After some rearrangements, it follows that we have the recursion relation
\begin{equation}
\label{sec_aqft_rcl_glinear_rel}
R^\text{cl}_{n+1}\left( F^{\otimes (n+1)}; G \right) = (n+1) R^\text{cl}_n\left[ F^{\otimes n}; R^\text{cl}_1\left( F; G \right) \right]
\end{equation}
whenever $G$ is linear in fields.

From the recursion relation~\eqref{sec_aqft_rcl_recursion}, it is seen that the classical retarded products have the further properties~\cite{duetschfredenhagen2003}:
\begin{itemize}
\item Linearity in the second argument:
\begin{equation}
\label{sec_aqft_rcl_linearity}
R^\text{cl}_n\left( F^{\otimes n}; \alpha G + \beta H \right) = \alpha R^\text{cl}_n\left( F^{\otimes n}; G \right) + \beta R^\text{cl}_n\left( F^{\otimes n}; H \right) \eqend{,}
\end{equation}
where $G,H \in \mathcal{F}$ and $\alpha,\beta$ are (possibly Grassmann-valued) c-numbers.
\item Field independence:
\begin{equation}
\label{sec_aqft_rcl_funcder}
\frac{\delta_\text{L}}{\delta \phi_K(x)} R^\text{cl}_n\left( F^{\otimes n}; G \right) = n R^\text{cl}_n\left[ F^{\otimes (n-1)} \otimes \frac{\delta_\text{L} F}{\delta \phi_K(x)}; G \right] + R^\text{cl}_n\left[ F^{\otimes n}; \frac{\delta_\text{L} G}{\delta \phi_K(x)} \right] \eqend{.}
\end{equation}
\item Factorisation:
\begin{equation}
\label{sec_aqft_rcl_factorisation}
R^\text{cl}_n\left( F^{\otimes n}; G H \right) = \sum_{k=0}^n \frac{n!}{k! (n-k)!} R^\text{cl}_k\left( F^{\otimes k}; G \right) R^\text{cl}_{n-k}\left( F^{\otimes (n-k)}; H \right) \eqend{.}
\end{equation}
This can be proven by a straightforward induction argument: it holds for $n = 0$ and on the right-hand side of the recursion relation~\eqref{sec_aqft_rcl_recursion} only retarded products with a lower number of terms in the first argument appear.
\item The GLZ (Glaser--Lehmann--Zimmermann) relation:
\begin{splitequation}
\label{sec_aqft_rcl_glz}
&R^\text{cl}_{n+1}\left( F^{\otimes n} \otimes G; H \right) = R^\text{cl}_{n+1}\left( F^{\otimes n} \otimes H; G \right) \\
&\qquad+ \sum_{k=0}^n \frac{n!}{(n-k)! k!} \left\{ R^\text{cl}_k\left( F^{\otimes k}; G \right), R^\text{cl}_{n-k}\left( F^{\otimes (n-k)}; H \right) \right\} \eqend{,}
\end{splitequation}
where $\{ \cdot, \cdot \}$ is the classical Poisson bracket defined by
\begin{equation}
\label{sec_aqft_poissonbracket_def}
\left\{ F, G \right\} \equiv \iint \frac{\delta_\text{R} F}{\delta \phi_M(x)} \Delta_{MN}(x,y) \frac{\delta_\text{L} G}{\delta \phi_N(y)} \total x \total y \eqend{,}
\end{equation}
or alternatively obtained from the quantum commutator according to
\begin{equation}
\label{sec_aqft_poissonbracket_limit}
\left\{ F, G \right\} = \lim_{\hbar \to 0} \frac{1}{\mathi \hbar} \left[ F, G \right]_{\star_\hbar} \eqend{,}
\end{equation}
with the same identification of elements of $\overline{\mathfrak{A}}_0$ with elements of $\mathcal{F}$ in this limit as in the relation~\eqref{sec_aqft_rclass_limit}. The GLZ relation is most easily proven by writing it in generating functional form
\begin{equation}
R^\text{cl}\left( \mathe_\otimes^F \otimes G; H \right) = R^\text{cl}\left( \mathe_\otimes^F \otimes H; G \right) + \left\{ R^\text{cl}\left( \mathe_\otimes^F; G \right), R^\text{cl}\left( \mathe_\otimes^F; H \right) \right\} \eqend{,}
\end{equation}
and obtaining the latter as the classical limit of
\begin{equation}
\mathcal{R}\left( \mathe_\otimes^F \otimes G; H \right) = \mathcal{R}\left( \mathe_\otimes^F \otimes H; G \right) - \frac{\mathi}{\hbar} \left[ \mathcal{R}\left( \mathe_\otimes^F; G \right), \mathcal{R}\left( \mathe_\otimes^F; H \right) \right]_{\star_\hbar} \eqend{,}
\end{equation}
which in turn follows from the definition of the retarded products~\eqref{sec_aqft_retprod_def} by polarisation, writing
\begin{splitequation}
\mathcal{R}\left( \mathe_\otimes^F \otimes G; H \right) &= \left. \frac{\partial}{\partial \alpha} \mathcal{R}\left( \mathe_\otimes^{F+\alpha G}; H \right) \right\rvert_{\alpha = 0} = - \frac{\mathi}{\hbar} \mathcal{R}\left( \mathe_\otimes^F; G \right) \star_\hbar \mathcal{R}\left( \mathe_\otimes^F; H \right) \\
&\quad+ \frac{\mathi}{\hbar} \mathcal{T}\left[ \exp_\otimes\left( \frac{\mathi}{\hbar} F \right) \right]^{\star_\hbar (-1)} \star_\hbar \mathcal{T}\left[ \exp_\otimes\left( \frac{\mathi}{\hbar} F \right) \otimes G \otimes H \right] \eqend{.}
\end{splitequation}
\end{itemize}

Lastly, we need the analogue of the causal factorisation condition for interacting time-ordered products~\cite{epsteinglaser1973,duetschfredenhagen2001}:
\begin{theorem}
\label{thm_tint_factor}
The interacting time-ordered products factorise according to
\begin{equation}
\label{sec_aqft_inttimeord_fact}
\mathcal{T}_L\left[ \exp_\otimes\left( \frac{\mathi}{\hbar} (F+G) \right) \right] = \mathcal{T}_L\left[ \exp_\otimes\left( \frac{\mathi}{\hbar} F \right) \right] \star_\hbar \mathcal{T}_L\left[ \exp_\otimes\left( \frac{\mathi}{\hbar} G \right) \right]
\end{equation}
whenever $F$ is supported in the future of $G$: $J^+(\supp F) \cap J^-(\supp G) = \emptyset$.
\end{theorem}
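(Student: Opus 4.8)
The plan is to strip off the definition of the interacting time-ordered products and reduce the assertion to a single algebraic identity for the ordinary time-ordered exponential, which I then establish from the causal factorisation property of the $\mathcal{T}_n$. Writing $S(A) \equiv \mathcal{T}\left[ \exp_\otimes\left( \frac{\mathi}{\hbar} A \right) \right]$, which is invertible with respect to $\star_\hbar$ with inverse $S(A)^{\star_\hbar(-1)}$ as in~\eqref{sec_aqft_inttimeord_def}, the left-hand side of~\eqref{sec_aqft_inttimeord_fact} equals $S(L)^{\star_\hbar(-1)} \star_\hbar S(L+F+G)$ while the right-hand side equals $S(L)^{\star_\hbar(-1)} \star_\hbar S(L+F) \star_\hbar S(L)^{\star_\hbar(-1)} \star_\hbar S(L+G)$. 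Cancelling the common leftmost factor, the theorem is equivalent to
\begin{equation}
S(L+F+G) = S(L+F) \star_\hbar S(L)^{\star_\hbar(-1)} \star_\hbar S(L+G) \eqend{,}
\end{equation}
the strong, three-region form of causal factorisation in which the fixed interaction $L$ straddles the future support of $F$ and the past support of $G$.

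To prove this identity I would use that $\supp F$, $\supp G$ and $\supp L$ are compact and that $F$ lies to the future of $G$, so there is a Cauchy surface $\Sigma$ with $\supp G$ strictly to its past and $\supp F$ strictly to its future. Splitting the interaction into a future and a past part $L = L_+ + L_-$, with $\supp L_+ \cap J^-(\supp G) = \emptyset$ and $\supp L_- \cap J^+(\supp F) = \emptyset$, the causal factorisation property of the $\mathcal{T}_n$ gives the four factorisations $S(L+F+G) = S(L_+ + F) \star_\hbar S(L_- + G)$, $S(L+F) = S(L_+ + F) \star_\hbar S(L_-)$, $S(L) = S(L_+) \star_\hbar S(L_-)$ and $S(L+G) = S(L_+) \star_\hbar S(L_- + G)$. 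Substituting these into the right-hand side and using $\left[ S(L_+) \star_\hbar S(L_-) \right]^{\star_\hbar(-1)} = S(L_-)^{\star_\hbar(-1)} \star_\hbar S(L_+)^{\star_\hbar(-1)}$, the two inner pairs $S(L_-) \star_\hbar S(L_-)^{\star_\hbar(-1)}$ and $S(L_+)^{\star_\hbar(-1)} \star_\hbar S(L_+)$ collapse to $\unitmatrix$, leaving $S(L_+ + F) \star_\hbar S(L_- + G) = S(L+F+G)$, which is the desired identity.

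The step that needs care --- and the one I expect to be the main obstacle --- is the decomposition $L = L_+ + L_-$ into causally separated pieces. Each of the four factorisations above requires $\supp L_+$ to lie to the future of $\supp L_-$, i.e. $J^+(\supp L_+) \cap J^-(\supp L_-) = \emptyset$; but a single smooth, compactly supported interaction cannot be written as a sum of two smooth pieces with causally separated supports, since reconstructing $L$ forces $L_+$ and $L_-$ to overlap in the causal diamond $J^+(\supp G) \cap J^-(\supp F)$, exactly where the required ordering fails. The resolution is to place the transition region of a smooth interpolation away from $\supp F \cup \supp G$ (possible because they are separated by $\Sigma$) and to pass to the sharp split at $\Sigma$, for which the strictly-future and strictly-past pieces are causally separated up to the measure-zero set $\Sigma$, while the fully cancelled result does not depend on how the split is chosen. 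Turning this into a rigorous argument --- controlling the wave front sets so that the time-ordered products stay well defined under the split --- is the genuine technical content, and is precisely where the Epstein--Glaser support-based construction of the $\mathcal{T}_n$ enters.
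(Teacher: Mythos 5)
You have the right algebraic skeleton: reducing the claim to $S(L+F+G) = S(L+F) \star_\hbar S(L)^{\star_\hbar(-1)} \star_\hbar S(L+G)$ and collapsing it via four causal factorisations is exactly the structure of the paper's proof, and you correctly identify the obstruction, namely that a smooth split $L = L_+ + L_-$ with $J^+(\supp L_+) \cap J^-(\supp L_-) = \emptyset$ forces $\supp L_+ \cap \supp L_- = \emptyset$ (any common point lies in both causal sets), which is impossible when $\supp L$ straddles the region between $\supp F$ and $\supp G$. However, your proposed resolution does not close this gap. The sharp split at $\Sigma$ produces pieces $\theta(t)\,L$ and $(1-\theta(t))\,L$ that are not elements of $\mathcal{F}$ (the cutoffs must be smooth test functions), so the causal factorisation property of the $\mathcal{T}_n$ --- the only input available --- simply does not apply to them; worse, the singular (diagonal/light-cone) behaviour of the time-ordered distributions sits precisely at the coincident points on $\Sigma \cap \supp L$ that you dismiss as measure zero, so there is no wave-front-set argument that rescues the sharp cutoff. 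The claim that ``the fully cancelled result does not depend on how the split is chosen'' is also circular: one would need each split-dependent expression to be well defined before comparing them.

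The paper closes the gap by a different mechanism: it never splits $L$ itself. Instead it expands the identity in powers of $L$ and exploits multilinearity of the $\mathcal{T}_n$. At order $n$ one decomposes $L = \sum_{k=-n-1}^{n+1} L_k$ into $2n+3$ \emph{smooth} slices via a partition of unity subordinate to a foliation of a fattened Cauchy slab $\Sigma_{[-\tau,\tau]}$ separating $\supp F$ from $\supp G$. Any single term $N_n\left( L_{i_1} \otimes \cdots \otimes L_{i_n}, F, G \right)$ involves only $n$ of these slices, so by pigeonhole some open slab is unused, and a Cauchy surface through that slab cleanly separates the slices actually present into a future group $L_+$ and a past group $L_-$ with genuinely disjoint, causally ordered supports --- all while staying inside $\mathcal{F}$. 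Your four-factorisation cancellation then applies verbatim term by term and shows each $N_n$ vanishes. In short, the missing idea is that one only ever needs a causal split of the finitely many smooth slices appearing at each order in $L$, never a split of $L$ itself; with that substitution your argument becomes the paper's proof.
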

\begin{proof}
Multiplying the relation~\eqref{sec_aqft_inttimeord_fact} from the left with $\mathcal{T}\left[ \exp_\otimes\left( \frac{\mathi}{\hbar} L \right) \right]$ and expanding in powers of $L$, at order $L^n$ we have to show that
\begin{splitequation}
&N_n\left( L^{\otimes n}, F, G \right) \equiv \mathcal{T}\left[ \exp_\otimes\left( \frac{\mathi}{\hbar} (F+G) \right) \otimes L^{\otimes n} \right] \\
&- \frac{\partial^n}{\partial \alpha^n} \left[ \mathcal{T}\left[ \exp_\otimes\left( \frac{\mathi}{\hbar} F \right) \otimes \mathe_\otimes^{\alpha L} \right] \star_\hbar \mathcal{T}\bigg[ \mathe_\otimes^{\alpha L} \bigg]^{\star_\hbar (-1)} \star_\hbar \mathcal{T}\left[ \exp_\otimes\left( \frac{\mathi}{\hbar} G \right) \otimes \mathe_\otimes^{\alpha L} \right] \right]_{\alpha = 0}
\end{splitequation}
vanishes for the given condition on the support of $F$ and $G$. Under this condition, there exists a spacelike Cauchy surface $\Sigma$ such that $\supp F$ lies to its future and $\supp G$ to its past, $J^+(\supp F) \cap J^-(\Sigma) = \emptyset = J^-(\supp G) \cap J^+(\Sigma)$, and then a smooth time function $t$ with $t^{-1}(0) = \Sigma$ and consequently a smooth foliation of spacetime $\mathbb{R} \times \Sigma$~\cite{bernalsanchez2006}. Since $J^\pm$ are closed sets, it follows that there exists some $\tau > 0$ such that $J^+(\supp F) \cap J^-(\Sigma_{[-\tau,\tau]}) = \emptyset = J^-(\supp G) \cap J^+(\Sigma_{[-\tau,\tau]})$ with $\Sigma_S \equiv \bigcup_{s \in S} t^{-1}(s)$, i.e., we can fatten the Cauchy surface and still have it separating $\supp F$ and $\supp G$. We now partition the support of $L$ into $2n+3$ slices, obtaining $L = \sum_{k=-n-1}^{n+1} L_k$, in such a way that (see figure~\ref{fig_causality})
\begin{equation}
\supp L_k \subseteq \begin{cases} J^+\left( \Sigma_{\left[ \frac{n}{n+1} \tau, \tau \right]} \right) & k = n+1 \\ \Sigma_{\left[ \frac{k-1}{n+1} \tau, \frac{k+1}{n+1} \tau \right]} & -n \leq k \leq n \\ J^-\left( \Sigma_{\left[ -\tau, -\frac{n}{n+1} \tau \right]} \right) & k = -n-1 \eqend{,} \end{cases}
\end{equation}
which could be done by choosing a suitable partition of unity to decompose the test function in $L$. Using the multilinearity of time-ordered products, we thus only have to show that $N_n\left( L_{i_1} \otimes \cdots \otimes L_{i_n}, F, G \right)$ vanishes for arbitrary $i_k \in \{ -n-1,\ldots,n+1 \}$. Because there are $2n+3$ slices but only $n$ of them enter, and due to the chosen support properties of the $L_k$, for any choice of the $i_k$ there exists an $\ell \in \{ -n-1,\ldots,n \}$ such that $\supp L_{i_k} \cap \Sigma_{\left( \frac{\ell}{n+1} \tau, \frac{\ell+1}{n+1} \tau \right)} = \emptyset$ for all $i_k$.
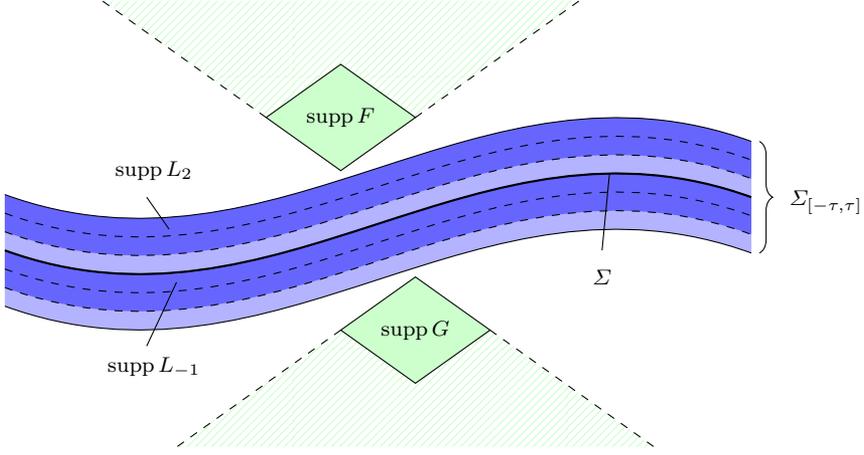
\begin{figure}[ht]
\centering
\begin{tikzpicture}
\def\figaspect{1.4}
\def\figscale{0.7}
\colorlet{mdark}{blue!60} 
\colorlet{mmed}{blue!30} 
\colorlet{mlight}{green!20} 
\colorlet{mhatch}{gray!90} 
\colorlet{mcaus}{red!40} 
\fill[pattern=north east lines, pattern color=mlight] (0.5*\figscale*\figaspect,-1.0*\figscale) -- +(3.2*\figscale*\figaspect,-3.2*\figscale) -- +(-3.2*\figscale*\figaspect,-3.2*\figscale) -- cycle;
\fill[pattern=north east lines, pattern color=mlight] (-0.5*\figscale*\figaspect,1.0*\figscale) -- +(-3.2*\figscale*\figaspect,3.2*\figscale) -- +(3.2*\figscale*\figaspect,3.2*\figscale) -- cycle;
\fill[color=mlight] (0.5*\figscale*\figaspect,2.0*\figscale) -- (-0.5*\figscale*\figaspect,3.0*\figscale) -- (-1.5*\figscale*\figaspect,2.0*\figscale) -- (-0.5*\figscale*\figaspect,1.0*\figscale) -- cycle;
\draw (0.5*\figscale*\figaspect,2.0*\figscale) -- (-0.5*\figscale*\figaspect,3.0*\figscale) -- (-1.5*\figscale*\figaspect,2.0*\figscale) -- (-0.5*\figscale*\figaspect,1.0*\figscale) -- cycle;
\fill[color=mlight] (1.5*\figscale*\figaspect,-2.0*\figscale) -- (0.5*\figscale*\figaspect,-1.0*\figscale) -- (-0.5*\figscale*\figaspect,-2.0*\figscale) -- (0.5*\figscale*\figaspect,-3.0*\figscale) -- cycle;
\draw (1.5*\figscale*\figaspect,-2.0*\figscale) -- (0.5*\figscale*\figaspect,-1.0*\figscale) -- (-0.5*\figscale*\figaspect,-2.0*\figscale) -- (0.5*\figscale*\figaspect,-3.0*\figscale) -- cycle;
\draw[dashed] (0.5*\figscale*\figaspect,2.0*\figscale) -- +(2.2*\figscale*\figaspect,2.2*\figscale);
\draw[dashed] (-1.5*\figscale*\figaspect,2.0*\figscale) -- +(-2.2*\figscale*\figaspect,2.2*\figscale);
\draw[dashed] (1.5*\figscale*\figaspect,-2.0*\figscale) -- +(2.2*\figscale*\figaspect,-2.2*\figscale);
\draw[dashed] (-0.5*\figscale*\figaspect,-2.0*\figscale) -- +(-2.2*\figscale*\figaspect,-2.2*\figscale);
\fill[color=mmed](-5.0*\figscale*\figaspect,-1.55*\figscale) to[out=-20,in=160] (5.0*\figscale*\figaspect,-0.55*\figscale) -- (5.0*\figscale*\figaspect,1.55*\figscale) to[out=160,in=-20] (-5.0*\figscale*\figaspect,0.55*\figscale) -- cycle;
\fill[color=mdark](-5.0*\figscale*\figaspect,-1.2*\figscale) to[out=-20,in=160] (5.0*\figscale*\figaspect,-0.2*\figscale) -- (5.0*\figscale*\figaspect,0.5*\figscale) to[out=160,in=-20] (-5.0*\figscale*\figaspect,-0.5*\figscale) -- cycle;
\fill[color=mdark](-5.0*\figscale*\figaspect,-0.15*\figscale) to[out=-20,in=160] (5.0*\figscale*\figaspect,0.85*\figscale) -- (5.0*\figscale*\figaspect,1.55*\figscale) to[out=160,in=-20] (-5.0*\figscale*\figaspect,0.55*\figscale) -- cycle;
\node at (-0.5*\figscale*\figaspect,2.0*\figscale) {supp\,$F$};
\node at (0.5*\figscale*\figaspect,-2.0*\figscale) {supp\,$G$};
\draw [decorate,decoration={brace,amplitude=5pt,mirror,raise=3pt},yshift=0pt]
(5.0*\figscale*\figaspect,-0.55*\figscale) -- (5.0*\figscale*\figaspect,1.55*\figscale);
\node at (6.0*\figscale*\figaspect,0.4*\figscale) {$\Sigma_{[-\tau,\tau]}$};
\node at (-3.0*\figscale*\figaspect,1.0*\figscale) {supp\,$L_2$};
\node at (-3.0*\figscale*\figaspect,-2.7*\figscale) {supp\,$L_{-1}$};
\node at (3.0*\figscale*\figaspect,-1.0*\figscale) {$\Sigma$};
\draw (-3.1*\figscale*\figaspect,0.5*\figscale) -- (-2.8*\figscale*\figaspect,-0.1*\figscale);
\draw (-3.1*\figscale*\figaspect,-2.3*\figscale) -- (-2.7*\figscale*\figaspect,-1.1*\figscale);
\draw (3.0*\figscale*\figaspect,-0.5*\figscale) -- (3.1*\figscale*\figaspect,0.95*\figscale);
\draw[line width=0.8pt] (-5.0*\figscale*\figaspect,-0.5*\figscale) to[out=-20,in=160] (5.0*\figscale*\figaspect,0.5*\figscale);
\draw (-5.0*\figscale*\figaspect,-1.55*\figscale) to[out=-20,in=160] (5.0*\figscale*\figaspect,-0.55*\figscale);
\draw (-5.0*\figscale*\figaspect,0.55*\figscale) to[out=-20,in=160] (5.0*\figscale*\figaspect,1.55*\figscale);
\draw[dashed] (-5.0*\figscale*\figaspect,-1.2*\figscale) to[out=-20,in=160] (5.0*\figscale*\figaspect,-0.2*\figscale);
\draw[dashed] (-5.0*\figscale*\figaspect,-0.85*\figscale) to[out=-20,in=160] (5.0*\figscale*\figaspect,0.15*\figscale);
\draw[dashed] (-5.0*\figscale*\figaspect,-0.5*\figscale) to[out=-20,in=160] (5.0*\figscale*\figaspect,0.5*\figscale);
\draw[dashed] (-5.0*\figscale*\figaspect,-0.15*\figscale) to[out=-20,in=160] (5.0*\figscale*\figaspect,0.85*\figscale);
\draw[dashed] (-5.0*\figscale*\figaspect,0.2*\figscale) to[out=-20,in=160] (5.0*\figscale*\figaspect,1.2*\figscale);
\end{tikzpicture}
\caption{Supports of the functionals involved in the proof of the causal factorisation condition for interacting time-ordered products for $n = 2$, $i_1 = 2$ and $i_2 = -1$. Hatched areas indicate the causal past and future. In this example, the separating Cauchy surface $\Sigma_{\frac{\ell+1/2}{n+1} \tau}$ could be taken to lie between $L_2$ and $L_{-1}$, taking $\ell = 0$, or between $L_{-1}$ and $G$, taking $\ell = -3$.}
\label{fig_causality}
\end{figure}
We thus obtain a separating Cauchy surface $\Sigma_{\frac{\ell+1/2}{n+1} \tau}$ and set
\begin{equation}
L_\pm \equiv \sum_{k\colon \supp L_{i_k} \subseteq J^\pm\left( \Sigma_{\frac{\ell+1/2}{n+1} \tau} \right)} \alpha_k L_{i_k}
\end{equation}
with constants $\alpha_k$, and $m \equiv \abs{\{k\colon \supp L_{i_k} \subseteq J^+\left( \Sigma_{\frac{\ell+1/2}{n+1} \tau} \right) \}}$. We can then recover $N_n\left( L_{i_1} \otimes \cdots \otimes L_{i_n}, F, G \right)$ from
\begin{splitequation}
&N_n\left( L_+^{\otimes m} \otimes L_-^{\otimes (n-m)}, F, G \right) = \mathcal{T}\left[ \exp_\otimes\left( \frac{\mathi}{\hbar} (F+G) \right) \otimes L_+^{\otimes m} \otimes L_-^{\otimes (n-m)} \right] \\
&\quad- \frac{\partial^m}{\partial \alpha^m} \frac{\partial^{n-m}}{\partial \beta^{n-m}} \Bigg[ \mathcal{T}\left[ \exp_\otimes\left( \frac{\mathi}{\hbar} F \right) \otimes \mathe_\otimes^{\alpha L_+ + \beta L_-} \right] \star_\hbar \mathcal{T}\bigg[ \mathe_\otimes^{\alpha L_+ + \beta L_-} \bigg]^{\star_\hbar (-1)} \\
&\hspace{8em}\star_\hbar \mathcal{T}\left[ \exp_\otimes\left( \frac{\mathi}{\hbar} G \right) \otimes \mathe_\otimes^{\alpha L_+ + \beta L_-} \right] \Bigg]_{\alpha = \beta = 0}
\end{splitequation}
by taking derivatives with respect to the $\alpha_k$. Since by construction we have $J^+(\supp L_+) \cap J^-(\supp G) = J^+(\supp F) \cap J^-(\supp L_-) = J^+(\supp L_+) \cap J^-(\supp L_-) = \emptyset$, we can use causal factorisation to obtain:
\begin{equations}
\mathcal{T}\left[ \exp_\otimes\left( \frac{\mathi}{\hbar} F \right) \otimes \mathe_\otimes^{\alpha L_+ + \beta L_-} \right] &= \mathcal{T}\left[ \exp_\otimes\left( \frac{\mathi}{\hbar} F \right) \otimes \mathe_\otimes^{\alpha L_+} \right] \star_\hbar \mathcal{T}\left[ \mathe_\otimes^{\beta L_-} \right] \eqend{,} \\
\mathcal{T}\left[ \mathe_\otimes^{\alpha L_+ + \beta L_-} \right]^{\star_\hbar (-1)} &= \mathcal{T}\left[ \mathe_\otimes^{\beta L_-} \right]^{\star_\hbar (-1)} \star_\hbar \mathcal{T}\left[ \mathe_\otimes^{\alpha L_+} \right]^{\star_\hbar (-1)} \eqend{,} \\
\mathcal{T}\left[ \exp_\otimes\left( \frac{\mathi}{\hbar} G \right) \otimes \mathe_\otimes^{\alpha L_+ + \beta L_-} \right] &= \mathcal{T}\left[ \mathe_\otimes^{\alpha L_+} \right] \star_\hbar \mathcal{T}\left[ \exp_\otimes\left( \frac{\mathi}{\hbar} G \right) \otimes \mathe_\otimes^{\beta L_-} \right] \eqend{.}
\end{equations}
It follows that
\begin{splitequation}
&N_n\left( L_+^{\otimes m} \otimes L_-^{\otimes (n-m)}, F, G \right) = \mathcal{T}\left[ \exp_\otimes\left( \frac{\mathi}{\hbar} (F+G) \right) \otimes L_+^{\otimes m} \otimes L_-^{\otimes (n-m)} \right] \\
&\quad- \frac{\partial^m}{\partial \alpha^m} \frac{\partial^{n-m}}{\partial \beta^{n-m}} \Bigg[ \mathcal{T}\left[ \exp_\otimes\left( \frac{\mathi}{\hbar} F \right) \otimes \mathe_\otimes^{\alpha L_+} \right] \star_\hbar \mathcal{T}\left[ \exp_\otimes\left( \frac{\mathi}{\hbar} G \right) \otimes \mathe_\otimes^{\beta L_-} \right] \Bigg]_{\alpha = \beta = 0} \eqend{,}
\end{splitequation}
and since by causal factorisation the term in brackets in the last line is equal to $\mathcal{T}\left[ \exp_\otimes\left( \frac{\mathi}{\hbar} (F+G) \right) \otimes \mathe_\otimes^{\alpha L_+ + \beta L_-} \right]$, we obtain $N_n\left( L_+^{\otimes m} \otimes L_-^{\otimes (n-m)}, F, G \right) = 0$ and thus $N_n\left( L^{\otimes n}, F, G \right) = 0$.
\end{proof}
Multiplying equation~\eqref{sec_aqft_inttimeord_fact} with $\mathcal{T}\left[ \exp_\otimes\left( \frac{\mathi}{\hbar} (L+F) \right) \right]^{\star_\hbar (-1)} \star_\hbar \mathcal{T}\left[ \exp_\otimes\left( \frac{\mathi}{\hbar} L \right) \right]$ from the left, we further obtain
\begin{equation}
\mathcal{T}_{L+F}\left[ \exp_\otimes\left( \frac{\mathi}{\hbar} G \right) \right] = \mathcal{T}_L\left[ \exp_\otimes\left( \frac{\mathi}{\hbar} G \right) \right] \eqend{,}
\end{equation}
which at first order in $G$ reads
\begin{equation}
\mathcal{R}\left( \mathe_\otimes^{L+F}; G \right) = \mathcal{R}\left( \mathe_\otimes^L; G \right) \eqend{.}
\end{equation}
That is, retarded products are not influenced by interactions in the future, which justifies the name ``retarded''.

\section{Anomalous Ward identities}
\label{sec_anomward}

By Noether's theorem, classical symmetries of the Lagrangian give rise to a conserved charge, and symmetry transformations on the classical phase space are implemented by the Poisson bracket with the corresponding charge. Since classical (interacting) observables factorise~\eqref{sec_aqft_rcl_factorisation}, a product of two observables invariant under a symmetry transformation is again invariant. In the quantum theory, the symmetry transformations are obtained by the graded commutator with the operator corresponding to the classical charge, but factorisation generally no longer holds. One says that the symmetry is preserved in the quantum theory if it is possible to determine a derivation on the quantum algebra $\overline{\mathfrak{A}}_0$ that reduces in the classical limit to the Poisson bracket with the classical charge. This could be done, for example, by determining quantum corrections to the operator corresponding to the classical charge. The relations between time-ordered products that hold in this case are known as Ward--Takahashi identities~\cite{ward1950,rohrlich1950,takahashi1957}; in general however there might be anomalous terms in these identities due to the lack of factorisation, or due to the impossibility to find a suitable derivation. We first consider the case of a general derivation, and afterwards consider derivations obtained as graded commutators.
\begin{theorem}
\label{thm_anomward}
Given a derivation $D$ acting on the free-field algebra $\overline{\mathfrak{A}}_0$ in a locally and covariant way and preserving spacetime locality, there exist multilinear maps
\begin{equation}
\mathcal{D}_n\colon \mathcal{F}^{\otimes n} \to \mathcal{F} \eqend{,} \qquad \mathcal{A}_n\colon \mathcal{F}^{\otimes n} \to \mathcal{F} \eqend{,} \qquad \mathcal{D}_0 = \mathcal{A}_0 = 0
\end{equation}
such that
\begin{equation}
\label{thm_anomward_identity}
D \mathcal{T}\left[ \exp_\otimes\left( \frac{\mathi}{\hbar} F \right) \right] = \frac{\mathi}{\hbar} \mathcal{T}\left[ \left[ \mathcal{D}\left( \mathe_\otimes^F \right) + \mathcal{A}\left( \mathe_\otimes^F \right) \right] \otimes \exp_\otimes\left( \frac{\mathi}{\hbar} F \right) \right]
\end{equation}
holds as an equality between generating functionals, with
\begin{equation}
\mathcal{D}\left( \mathe_\otimes^F \right) = \sum_{n=0}^\infty \frac{1}{n!} \mathcal{D}_n\left( F^{\otimes n} \right) \eqend{,} \qquad \mathcal{A}\left( \mathe_\otimes^F \right) = \sum_{n=0}^\infty \frac{1}{n!} \mathcal{A}_n\left( F^{\otimes n} \right) \eqend{,}
\end{equation}
and the following properties (stated for $\mathcal{D}_n$, but fulfilled also by $\mathcal{A}_n$):
\begin{enumerate}
\item \textbf{Locality and covariance:} Given an isometric embedding $\psi\colon M \to M'$ that preserves the causal structure, $\psi_*$ the corresponding push-forward map, $\alpha_\psi\colon \overline{\mathfrak{A}}_0(M,g) \to \overline{\mathfrak{A}}_0(M',g')$ the induced isomorphism between the free-field algebras, $D'$ the derivation acting on $\overline{\mathfrak{A}}_0(M',g')$ and $\mathcal{D}'_n$ the corresponding multilinear maps, we have $\psi^* \circ \mathcal{D}'_n\left[ F^{\otimes n} \right] = \mathcal{D}_n\left[ \left( \psi^* F \right)^{\otimes n} \right]$.
\item \textbf{Graded symmetry:} For elements $F,G \in \mathcal{F}$ with definite Grassmann parity, we have $\mathcal{D}_n\left[ \cdots \otimes F \otimes G \otimes \cdots \right] = (-1)^{\epsilon_F \epsilon_G} \mathcal{D}_n\left[ \cdots \otimes G \otimes F \otimes \cdots \right]$.
\item \textbf{Neutral element:} $\mathcal{D}_n\left[ F^{\otimes (n-1)} \otimes 1 \right] = \mathcal{D}_{n-1}\left[ F^{\otimes (n-1)} \right]$.
\item \textbf{Quantum number preservation:} For any grading $G = \mathbb{Z}/k \mathbb{Z}$ on the free-field algebra $\overline{\mathfrak{A}}_0 = \bigoplus_{g \in G} \overline{\mathfrak{A}}_0^g$ which is preserved by $D$ in the sense that $D\colon \overline{\mathfrak{A}}_0^g \to \overline{\mathfrak{A}}_0^{g+d}$ for a fixed $d = g(D)$, $\mathcal{D}_n$ preserves the corresponding classical grading on $\mathcal{F} = \bigoplus_{g \in G} \mathcal{F}^g$: \begin{equation*} g\left[ \mathcal{D}_n\left( F_1 \otimes \cdots \otimes F_n \right) \right] = d + \sum_{i=1}^n g(F_i) \eqend{,}
\end{equation*}
whenever the $F_i$ are homogeneous elements (with a definite grading).
\item \textbf{Support on the total diagonal:} If there exist $i$, $j$ such that $\supp F_i \cap \supp F_j = \emptyset$, we have $\mathcal{D}_n\left( F_1 \otimes \cdots \otimes F_n \right) = 0$.
\item \textbf{Order in $\hbar$:} If $F = \bigo{\hbar^0}$, then $\mathcal{D}_n\left( F^{\otimes n} \right) = \bigo{\hbar^0}$ and $\mathcal{A}_n\left( F^{\otimes n} \right) = \bigo{\hbar}$.
\end{enumerate}
\end{theorem}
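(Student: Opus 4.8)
The plan is to build the maps $\mathcal{D}_n$ and $\mathcal{A}_n$ by induction on $n$, reading off the generating-functional identity~\eqref{thm_anomward_identity} order by order in $F$. Writing $\mathcal{B}_n := \mathcal{D}_n + \mathcal{A}_n$ for the combined insertion and expanding both sides in powers of $F$, the only term on the right-hand side carrying a single time-ordered factor is $\mathcal{T}_1[\mathcal{B}_n(F^{\otimes n})] = \normord{\mathcal{B}_n(F^{\otimes n})}_H$, while every remaining term involves only $\mathcal{B}_m$ with $m < n$. This isolates $\mathcal{B}_n$ and turns~\eqref{thm_anomward_identity} into the recursion
\[
\normord{\mathcal{B}_n\left(F^{\otimes n}\right)}_H = \left(\frac{\mathi}{\hbar}\right)^{n-1} D\,\mathcal{T}_n\left(F^{\otimes n}\right) - \sum_{m=1}^{n-1} \binom{n}{m} \left(\frac{\mathi}{\hbar}\right)^{n-m} \mathcal{T}_{n-m+1}\left[\mathcal{B}_m\left(F^{\otimes m}\right) \otimes F^{\otimes(n-m)}\right] \eqend{.}
\]
The base case $n=1$ is immediate: since $D$ is a local and covariant derivation preserving spacetime locality, $D\,\mathcal{T}_1(F) = D\normord{F}_H$ is again a local and covariant element supported at the same point as $F$, hence a sum of locally covariant Wick monomials by~\eqref{sec_aqft_local_sum_wick}, which defines $\mathcal{B}_1(F)\in\mathcal{F}$ through $\normord{\mathcal{B}_1(F)}_H = D\normord{F}_H$.

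For the inductive step I assume $\mathcal{B}_m$ constructed, with all asserted properties, for $m < n$, and denote by $W_n(F^{\otimes n})$ the entire right-hand side of the recursion. Because $\mathcal{T}_1 = \normord{\,\cdot\,}_H$ is injective onto the locally covariant Wick polynomials, the content of the theorem at order $n$ is exactly that $W_n$ lies in its image, i.e. that $W_n = \normord{X}_H$ for some local field $X =: \mathcal{B}_n(F^{\otimes n})$. Locality and covariance of $W_n$ are inherited from the corresponding properties of $D$, of the time-ordered products, and of the $\mathcal{B}_m$ entering the recursion, so the single nontrivial point is to show that $W_n$ is supported on the total diagonal.

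This is the crux of the argument. Let $F = F_+ + F_-$ with $\supp F_+$ to the future of $\supp F_-$, so that $\mathcal{T}[\exp_\otimes(\tfrac{\mathi}{\hbar}F)]$ factorises through $\star_\hbar$ by causal factorisation. Applying the derivation $D$ and using the graded Leibniz rule for $\star_\hbar$ together with the inductively assumed identities for $D\,\mathcal{T}$ acting on the two causally separated factors, one finds that the factorised expression reproduces precisely those insertion terms of~\eqref{thm_anomward_identity} at order $n$ in which the $\mathcal{B}_m$-insertion sits entirely within one of the two groups; the only contribution left unaccounted for is the top insertion $\normord{\mathcal{B}_n}_H$, whose support on the total diagonal makes it incompatible with any splitting across the causal gap. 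Consistency therefore forces $W_n$ to vanish whenever the arguments split into two causally separable nonempty groups. Decomposing the test functions of the $F_i$ by a partition of unity into pieces small enough that any two pieces at distinct points are causally orderable (spacelike pairs factorising in both orders), multilinearity upgrades this to the vanishing of $W_n$ whenever two of the supports are disjoint, which is the total-diagonal support property. Combined with locality and covariance this yields $W_n = \normord{\mathcal{B}_n(F^{\otimes n})}_H$ with $\mathcal{B}_n$ local, and the graded symmetry, neutral-element and quantum-number-preservation properties then follow directly from those of $D$, $\mathcal{T}$ and the lower $\mathcal{B}_m$.

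It remains to control the order in $\hbar$ and to separate $\mathcal{B}_n$ into $\mathcal{D}_n = \bigo{\hbar^0}$ and $\mathcal{A}_n = \bigo{\hbar}$. Since the recursion as written contains explicit inverse powers $(\mathi/\hbar)^{n-m}$, I would rerun the construction with the time-ordered products replaced by the connected ones, which are genuine formal power series in $\hbar$ by Theorem~\ref{thm_tc_hbar}; the compensating negative powers then originate only from the disconnected (star-product) pieces and cancel, so that $\mathcal{B}_n(F^{\otimes n})$ is itself a formal power series in $\hbar$ without negative powers whenever $F = \bigo{\hbar^0}$. Defining $\mathcal{D}_n$ as the $\hbar^0$-coefficient of $\mathcal{B}_n$ and $\mathcal{A}_n := \mathcal{B}_n - \mathcal{D}_n$ gives $\mathcal{D}_n = \bigo{\hbar^0}$ and $\mathcal{A}_n = \bigo{\hbar}$ by construction, the identification of $\mathcal{D}_n$ with the classical derivation structure being deferred to the explicit classical-limit computation of Theorem~\ref{thm_classanom}. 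The main obstacle throughout is the total-diagonal support of $W_n$ in the inductive step, i.e. showing that the deviation of $D\,\mathcal{T}_n$ from the naive graded Leibniz action is a genuinely local field; the $\hbar$-bookkeeping via connected products is the second, more technical, point.
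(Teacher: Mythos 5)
Your overall strategy is the paper's own: define $\hat{\mathcal{D}}_n = \mathcal{D}_n + \mathcal{A}_n$ recursively by isolating the single $\mathcal{T}_1$-insertion in the order-$n$ expansion of~\eqref{thm_anomward_identity}, identify the remainder with a sum of locally covariant Wick monomials via~\eqref{sec_aqft_local_sum_wick}, prove total-diagonal support by induction using causal factorisation and the Leibniz rule for $D$, and control the $\hbar$-order by rewriting the identity in terms of connected time-ordered products (Theorem~\ref{thm_tc_hbar}), with $\mathcal{D}_n$ split off as the $\hbar^0$ part. All of that is sound and matches the paper step for step.

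The gap sits exactly where you locate the crux: the upgrade from "the arguments split into two causally orderable groups" to "two of the supports are disjoint". Your partition of unity into "pieces small enough that any two pieces at distinct points are causally orderable" does not deliver this, because the obstruction comes from the spectator factors $F_k$ with $k \neq i,j$: their small pieces may overlap one another and form a chain joining a neighbourhood of $\supp F_i$ to a neighbourhood of $\supp F_j$. Two overlapping pieces can never be placed on opposite sides of a splitting (each meets the causal future and past of the other), so for such a term of the multilinear expansion there exists \emph{no} decomposition of the $n$ arguments into two causally ordered groups, and your consistency argument is never triggered — pairwise orderability of disjoint pieces does not imply the existence of a global two-group splitting. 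The paper closes precisely this hole with the device from the proof of Theorem~\ref{thm_tint_factor}: after separating pieces of $F_i$ and $F_j$ by a spacelike Cauchy surface, one fattens it to $\Sigma_{[-\tau,\tau]}$ and decomposes each spectator $F_k$ into $2n+3$ slices adapted to a temporal foliation; since each multilinear term contains only one slice per spectator and there are more slices than spectators, some slice is unoccupied, a separating Cauchy surface can be passed through that gap, and every remaining piece lies entirely to its future or its past. Without this (or an equivalent foliation-plus-counting argument) the induction establishing total-diagonal support of $N_n$ — and with it the very definition of $\hat{\mathcal{D}}_n$, since one needs $N_n$ to be expressible as $\mathcal{T}_1$ of a local functional — does not go through.
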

\begin{remark*}
Acting in a locally and covariant way means that $D A$ must transform in the same way as $A \in \overline{\mathfrak{A}}_0$ under causality-preserving embeddings, i.e., we must have $\alpha_\psi \circ D = D' \circ \alpha_\psi$. Preserving spacetime locality means that $\supp (D F) \subseteq \supp F$. Property $4$ applies, for example, in the BV formalism with the grading being the ghost number. One can decompose the (free) BRST differential into parts with definite ghost number, and obtains a anomalous Ward identity for each part. Property $6$ is the only one that distinguishes $\mathcal{D}_n$ and $\mathcal{A}_n$, and in fact one can define $\mathcal{D}_n\left( F^{\otimes n} \right) = \lim_{\hbar \to 0} \left[ \mathcal{D}_n\left( F^{\otimes n} \right) + \mathcal{A}_n\left( F^{\otimes n} \right) \right]$ when $F$ is independent of $\hbar$. Analogously to the case of time-ordered products, from $\mathcal{D}_n\left( F^{\otimes n} \right)$ with bosonic $F$ using the multilinearity one can recover the general expression for different functionals of arbitrary Grassmann parity by polarisation. $\mathcal{A}$ is called the anomaly, and $\mathcal{D}$ the classical part of the anomalous Ward identity.
\end{remark*}
\begin{proof}
For ease of notation, we set $\hat{\mathcal{D}}_n \equiv \mathcal{D}_n + \mathcal{A}_n$, and define $\mathcal{D}_n$ as in the remark, such that property $6$ holds if we can show that $\hat{\mathcal{D}}_n\left( F^{\otimes n} \right) = \bigo{\hbar^0}$ if $F = \bigo{\hbar^0}$. However, this follows by using the connected time-ordered products~\eqref{sec_aqft_tc_def}, in terms of which the anomalous Ward identity~\eqref{thm_anomward_identity} reads
\begin{equation}
\label{proof_anomward_identity_connected}
D \mathcal{T}^\text{c}\left[ \mathe_\otimes^F \right] = \mathcal{T}^\text{c}\left[ \hat{\mathcal{D}}\left( \mathe_\otimes^F \right) \otimes \mathe_\otimes^F \right] \eqend{.}
\end{equation}
While this relation may seem surprising given the complicated relation~\eqref{sec_aqft_tc_def} between the ordinary and connected time-ordered products, it is easy to prove: We take $F \to F + \alpha G$ in the relation~\eqref{sec_aqft_tc_def}, expand to first order in $\alpha$ using the power series expansion of the $\star_\hbar$-exponential and set $G = \hat{\mathcal{D}}\left( \mathe_\otimes^F \right)$. On the other hand, we apply $D$ on the relation~\eqref{sec_aqft_tc_def} and use that $D$ is a derivation; comparing the two expressions order by order in $F$ yields equation~\eqref{proof_anomward_identity_connected}. Since we have shown that the connected time-ordered products do not contain negative factors of $\hbar$, it follows that $\hat{\mathcal{D}}_n\left( F^{\otimes n} \right) = \bigo{\hbar^0}$, except if $\hat{\mathcal{D}}\left( \mathe_\otimes^F \right) \otimes \mathe_\otimes^F$ would lie in the kernel of the connected time-ordered products. From the explicit relation between the time-ordered products and the connected ones~\eqref{sec_aqft_tc_expanded} it follows that the connected time-ordered products $\mathcal{T}^\text{c}_n\left( F_1 \otimes \cdots \otimes F_n \right)$ vanish whenever the time-ordered product of at least two of the $F_i$ factorises, which by the causal factorisation property happens when the support of those $F_i$ can be separated by a Cauchy surface. However, since $\hat{\mathcal{D}}_n\left( F^{\otimes n} \right)$ is supported on the total diagonal as will be shown below, this does not happen, and it follows that $\hat{\mathcal{D}}\left( \mathe_\otimes^F \right) \otimes \mathe_\otimes^F$ does not lie in the kernel of the connected time-ordered products, and is thus of order $\bigo{\hbar^0}$. Multilinearity of the maps $\hat{\mathcal{D}}_n$ holds because it holds for the time-ordered products and $D$ (by definition of a derivation). The graded symmetry property $2$ holds because it holds for the time-ordered products, and taking both together we only need to prove the identity for bosonic $F_k$. Expanding the anomalous Ward identity~\eqref{thm_anomward_identity} in powers of $F$, we obtain
\begin{equation}
\label{proof_anomward_identity}
D \mathcal{T}_n\left( F^{\otimes n} \right) = \sum_{k=1}^n \frac{n!}{k! (n-k)!} \left( \frac{\hbar}{\mathi} \right)^{k-1} \mathcal{T}_{n-k+1}\left[ \hat{\mathcal{D}}_k\left( F^{\otimes k} \right) \otimes F^{\otimes (n-k)} \right] \eqend{,}
\end{equation}
and the remaining properties are then proven by induction in $n$. Let us set
\begin{splitequation}
\label{proof_anomward_nndef}
N_n\left( F^{\otimes n} \right) &\equiv D \mathcal{T}_n\left( F^{\otimes n} \right) \\
&\quad- \sum_{k=1}^{n-1} \frac{n!}{k! (n-k)!} \left( \frac{\hbar}{\mathi} \right)^{k-1} \mathcal{T}_{n-k+1}\left[ \hat{\mathcal{D}}_k\left( F^{\otimes k} \right) \otimes F^{\otimes (n-k)} \right] \eqend{,}
\end{splitequation}
the difference between the left- and right-hand sides of the anomalous Ward identity without the last term where $k = n$. The anomalous Ward identity~\eqref{proof_anomward_identity} then reads
\begin{equation}
\label{proof_anomward_dndef}
\mathcal{T}_1\left[ \hat{\mathcal{D}}_n\left( F^{\otimes n} \right) \right] = \left( \frac{\mathi}{\hbar} \right)^{n-1} N_n\left( F^{\otimes n} \right) \eqend{,}
\end{equation}
and we show below that $N_n$ is supported on the total diagonal. By the expansion~\eqref{sec_aqft_local_sum_wick}, any such element of $\overline{\mathfrak{A}}_0$ can be written as a sum of locally covariant Wick products, therefore as a sum of time-ordered products with one factor, and we simply define $\hat{\mathcal{D}}_n$ by equation~\eqref{proof_anomward_dndef}. The properties $1$ and $4$ are then fulfilled because the time-ordered products fulfil them, and property $3$ holds using the neutral-element property of time-ordered products and $D \unitmatrix = 0$ (since $D$ is a derivation).

To show that $N_n$ and thus $\hat{\mathcal{D}}_n$ is supported on the total diagonal, property $5$, we proceed by induction in $n$. For $n = 1$, it is automatically fulfilled; assume thus that it holds for all $k < n$ and assume that there are $i$ and $j$ such that $\supp F_i \cap \supp F_j = \emptyset$. For each pair of points $x_i \in \supp F_i$, $x_j \in \supp F_j$ there exists then a spacelike Cauchy surface separating them, and by continuity and compactness of the $\supp F_k$ this will still be true for $\epsilon$-neighbourhoods $U_\epsilon(x_i)$, $U_\epsilon(x_j)$ for some $\epsilon > 0$ uniformly over all points $x_i$, $x_j$. Using a subordinate partition of unity, we can thus obtain a finite decomposition of $F_i$ and $F_j$ into $\epsilon$-neighbourhoods of which each pair can be separated by a spacelike Cauchy surface (i.e., we decompose the test functions appearing in $F_i$ and $F_j$), and by the multilinearity of $N_n$ it is enough to show that $N_n$ vanishes whenever $\supp F_i$ and $\supp F_j$ can be separated by a spacelike Cauchy surface. We then proceed as in the proof of Theorem~\ref{thm_tint_factor}, fattening this Cauchy surface and partitioning the support of the other $F_k = \sum_{m=-n-1}^{n+1} F_k^m$ with $i \neq k \neq j$ accordingly, and then using multilinearity it is sufficient to show the vanishing of $N_n\left( F_i \otimes F_j \otimes \bigotimes_{i \neq k \neq j} F_k^{m_k} \right)$ for arbitrary $m_k \in \{ -n-1, \ldots, n+1 \}$. It follows again that for any choice of the $m_k$ there exists a separating Cauchy surface $\Sigma$, and setting $F_+ = \alpha_i F_i + \sum_{k\colon \supp F_k^{m_k} \subseteq J^+(\Sigma)} \alpha_k F_k^{m_k}$ and $F_- = \alpha_j F_j + \sum_{k\colon \supp F_k^{m_k} \subseteq J^-(\Sigma)} \alpha_k F_k^{m_k}$ (where we assumed w.l.o.g. that $F_i$ lies to the future of $\Sigma$ and $F_j$ to the past) we can recover $N_n\left( F_i \otimes F_j \otimes \bigotimes_{i \neq k \neq j} F_k^{m_k} \right)$ from $N_n\left( F_+^{\otimes \ell} \otimes F_-^{\otimes (n-\ell)} \right)$ (for some $1 \leq \ell \leq n-1$) by taking derivatives w.r.t. the $\alpha_k$. Using the causal factorisation of the time-ordered products, the fact that $D$ is a derivation, that
\begin{equation}
\hat{\mathcal{D}}_k\left( F_+^{\otimes \ell} \otimes F_-^{\otimes (k-\ell)} \right) = 0 \qquad (\ell \geq 1)
\end{equation}
holds for all $k < n$ by assumption from property $5$ because $\supp F_+ \cap \supp F_- = \emptyset$, and that $\supp \hat{\mathcal{D}}_k\left( F^{\otimes k} \right) \subset \supp F$ by property $5$, we obtain
\begin{splitequation}
&N_n\left( F_+^{\otimes \ell} \otimes F_-^{\otimes (n-\ell)} \right) = \bigg[ D \mathcal{T}_\ell\left( F_+^{\otimes \ell} \right) - \sum_{k=1}^\ell \frac{\ell!}{k! (\ell-k)!} \left( \frac{\hbar}{\mathi} \right)^{k-1} \\
&\hspace{12em}\times \mathcal{T}_{\ell-k+1}\left[ \hat{\mathcal{D}}_k\left( F_+^{\otimes k} \right) \otimes F_+^{\otimes (\ell-k)} \right] \bigg] \star_\hbar \mathcal{T}_{n-\ell}\left( F_-^{\otimes (n-\ell)} \right) \\
&\qquad+ \mathcal{T}_\ell\left( F_+^{\otimes \ell} \right) \star_\hbar \bigg[ D \mathcal{T}_{n-\ell}\left( F_-^{\otimes (n-\ell)} \right) - \sum_{k=1}^{n-\ell} \frac{(n-\ell)!}{k! (n-\ell-k)!} \left( \frac{\hbar}{\mathi} \right)^{k-1} \\
&\hspace{12em}\times \mathcal{T}_{n-\ell-k+1}\left[ \hat{\mathcal{D}}_k\left( F_-^{\otimes k} \right) \otimes F_-^{\otimes (n-\ell-k)} \bigg] \right] \eqend{.}
\raisetag{2em}
\end{splitequation}
The terms in brackets are the anomalous Ward identities for $\ell$ and $n-\ell$ factors, which hold by assumption, such that $N_n\left( F_+^{\otimes \ell} \otimes F_-^{\otimes (n-\ell)} \right) = 0$. It follows from equation~\eqref{proof_anomward_dndef} that $\hat{\mathcal{D}}_n\left( F_+^{\otimes \ell} \otimes F_-^{\otimes (n-\ell)} \right) = 0$, and property $5$ holds. It is only at this point that the fact that $D$ is a derivation enters; for all other properties only its linearity is necessary.
\end{proof}

\begin{theorem}
\label{thm_classanom}
If $D$ is an inner derivation, whose action is given by the graded commutator with an element of the free-field algebra $Q \in \overline{\mathfrak{A}}_0$ (of arbitrary Grassmann parity), the following holds:
\begin{enumerate}
\item Identifying $Q_\text{cl} = \lim_{\hbar \to 0} Q$ with an element of $\mathcal{F}$, we have
\begin{equation*} \mathcal{D}_1\left( F \right) = \{ Q_\text{cl}, F \} = \iint \frac{\delta_\text{R} Q_\text{cl}}{\delta \phi_M(x)} \Delta_{MN}(x,y) \frac{\delta_\text{L} F}{\delta \phi_N(y)} \total x \total y \eqend{.} \end{equation*}
\item At second order, we have
\begin{splitequation*} \mathcal{D}_2\left( F \otimes F \right) &= \left\{ Q_\text{cl}, R^\text{cl}_1\left( F; F \right) \right\} - R^\text{cl}_1\left( F; \{ Q_\text{cl}, F \} \right) - R^\text{cl}_1\left( \{ Q_\text{cl}, F \}; F \right) \\ &= \iint \frac{\delta_\text{R} F}{\delta \phi_K(x)} \left[ G^\text{ret}_{KL}(x,y) + G^\text{adv}_{KL}(x,y) \right] \left\{ \frac{\delta_\text{L} Q_\text{cl}}{\delta \phi_L(y)}, F \right\} \total x \total y \eqend{.} \end{splitequation*}
\item $\mathcal{D}_k\left( F^{\otimes k} \right) = 0$ for all $k \geq 3$ if $Q_\text{cl}$ is at most of second order in fields, i.e. if $\delta^3 Q_\text{cl} / [ \delta \phi_K(x) \delta \phi_L(y) \delta \phi_M(z) ] = 0$ for all $K,L,M$.
\end{enumerate}
\end{theorem}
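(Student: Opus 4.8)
The plan is to specialise the anomalous Ward identity~\eqref{thm_anomward_identity} to the inner derivation $D A = \frac{1}{\mathi\hbar} [Q, A]_{\star_\hbar}$ (the normalisation for which $D$ reduces to $\{Q_\text{cl}, \cdot\}$ in the classical limit, by~\eqref{sec_aqft_poissonbracket_limit}), and to recast it in a form whose classical limit can be read off order by order. Writing $\mathcal{S} \equiv \mathcal{T}\left[ \exp_\otimes\left( \frac{\mathi}{\hbar} F \right) \right]$ and multiplying~\eqref{thm_anomward_identity} from the left by $\mathcal{S}^{\star_\hbar (-1)}$, the right-hand side becomes a generating functional of retarded products by the definition~\eqref{sec_aqft_retprod_def}, while on the left-hand side the factor $\mathcal{S}$ cancels against its inverse (using that $\mathcal{S}$ is bosonic, so that the graded commutator reduces to $\mathcal{S}^{\star_\hbar (-1)} \star_\hbar Q \star_\hbar \mathcal{S} - Q$). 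One obtains the \emph{retarded form}
\begin{equation*}
\mathcal{R}\left( \mathe_\otimes^F; \hat{\mathcal{D}}\left( \mathe_\otimes^F \right) \right) = Q - \mathcal{S}^{\star_\hbar (-1)} \star_\hbar Q \star_\hbar \mathcal{S} \eqend{,}
\end{equation*}
with $\hat{\mathcal{D}} \equiv \mathcal{D} + \mathcal{A}$. Since the retarded products are formal power series in $\hbar$ with classical limit $R^\text{cl}$~\eqref{sec_aqft_rclass_limit}, and since $\mathcal{A}_n = \bigo{\hbar}$ by property $6$ while $\mathcal{D}_n = \lim_{\hbar \to 0} \hat{\mathcal{D}}_n$ as in the remark, taking $\hbar \to 0$ at each order in $F$ determines the $\mathcal{D}_n$.

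Part $1$ is the first-order term: $\hat{\mathcal{D}}_1(F)$ is fixed by $\normord{\hat{\mathcal{D}}_1(F)}_H = D \normord{F}_H = \frac{1}{\mathi\hbar} [Q, \normord{F}_H]_{\star_\hbar}$, whose limit is $\{ Q_\text{cl}, F \}$ by~\eqref{sec_aqft_poissonbracket_limit}. For Part $2$ I would expand the retarded form to second order in $F$. The left-hand side gives $\frac{1}{2} \hat{\mathcal{D}}_2(F^{\otimes 2}) + \mathcal{R}_1(F; \hat{\mathcal{D}}_1(F))$, while the conjugation on the right gives a combination of a single and a double $\star_\hbar$-commutator of $Q$ with $\mathcal{T}_1(F)$ and $\mathcal{T}_2(F^{\otimes 2})$. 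Each term individually carries a negative power of $\hbar$, but their sum is finite because $\hat{\mathcal{D}}_2 = \bigo{\hbar^0}$; extracting the finite part — every $\frac{1}{\mathi\hbar} [\cdot, \cdot]_{\star_\hbar}$ becoming a Poisson bracket by~\eqref{sec_aqft_poissonbracket_limit} and every $\mathcal{R}_1$ becoming $R^\text{cl}_1$ — and recombining the two commutator terms through the definition of $\mathcal{R}_1$ into $\{ Q_\text{cl}, R^\text{cl}_1(F;F) \}$ and the two retarded-product terms yields the first line of the claim. The second, explicit form then follows by inserting the kernels~\eqref{sec_aqft_rcl_firstorder} and~\eqref{sec_aqft_poissonbracket_def} and integrating by parts: the antisymmetric $\Delta_{KL} = G^\text{ret}_{KL} - G^\text{adv}_{KL}$ of the outer Poisson bracket combines with the advanced and retarded propagators of the two $R^\text{cl}_1$ terms, via~\eqref{sec_aqft_retadv_rel} and~\eqref{sec_aqft_delta_def}, into the symmetric combination $G^\text{ret}_{KL} + G^\text{adv}_{KL}$.

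For Part $3$ the decisive structural feature is that the right-hand side of the retarded form is \emph{linear} in $Q$, so that each $\mathcal{D}_n$ is linear in $Q_\text{cl}$ and depends on it only through its functional derivatives. I would prove by induction on $n$ that $\mathcal{D}_n(F^{\otimes n})$ admits a normal form carrying \emph{exactly} $n$ functional derivatives of $Q_\text{cl}$ — as is already visible for $n = 1, 2$, where $\mathcal{D}_1$ contains $\delta_\text{R} Q_\text{cl}/\delta\phi$ and the explicit form of $\mathcal{D}_2$ contains $\delta^2 Q_\text{cl}/[\delta\phi\,\delta\phi]$ through the factor $\{ \delta_\text{L} Q_\text{cl}/\delta\phi_L, F \}$. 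Granting this, the hypothesis $\delta^3 Q_\text{cl}/[\delta\phi_K\,\delta\phi_L\,\delta\phi_M] = 0$ forces $\mathcal{D}_n = 0$ for all $n \geq 3$ at once. The main obstacle is establishing this normal form. The classical limit of the conjugation $\mathcal{S}^{\star_\hbar (-1)} \star_\hbar Q \star_\hbar \mathcal{S}$ is a sum of nested Poisson brackets $\{\cdots\{ Q_\text{cl}, Y \}, \ldots, Y \}$ with $Y = \lim_{\hbar \to 0} \mathcal{T}^\text{c}\left( \mathe_\otimes^F \right)$, in which each bracket may place a further derivative on $Q_\text{cl}$ or instead redistribute derivatives onto the factors $Y$ and the propagators; likewise the subtractions $R^\text{cl}_m(F^{\otimes m}; \mathcal{D}_{n-m})$ inherited from the recursion carry fewer derivatives of $Q_\text{cl}$. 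One must therefore show that, after repeated integration by parts and use of the Green's-function identities~\eqref{sec_aqft_retadv} and~\eqref{sec_aqft_retadv_rel}, all contributions with fewer than $n$ derivatives of $Q_\text{cl}$ recombine into the top form. This book-keeping — controlling how the functional derivatives of $Q_\text{cl}$ are distributed under the classical recursion — is the crux of the argument.
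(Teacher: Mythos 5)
Your reduction of the anomalous Ward identity \eqref{thm_anomward_identity} to the retarded form, and the extraction of $\mathcal{D}_1$ and $\mathcal{D}_2$ from its classical limit, is correct and is essentially the paper's own route: the paper applies $D$ to the retarded-product generating functional $\mathcal{R}\left(\mathe_\otimes^{\alpha F}; F\right)$ and inserts \eqref{thm_anomward_identity}, which after the limit $\hbar \to 0$ (where the anomaly drops out) yields the classical recursion \eqref{proof_classanom_actionretarded}; your left-multiplication of \eqref{thm_anomward_identity} by $\mathcal{T}\left[\exp_\otimes\left(\frac{\mathi}{\hbar}F\right)\right]^{\star_\hbar(-1)}$ is the same computation in different packaging, and parts 1 and 2 then follow as you describe.

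Part 3, however, has a genuine gap, and you name it yourself: the entire argument rests on the claim that $\mathcal{D}_n\left(F^{\otimes n}\right)$ admits a normal form carrying exactly $n$ functional derivatives of $Q_\text{cl}$, and this claim is nowhere proven. It is not a book-keeping afterthought. In the classical limit of the conjugation, each nested Poisson bracket may place its derivative either on $Q_\text{cl}$ or on the other factors and propagators, so contributions with only one or two derivatives of $Q_\text{cl}$ appear at every order $n$; showing that these recombine with the subtraction terms $R^\text{cl}_m\left(F^{\otimes m}; \mathcal{D}_{n-m}\right)$ into the top form requires the same identities (GLZ, factorisation, field independence) that a direct proof of the vanishing needs, so postulating the normal form comes close to assuming the conclusion. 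Note also that the normal form is a strictly stronger statement than the theorem, since it would constrain $\mathcal{D}_n$ for arbitrary $Q_\text{cl}$; the paper never establishes it.

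The paper instead exploits the quadraticity hypothesis much earlier, so that no derivative counting is needed. From \eqref{proof_classanom_actionretarded} one isolates $\mathcal{D}_{n+1}$ in terms of the $\mathcal{D}_k$ with $k \leq n$; inserting the known $\mathcal{D}_1$, $\mathcal{D}_2$ and using field independence \eqref{sec_aqft_rcl_funcder}, the GLZ relation \eqref{sec_aqft_rcl_glz} and factorisation \eqref{sec_aqft_rcl_factorisation}, every term is brought to a form involving $R^\text{cl}_k\left(F^{\otimes k}; \delta_\text{R}Q_\text{cl}/\delta\phi_M\right)$. Since $Q_\text{cl}$ is at most quadratic, $\delta_\text{R}Q_\text{cl}/\delta\phi_M$ is at most \emph{linear} in fields, so relation \eqref{sec_aqft_rcl_glinear_rel} collapses these to $k\, R^\text{cl}_{k-1}\left[F^{\otimes(k-1)}; R^\text{cl}_1\left(F; \delta_\text{R}Q_\text{cl}/\delta\phi_M\right)\right]$, and the field-independent second derivative $\delta_\text{R}^2 Q_\text{cl}/\delta\phi\,\delta\phi$ can be pulled outside all retarded products. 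What remains is a single sum with weight $(n+1-2k)$, which vanishes under the relabelling $k \to n+1-k$, plus a double sum whose terms cancel pairwise; induction in $n$ then gives $\mathcal{D}_k = 0$ for all $k \geq 3$. To complete your proposal you would either have to carry out this argument (which needs no normal form) or prove your normal-form claim, which looks at least as hard as the theorem itself.
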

\begin{remark*} In AQFT, the local C${}^*$ algebras are generically of type III${}_1$~\cite{buchholzdantonifredenhagen1987,buchholzverch1995,yngvason2005} for which all derivations are inner~\cite{sakai1966,kadison1966}, such that being inner is no restriction on $D$ in this case. However, the algebra $\overline{\mathfrak{A}}_0$ is only a $*$-algebra since the generators are unbounded. Property (3) applies in particular to all symmetries of the free field theory, which are implemented in the quantum theory by graded commutators with the corresponding Noether charge. Apart from the expected term (1), the corresponding Ward identity for time-ordered products has then at most one additional contribution given by (2).
\end{remark*}
\begin{proof}
We assume without loss of generality that $Q$ is bosonic; for fermionic $Q$ we introduce an auxiliary Grassmann parameter $\beta$ and consider instead the bosonic $\beta Q$. Using the properties of left and right derivatives~\eqref{sec_aqft_leftrightder_props}, it is easily seen that $\beta$ can be taken out on the left without introducing any extra sign for all the formulas in the theorem, and that they thus hold without change for fermionic $Q$ if they hold in the bosonic case.

Applying $D$ on the retarded product
\begin{equation}
\mathcal{R}\left( \mathe_\otimes^{\alpha F}; F \right) = \frac{\hbar}{\mathi} \mathcal{T}\left[ \exp_\otimes\left( \frac{\mathi}{\hbar} \alpha F \right) \right]^{\star_\hbar (-1)} \star_\hbar \frac{\total}{\total \alpha} \mathcal{T}\left[ \exp_\otimes\left( \frac{\mathi}{\hbar} \alpha F \right) \right] \eqend{,}
\end{equation}
using that $D$ is a derivation, using differentiation by parts (in $\alpha$), and using the anomalous Ward identity~\eqref{thm_anomward_identity}, we obtain
\begin{splitequation}
D \mathcal{R}\left( \mathe_\otimes^{\alpha F}; F \right) &= \mathcal{R}\left[ \mathe_\otimes^{\alpha F}; \mathcal{D}\left( \mathe_\otimes^{\alpha F} \otimes F \right) \right] + \mathcal{R}\left[ \mathe_\otimes^{\alpha F} \otimes F; \mathcal{D}\left( \mathe_\otimes^{\alpha F} \right) \right] \\
&\quad+ \frac{\mathi}{\hbar} \left[ \mathcal{R}\left( \mathe_\otimes^{\alpha F}; F \right), \mathcal{R}\left[ \mathe_\otimes^{\alpha F}; \mathcal{D}\left( \mathe_\otimes^{\alpha F} \right) \right] \right]_{\star_\hbar} \\
&\quad+ \mathcal{R}\left[ \mathe_\otimes^{\alpha F}; \mathcal{A}\left( \mathe_\otimes^{\alpha F} \otimes F \right) \right] + \mathcal{R}\left[ \mathe_\otimes^{\alpha F} \otimes F; \mathcal{A}\left( \mathe_\otimes^{\alpha F} \right) \right] \\
&\quad+ \frac{\mathi}{\hbar} \left[ \mathcal{R}\left( \mathe_\otimes^{\alpha F}; F \right), \mathcal{R}\left[ \mathe_\otimes^{\alpha F}; \mathcal{A}\left( \mathe_\otimes^{\alpha F} \right) \right] \right]_{\star_\hbar} \eqend{.}
\end{splitequation}
Using the limits~\eqref{sec_aqft_rclass_limit} for the commutator and~\eqref{sec_aqft_poissonbracket_limit} for the retarded products, we can now take the classical limit $\hbar \to 0$, and since the anomaly $\mathcal{A}$ is at least of order $\hbar$ the corresponding terms vanish in this limit. Expanding in powers of $\alpha$, we obtain
\begin{splitequation}
\label{proof_classanom_actionretarded}
&D_\text{cl} R^\text{cl}_n\left( F^{\otimes n}; F \right) = \sum_{k=0}^n \frac{n!}{k! (n-k)!} R^\text{cl}\left[ F^{\otimes (n-k)}; \mathcal{D}_{k+1}\left( F^{\otimes (k+1)} \right) \right] \\
&\hspace{8em}+ \sum_{k=0}^n \frac{n!}{k! (n-k)!} R^\text{cl}_{n-k+1}\left[ F^{\otimes (n-k+1)}; \mathcal{D}_k\left( F^{\otimes k} \right) \right] \\
&\quad- \sum_{k=0}^n \sum_{\ell=0}^k \frac{n!}{(n-k)! (k-\ell)! \ell!} \left\{ R^\text{cl}_{n-k}\left( F^{\otimes (n-k)}; F \right), R^\text{cl}_{k-\ell}\left[ F^{\otimes (k-\ell)}; \mathcal{D}_\ell\left( F^{\otimes \ell} \right) \right] \right\} \eqend{.}
\end{splitequation}
In particular, using that $R^\text{cl}_0\left( -; F \right) = F$~\eqref{sec_aqft_rcl_recursion} it follows that
\begin{equations}
\mathcal{D}_1\left( F \right) &= D_\text{cl} F \eqend{,} \\
\begin{split}
\mathcal{D}_2\left( F \otimes F \right) &= D_\text{cl} R^\text{cl}_1\left( F; F \right) - 2 R^\text{cl}_1\left( F; D_\text{cl} F \right) + \left\{ F, D_\text{cl} F \right\} \\
&= D_\text{cl} R^\text{cl}_1\left( F; F \right) - R^\text{cl}_1\left( F; D_\text{cl} F \right) - R^\text{cl}_1\left( D_\text{cl} F; F \right) \eqend{.}
\end{split}
\end{equations}
Using the explicit form of $D_\text{cl}$ (noting that $F$ necessarily is bosonic for $\mathcal{D}_2\left( F \otimes F \right)$ because of the graded symmetry), the explicit formula~\eqref{sec_aqft_rcl_firstorder} for the classical retarded product, the symmetry of the retarded and advanced propagators~\eqref{sec_aqft_retadv_rel}
\begin{equation}
G^\text{ret}_{MN}(x,y) = (-1)^{\epsilon_M \epsilon_N} G^\text{adv}_{NM}(y,x) \eqend{,}
\end{equation}
and that the Poisson bracket commutes with functional derivatives and factorises according to
\begin{equations}[poisson_funcder]
\frac{\delta_\text{L}}{\delta \phi_K(x)} \left\{ F, G \right\} &= \left\{ \frac{\delta_\text{L} F}{\delta \phi_K(x)}, G \right\} + (-1)^{\epsilon_F \epsilon_K} \left\{ F, \frac{\delta_\text{L} G}{\delta \phi_K(x)} \right\} \eqend{,} \\
\frac{\delta_\text{R}}{\delta \phi_K(x)} \left\{ F, G \right\} &= \left\{ F, \frac{\delta_\text{R} G}{\delta \phi_K(x)} \right\} + (-1)^{\epsilon_G \epsilon_K} \left\{ \frac{\delta_\text{R} F}{\delta \phi_K(x)}, G \right\} \eqend{,} \\
\left\{ F, G H \right\} &= \left\{ F, G \right\} H + (-1)^{\epsilon_F \epsilon_G} G \left\{ F, H \right\} \eqend{,}
\end{equations}
we obtain the given formulas for $\mathcal{D}_1$ and $\mathcal{D}_2$.

To prove that all $\mathcal{D}_k\left( F^{\otimes k} \right)$ with $k \geq 3$ vanish if $Q_\text{cl}$ is at most of second order in fields, we shift summation indices in equation~\eqref{proof_classanom_actionretarded} to obtain
\begin{splitequation}
&\mathcal{D}_{n+1}\left( F^{\otimes (n+1)} \right) = D_\text{cl} R^\text{cl}_n\left( F^{\otimes n}; F \right) \\
&\quad- \sum_{k=1}^n \frac{n!}{k! (n-k)!} \bigg[ \frac{n+1}{n+1-k} R^\text{cl}_{n-k+1}\left[ F^{\otimes (n-k+1)}; \mathcal{D}_k\left( F^{\otimes k} \right) \right] \\
&\qquad- \sum_{\ell=1}^k \frac{k!}{(k-\ell)! \ell!} \left\{ R^\text{cl}_{n-k}\left( F^{\otimes (n-k)}; F \right), R^\text{cl}_{k-\ell}\left[ F^{\otimes (k-\ell)}; \mathcal{D}_\ell\left( F^{\otimes \ell} \right) \right] \right\} \bigg] \eqend{.}
\end{splitequation}
Since on the right-hand side only $\mathcal{D}_k\left( F^{\otimes k} \right)$ with $k \leq n$ appear, we can proceed by induction. Defining for $n \geq 2$
\begin{splitequation}
\mathcal{N}_{n+1} &\equiv 2 D_\text{cl} R^\text{cl}_n\left( F^{\otimes n}; F \right) - 2 (n+1) R^\text{cl}_n\left[ F^{\otimes n}; \mathcal{D}_1\left( F \right) \right] \\
&\quad- n (n+1) R^\text{cl}_{n-1}\left[ F^{\otimes (n-1)}; \mathcal{D}_2\left( F \otimes F \right) \right] \\
&\quad+ 2 \sum_{k=1}^n \frac{n!}{(k-1)! (n-k)!} \left\{ R^\text{cl}_{n-k}\left( F^{\otimes (n-k)}; F \right), R^\text{cl}_{k-1}\left[ F^{\otimes (k-1)}; \mathcal{D}_1\left( F \right) \right] \right\} \\
&\quad+ \sum_{k=2}^n \frac{n!}{(k-2)! (n-k)!} \left\{ R^\text{cl}_{n-k}\left( F^{\otimes (n-k)}; F \right), R^\text{cl}_{k-2}\left[ F^{\otimes (k-2)}; \mathcal{D}_2\left( F \otimes F \right) \right] \right\} \eqend{,}
\end{splitequation}
it follows inductively in $n$ that $\mathcal{D}_n\left( F^{\otimes n} \right) = 0$ for all $n \geq 3$ if $\mathcal{N}_n$ vanishes for all $n \geq 3$.

Using that the classical retarded products commute with functional derivatives~\eqref{sec_aqft_rcl_funcder} and the GLZ relation~\eqref{sec_aqft_rcl_glz}
\begin{splitequation}
R^\text{cl}_{n+1}\left( H^{\otimes n} \otimes F; G \right) &= R^\text{cl}_{n+1}\left( H^{\otimes n} \otimes G; F \right) \\
&\quad+ \sum_{k=0}^n \frac{n!}{(n-k)! k!} \left\{ R^\text{cl}_k\left( H^{\otimes k}; F \right), R^\text{cl}_{n-k}\left( H^{\otimes (n-k)}; G \right) \right\} \eqend{,}
\end{splitequation}
we obtain
\begin{splitequation}
&D_\text{cl} R^\text{cl}_n\left( F^{\otimes n}; F \right) = \left\{ Q_\text{cl}, R^\text{cl}_n\left( F^{\otimes n}; F \right) \right\} \\
&= \iint \Delta_{MN}(x,y) \frac{\delta_\text{R} Q_\text{cl}}{\delta \phi_M(x)} \bigg[ n R^\text{cl}_n\left( F^{\otimes (n-1)} \otimes \frac{\delta_\text{L} F}{\delta \phi_N(y)}; F \right) \\
&\hspace{12em}+ R^\text{cl}_n\left( F^{\otimes n}; \frac{\delta_\text{L} F}{\delta \phi_N(y)} \right) \bigg] \total x \total y \\
&= \iint \Delta_{MN}(x,y) \frac{\delta_\text{R} Q_\text{cl}}{\delta \phi_M(x)} \bigg[ (n+1) R^\text{cl}_n\left( F^{\otimes n}; \frac{\delta_\text{L} F}{\delta \phi_N(y)} \right) \\
&\quad+ \sum_{k=1}^n \frac{n!}{(n-k)! (k-1)!} \left\{ R^\text{cl}_{k-1}\left( F^{\otimes (k-1)}; \frac{\delta_\text{L} F}{\delta \phi_N(y)} \right), R^\text{cl}_{n-k}\left( F^{\otimes (n-k)}; F \right) \right\} \bigg] \total x \total y \eqend{.}
\end{splitequation}
Using also the previous result
\begin{equations}
\mathcal{D}_1\left( F \right) &= \left\{ Q_\text{cl}, F \right\} = \iint \Delta_{MN}(x,y) \frac{\delta_\text{R} Q_\text{cl}}{\delta \phi_M(x)} \frac{\delta_\text{L} F}{\delta \phi_N(y)} \total x \total y \eqend{,} \\
\begin{split}
\mathcal{D}_2\left( F \otimes F \right) &= \left\{ Q_\text{cl}, R^\text{cl}_1\left( F; F \right) \right\} - 2 R^\text{cl}_1\left( F; \left\{ Q_\text{cl}, F \right\} \right) + \left\{ F, \left\{ Q_\text{cl}, F \right\} \right\} \\
&= \iint \Delta_{MN}(x,y) \bigg[ \frac{\delta_\text{R} Q_\text{cl}}{\delta \phi_M(x)} \left[ R^\text{cl}_1\left( \frac{\delta_\text{L} F}{\delta \phi_N(y)}; F \right) + R^\text{cl}_1\left( F; \frac{\delta_\text{L} F}{\delta \phi_N(y)} \right) \right] \\
&\qquad- 2 R^\text{cl}_1\left( F; \frac{\delta_\text{R} Q_\text{cl}}{\delta \phi_M(x)} \frac{\delta_\text{L} F}{\delta \phi_N(y)} \right) + \left\{ F, \frac{\delta_\text{R} Q_\text{cl}}{\delta \phi_M(x)} \frac{\delta_\text{L} F}{\delta \phi_N(y)} \right\} \bigg] \total x \total y
\end{split}
\end{equations}
and the linearity of the retarded products in their second argument~\eqref{sec_aqft_rcl_linearity}, it follows that
\begin{splitequation}
\mathcal{N}_{n+1} &= \iint \Delta_{MN}(x,y) \bigg[ (n+1) \mathcal{N}^{n+1}_{MN}(x,y) \\
&\qquad- \sum_{k=1}^n \frac{n!}{(k-1)! (n-k)!} \left\{ R^\text{cl}_{n-k}\left( F^{\otimes (n-k)}; F \right), \mathcal{N}^k_{MN}(x,y) \right\} \\
&\qquad+ 2 \sum_{k=1}^n \frac{n!}{(k-1)! (n-k)!} \left\{ R^\text{cl}_{n-k}\left( F^{\otimes (n-k)}; F \right), \frac{\delta_\text{R} Q_\text{cl}}{\delta \phi_M(x)} \right\} \\
&\qquad\qquad\times R^\text{cl}_{k-1}\left( F^{\otimes (k-1)}; \frac{\delta_\text{L} F}{\delta \phi_N(y)} \right) \bigg] \total x \total y
\end{splitequation}
with
\begin{splitequation}
\mathcal{N}^{n+1}_{MN}(x,y) &= 2 \frac{\delta_\text{R} Q_\text{cl}}{\delta \phi_M(x)} R^\text{cl}_n\left( F^{\otimes n}; \frac{\delta_\text{L} F}{\delta \phi_N(y)} \right) - 2 R^\text{cl}_n\left( F^{\otimes n}; \frac{\delta_\text{R} Q_\text{cl}}{\delta \phi_M(x)} \frac{\delta_\text{L} F}{\delta \phi_N(y)} \right) \\
&\quad+ n R^\text{cl}_{n-1}\bigg[ F^{\otimes (n-1)}; 2 R^\text{cl}_1\left( F; \frac{\delta_\text{R} Q_\text{cl}}{\delta \phi_M(x)} \frac{\delta_\text{L} F}{\delta \phi_N(y)} \right) \\
&\hspace{4em}- \left\{ F, \frac{\delta_\text{R} Q_\text{cl}}{\delta \phi_M(x)} \frac{\delta_\text{L} F}{\delta \phi_N(y)} \right\} \\
&\hspace{4em}- \frac{\delta_\text{R} Q_\text{cl}}{\delta \phi_M(x)} \left[ R^\text{cl}_1\left( \frac{\delta_\text{L} F}{\delta \phi_N(y)}; F \right) + R^\text{cl}_1\left( F; \frac{\delta_\text{L} F}{\delta \phi_N(y)} \right) \right] \bigg] \eqend{.}
\end{splitequation}

Using the factorisation~\eqref{sec_aqft_rcl_factorisation} of classical retarded products, we obtain
\begin{splitequation}
&\mathcal{N}^{n+1}_{MN}(x,y) = 2 \frac{\delta_\text{R} Q_\text{cl}}{\delta \phi_M(x)} R^\text{cl}_n\left( F^{\otimes n}; \frac{\delta_\text{L} F}{\delta \phi_N(y)} \right) - 2 R^\text{cl}_n\left( F^{\otimes n}; \frac{\delta_\text{R} Q_\text{cl}}{\delta \phi_M(x)} \frac{\delta_\text{L} F}{\delta \phi_N(y)} \right) \\
&\qquad\quad+ n R^\text{cl}_{n-1}\left[ F^{\otimes (n-1)}; \left[ R^\text{cl}_1\left( F; \frac{\delta_\text{R} Q_\text{cl}}{\delta \phi_M(x)} \right) + R^\text{cl}_1\left( \frac{\delta_\text{R} Q_\text{cl}}{\delta \phi_M(x)}; F \right) \right] \frac{\delta_\text{L} F}{\delta \phi_N(y)} \right] \\
&\quad= \sum_{k=1}^n \frac{n!}{k! (n-k)!} \bigg[ k R^\text{cl}_{k-1}\left[ F^{\otimes (k-1)}; R^\text{cl}_1\left( F; \frac{\delta_\text{R} Q_\text{cl}}{\delta \phi_M(x)} \right) + R^\text{cl}_1\left( \frac{\delta_\text{R} Q_\text{cl}}{\delta \phi_M(x)}; F \right) \right] \\
&\hspace{9em}- 2 R^\text{cl}_k\left( F^{\otimes k}; \frac{\delta_\text{R} Q_\text{cl}}{\delta \phi_M(x)} \right) \bigg] R^\text{cl}_{n-k}\left( F^{\otimes (n-k)}; \frac{\delta_\text{L} F}{\delta \phi_N(y)} \right) \eqend{.}
\end{splitequation}
Since $Q_\text{cl}$ is at most quadratic in the fields, its first functional derivative is at most linear, and we can use the relation~\eqref{sec_aqft_rcl_glinear_rel}
\begin{equation}
R^\text{cl}_k\left( F^{\otimes k}; \frac{\delta_\text{R} Q_\text{cl}}{\delta \phi_M(x)} \right) = k R^\text{cl}_{k-1}\left[ F^{\otimes (k-1)}; R^\text{cl}_1\left( F; \frac{\delta_\text{R} Q_\text{cl}}{\delta \phi_M(x)} \right) \right] \eqend{.}
\end{equation}
This gives
\begin{splitequation}
\mathcal{N}^{n+1}_{MN}(x,y) &= \sum_{k=1}^n \frac{n!}{(k-1)! (n-k)!} R^\text{cl}_{k-1}\left( F^{\otimes (k-1)}; \left\{ \frac{\delta_\text{R} Q_\text{cl}}{\delta \phi_M(x)}, F \right\} \right) \\
&\qquad\times R^\text{cl}_{n-k}\left( F^{\otimes (n-k)}; \frac{\delta_\text{L} F}{\delta \phi_N(y)} \right) \\
&= \sum_{k=1}^n \frac{n!}{(k-1)! (n-k)!} \iint \Delta_{KL}(u,v) \frac{\delta_\text{R}^2 Q_\text{cl}}{\delta \phi_K(u) \delta \phi_M(x)} \\
&\qquad\times R^\text{cl}_{k-1}\left( F^{\otimes (k-1)}; \frac{\delta_\text{L} F}{\delta \phi_L(v)} \right) R^\text{cl}_{n-k}\left( F^{\otimes (n-k)}; \frac{\delta_\text{L} F}{\delta \phi_N(y)} \right) \total u \total v \eqend{,}
\end{splitequation}
where we could take the second functional derivative of $Q_\text{cl}$ outside the classical retarded product since it is field-independent by assumption. Similarly, we calculate using that the classical retarded products commute with functional derivatives~\eqref{sec_aqft_rcl_funcder} and the GLZ relation~\eqref{sec_aqft_rcl_glz}
\begin{splitequation}
&\left\{ R^\text{cl}_{n-k}\left( F^{\otimes (n-k)}; F \right), \frac{\delta_\text{R} Q_\text{cl}}{\delta \phi_M(x)} \right\} \\
&\quad= - \iint \Delta_{KL}(u,v) \frac{\delta_\text{R}^2 Q_\text{cl}}{\delta \phi_K(u) \delta \phi_M(x)} \frac{\delta_\text{L}}{\delta \phi_L(v)} R^\text{cl}_{n-k}\left( F^{\otimes (n-k)}; F \right) \total u \total v \\
&\quad= - \iint \Delta_{KL}(u,v) \frac{\delta_\text{R}^2 Q_\text{cl}}{\delta \phi_K(u) \delta \phi_M(x)} \bigg[ (n-k+1) R^\text{cl}_{n-k}\left( F^{\otimes (n-k)}; \frac{\delta_\text{L} F}{\delta \phi_L(v)} \right) \\
&\qquad+ \sum_{\ell=1}^{n-k} \frac{(n-k)!}{(n-k-\ell)! (\ell-1)!} \left\{ R^\text{cl}_{\ell-1}\left( F^{\otimes (\ell-1)}; \frac{\delta_\text{L} F}{\delta \phi_L(v)} \right), R^\text{cl}_{n-k-\ell}\left( F^{\otimes (n-k-\ell)}; F \right) \right\} \bigg] \total u \total v \eqend{.}
\end{splitequation}
Factorising the remaining Poisson bracket and renaming and shifting summation indices, it follows that
\begin{splitequation}
\mathcal{N}_{n+1} &= \iiiint \Delta_{MN}(x,y) \Delta_{KL}(u,v) \frac{\delta_\text{R}^2 Q_\text{cl}}{\delta \phi_K(u) \delta \phi_M(x)} \bigg[ \\
&\qquad+ \sum_{k=1}^n \frac{n! (n+1-2k)}{(k-1)! (n-k)!} R^\text{cl}_{k-1}\left( F^{\otimes (k-1)}; \frac{\delta_\text{L} F}{\delta \phi_L(v)} \right) R^\text{cl}_{n-k}\left( F^{\otimes (n-k)}; \frac{\delta_\text{L} F}{\delta \phi_N(y)} \right) \\
&\qquad- \sum_{\ell=1}^n \sum_{k=1+\ell}^n \frac{n!}{(\ell-1)! (k-1-\ell)! (n-k)!} \bigg[ R^\text{cl}_{\ell-1}\left( F^{\otimes (\ell-1)}; \frac{\delta_\text{L} F}{\delta \phi_L(v)} \right) \\
&\hspace{8em}\times \bigg\{ R^\text{cl}_{n-k}\left( F^{\otimes (n-k)}; F \right), R^\text{cl}_{k-1-\ell}\left( F^{\otimes (k-1-\ell)}; \frac{\delta_\text{L} F}{\delta \phi_N(y)} \right) \bigg\} \\
&\hspace{4em}- \left\{ R^\text{cl}_{n-k}\left( F^{\otimes (n-k)}; F \right), R^\text{cl}_{k-\ell-1}\left( F^{\otimes (k-\ell-1)}; \frac{\delta_\text{L} F}{\delta \phi_L(v)} \right) \right\} \\
&\hspace{8em}\times R^\text{cl}_{\ell-1}\left( F^{\otimes (\ell-1)}; \frac{\delta_\text{L} F}{\delta \phi_N(y)} \right) \bigg] \bigg] \total x \total y \total u \total v \eqend{.}
\raisetag{2em}
\end{splitequation}
This expression now vanishes by symmetry, using that $\Delta_{KL} = 0$ if $\epsilon_K \neq \epsilon_L$: if at least one of $\phi_K$ and $\phi_M$ is bosonic, both the functional derivatives acting on $Q_\text{cl}$ and the retarded products involving $\delta_\text{L} F/\delta \phi_L(v)$ and $\delta_\text{L} F/\delta \phi_N(y)$ commute, while otherwise both anticommute, and no extra sign is introduced either way. In the double sum the two terms thus cancel each other, while we take the single sum over $k$ twice, change the summation index $k \to n+1-k$ in the second one, and then the sum vanishes owing to the factor $(n+1-2k)$.
\end{proof}

\section{Quantum gauge theories}
\label{sec_gauge}

The main problem in applying the general apparatus of pAQFT (as described in section~\ref{sec_aqft}) directly to gauge theories is that no unique retarded and advanced Green's functions~\eqref{sec_aqft_retadv} exist because of the gauge invariance. A very general approach to treat this problem is the BRST/BV (Becchi--Rouet--Stora--Tyutin/Batalin--Vilkovisky) or field--antifield formalism~\cite{becchirouetstora1976,batalinvilkovisky1981,batalinvilkovisky1983,batalinvilkovisky1984}, in which one introduces extra fields and extends the action in such a way that unique retarded and advanced Green's functions can be found, and then recovers the original (classical) observables as elements of the cohomology of a nilpotent fermionic operator, the \emph{BRST differential} $\brst$. This structure must then be transported to the quantum theory; in particular $\brst$ must be promoted to a graded differential $\st$ on the free-field algebra $\overline{\mathfrak{A}}_0$, such that the quantum observables may be obtained as elements of the cohomology of $\st$. That this programme can be successfully executed is not automatic, and it might be impossible to perform the extension $\brst \to \st$. In this case there is no analogue of the classical symmetry in the quantum theory, it is anomalous.

We begin with the construction of the classical extended theory. Again, since extensive reviews of the BRST/BV formalism exist~\cite{henneaux1990,gomisparissamuel1995,barnichetal2000,fredenhagenrejzner2012b}, we are brief and only list the essential steps. For each symmetry transformation $\delta_\xi$ with parameter $\xi_M$ acting on the fields $\{ \phi_K \}$, we introduce a \emph{ghost field} $c_M$, an \emph{antighost field} $\bar{c}_M$ (where the bar is purely notational) and an \emph{auxiliary} or \emph{Nakanishi--Lautrup field}~\cite{nakanishi1966,lautrup1967} $B_M$. Note that for fermionic symmetries such as supersymmetry, the parameter $\xi$ must be taken to be fermionic such that $\delta_\xi$ does not change the Grassmann parity. For global symmetries~\cite{brandthenneauxwilch1996,brandthenneauxwilch1998}, the antighost and auxiliary field (the \emph{non-minimal fields}) can be dispensed with as we will see later, and for reducible symmetries (which become symmetries of the ghost fields) one has to repeat the procedure, leading to ``ghosts for ghosts''~\cite{townsend1979,namaziestorey1980,thierrymieg1990,siegel1980,kimura1980,kimura1981}. The original fields of the theory together with the ghosts, antighosts and auxiliary fields are then grouped together and denoted by $\{ \Phi_K \} = \{ \phi_L, c_M, \bar{c}_M, B_M \}$, and for each of them an antifield $\Phi^\ddag_K$ is introduced. Apart from the Grassmann parity $\epsilon$, one assigns two additional gradings (quantum numbers), the ghost number $g$ and the antifield number $a$, to the fields and monomials thereof (with the grading of a product being the sum of the gradings of its factors) in the following way:
\begin{splitequation}
&\epsilon(c_K) = \epsilon(\bar{c}_K) = \epsilon(\xi_K) + 1 \eqend{,} \quad \epsilon(B_K) = \epsilon(\xi_K) \eqend{,} \quad \epsilon(\Phi^\ddag_K) = \epsilon(\Phi_K) + 1 \eqend{,} \\
&g(\phi_K) = 0 \eqend{,} \quad g(c_K) = 1 \eqend{,} \quad g(\bar{c}_K) = -1 \eqend{,} \quad g(B_K) = 0 \eqend{,} \\
&g(\Phi^\ddag_K) = - 1 - g(\Phi_K) \eqend{,} \quad a(\Phi_K) = 0 \eqend{,} \quad a(\Phi^\ddag_K) = 1 \eqend{.}
\end{splitequation}
In the space $\mathcal{F}$ of local smeared field polynomials~\eqref{sec_aqft_calf_expr} one now has to include all fields $\Phi_K$ and antifields $\Phi^\ddag_K$. For $F,G \in \mathcal{F}$ one defines the \emph{antibracket}
\begin{equation}
\label{sec_gauge_antibracket_def}
\left( F, G \right) \equiv \int \left( \frac{\delta_\text{R} F}{\delta \Phi_K(x)} \frac{\delta_\text{L} G}{\delta \Phi^\ddag_K(x)} - \frac{\delta_\text{R} F}{\delta \Phi^\ddag_K(x)} \frac{\delta_\text{L} G}{\delta \Phi_K(x)} \right) \total x \eqend{.}
\end{equation}
Using the properties of left- and right-differentiation~\eqref{sec_aqft_leftrightder_props}, one checks that it is graded symmetric
\begin{equation}
\label{sec_gauge_antibracket_gradsym}
\left( F, G \right) = (-1)^{\epsilon_F + \epsilon_G + \epsilon_F \epsilon_G} \left( G, F \right) \eqend{,}
\end{equation}
commutes with functional derivatives
\begin{equations}[sec_gauge_antibracket_funcder]
\frac{\delta_\text{L}}{\delta \Phi_K(x)} \left( F, G \right) &= \left( \frac{\delta_\text{L} F}{\delta \Phi_K(x)}, G \right) + (-1)^{(1+\epsilon_F) \epsilon_K} \left( F, \frac{\delta_\text{L} G}{\delta \Phi_K(x)} \right) \eqend{,} \\
\frac{\delta_\text{R}}{\delta \Phi_K(x)} \left( F, G \right) &= \left( F, \frac{\delta_\text{R} G}{\delta \Phi_K(x)} \right) + (-1)^{(1+\epsilon_G) \epsilon_K} \left( \frac{\delta_\text{R} F}{\delta \Phi_K(x)}, G \right) \eqend{,}
\end{equations}
fulfils a graded Leibniz rule
\begin{equation}
\label{sec_gauge_antibracket_leibniz}
\left( F, G H \right) = \left( F, G \right) H + (-1)^{(1+\epsilon_F) \epsilon_G} G \left( F, H \right) \eqend{,}
\end{equation}
and fulfils the graded Jacobi identity
\begin{splitequation}
\label{sec_gauge_antibracket_jacobi}
&(-1)^{(\epsilon_F+1)(\epsilon_H+1)} \left( F, \left( G, H \right) \right) + (-1)^{(\epsilon_G+1)(\epsilon_F+1)} \left( G, \left( H, F \right) \right) \\
&\quad+ (-1)^{(\epsilon_H+1)(\epsilon_G+1)} \left( H, \left( F, G \right) \right) = 0 \eqend{.}
\end{splitequation}
Furthermore, the antibracket respects the grading: it is of ghost number $1$, antifield number $-1$ and Grassmann odd:
\begin{splitequation}
\label{sec_gauge_antibracket_grading}
&g\left[ \left( F,G \right) \right] = g(F)+g(G)+1 \eqend{,} \quad a\left[ \left( F,G \right) \right] = a(F)+a(G)-1 \eqend{,} \\
&\epsilon\left[ \left( F,G \right) \right] = \epsilon(F)+\epsilon(G)+1 \eqend{.}
\end{splitequation}

Using the antibracket, one defines the \emph{BRST differential}
\begin{equation}
\brst F \equiv \left( S_\text{tot}, F \right) \eqend{,}
\end{equation}
where the total action $S_\text{tot}$ is the sum of the original action $S$ and the extension $S_\text{ext}$, which needs to be chosen such that
\begin{enumerate}
\item the \emph{BV master equation} $\left( S_\text{tot}, S_\text{tot} \right) = 0$ is fulfilled,
\item the original symmetries are recovered by a BRST transformation, with the transformation parameter replaced by the ghost: \\ $\brst \phi_M = \sum_k \delta_{c} \phi_M + \text{terms containing antifields}$,
\item the non-minimal fields form \emph{trivial pairs}: $\brst \bar{c}_M = B_M$, $\brst B_M = 0$,
\item the (formally self-adjoint) differential operator $P_{KL}$ appearing in the antifield-independent free part of the total action (quadratic in fields)
\begin{equation}
\label{sec_gauge_freeaction_fieldspkl}
S^{(0)}_{\text{tot},0} = \frac{1}{2} \int \Phi_K(x) P_{KL}(x) \Phi_L(x) \total x
\end{equation}
needs to possess unique retarded and advanced Green's functions~\eqref{sec_aqft_retadv}.
\end{enumerate}
The last condition is of course the reason for the introduction of the antifield formalism, and to satisfy it one needs the non-minimal fields. Since global symmetries do not affect the invertibility of the differential operators $P_{KL}$ and thus the (non-)existence of retarded and advanced Green's functions, no non-minimal fields are necessary for them as stated previously. However, one can still treat them in a unified way with the gauge symmetries by introducing constant ghosts while leaving $P_{KL}$ unchanged, for example for supersymmetric theories. These constant ghosts have associated constant antifields that appear in the extended action in the same way as for normal ghosts, and ensure the validity of the BV master equation. However, for consistency one must replace functional derivatives with respect to them by ordinary derivatives (but still fulfilling the graded Leibniz rules~\eqref{sec_aqft_leftrightder_props}), for example in the definition of the antibracket~\eqref{sec_gauge_antibracket_def}. Because the antibracket is Grassmann odd, $\brst$ is an odd (fermionic) differential, and using the Leibniz rule~\eqref{sec_gauge_antibracket_leibniz} one sees that it is a graded left derivation. Using the Jacobi identity~\eqref{sec_gauge_antibracket_jacobi}, one finds that the BV master equation is equivalent to its nilpotency $\brst^2 = 0$, and since it also respects the grading by ghost number (augmenting it by $1$) we can define the cohomology classes at ghost number $g$:\footnote{One usually considers the cohomology for \emph{unsmeared} local field polynomials, but the extension to smeared ones is obvious.}
\begin{equation}
\label{sec_gauge_cohom_def}
H^g(\brst) \equiv \frac{\mathrm{Ker}(\brst\colon \mathcal{F}^g \to \mathcal{F}^{g+1})}{\mathrm{Im}(\brst\colon \mathcal{F}^{g-1} \to \mathcal{F}^g)} \eqend{,}
\end{equation}
where $\mathcal{F}^g$ is the subspace of $\mathcal{F}$ of homogeneous elements of ghost number $g$:
\begin{equation}
F \in \mathcal{F}^g \Leftrightarrow g(F) = g \eqend{,} \qquad \mathcal{F} = \bigoplus_{g \in \mathbb{N}} \mathcal{F}^g \eqend{.}
\end{equation}
Since one can choose representatives independent of trivial pairs~\cite{barnichetal2000,piguetsorella} and the ghost fields have positive ghost number, the condition that the non-minimal fields form such pairs ensures that all representatives of elements in $H^0(\brst)$ which are antifield-independent are indeed gauge-invariant classical observables. That all observables can be obtained in this way, and that one can indeed choose the representatives to be independent of antifields is a more subtle point, and needs to be checked separately for each specific theory.

An extension $S_\text{ext}$ can be constructed in two steps: first finding a solution to conditions 1 and 2 ignoring the non-minimal fields (the minimal extension), and then performing a canonical transformation on this solution to also satisfy conditions 3 and 4. The second step is easier; assume that an extension $S_\text{ext}$ only depending on the original fields, the ghosts and their antifields has been found such that the first two conditions are satisfied. Condition 3 can then be satisfied by replacing
\begin{equation}
\label{sec_gauge_nonminimal}
S_\text{ext} \to S_\text{ext} - \int B_M \bar{c}^\ddag_M \total x \eqend{,}
\end{equation}
i.e., the non-minimal extension is obtained by adding the trivial pair for each symmetry to the minimal extension. Since the minimal extension does not depend on the non-minimal fields and antifields, conditions 1 and 2 are maintained by this addition, and it is easy to check that condition 3 now also holds. One then chooses a so-called \emph{gauge-fixing fermion} $\Psi \in \mathcal{F}^{-1}$ with $\epsilon(\Psi) = 1$ depending on the fields $\Phi_K$ and new antifields $\hat{\Phi}^\ddag_K$ such that the system
\begin{equation}
\label{sec_gauge_canonicaltrafo}
\hat{\Phi}_K(x) = \Phi_K(x) - \frac{\delta_\text{L} \Psi}{\delta \hat{\Phi}^\ddag_K(x)} \eqend{,} \qquad \Phi^\ddag_K(x) = \hat{\Phi}^\ddag_K(x) - \frac{\delta_\text{R} \Psi}{\delta \Phi_K(x)}
\end{equation}
can be solved for $\Phi_K$. Replacing $\Phi_K$ and $\Phi^\ddag_K$, one checks that the new solution $\hat{S}_\text{tot}\left( \hat{\Phi}_K, \hat{\Phi}^\ddag_K \right) = S_\text{tot}\left( \Phi_K, \Phi^\ddag_K \right)$ still satisfies condition 1 since the transformation~\eqref{sec_gauge_canonicaltrafo} is a canonical transformation that leaves the antibracket unchanged~\cite{batalinvilkovisky1984b}. Conditions 2 and 3 are in general only maintained if we restrict $\Psi$ to be independent of antifields; antifield-dependent terms in $\Psi$ correspond to adding equation-of-motion terms to the symmetry transformations $\delta_{\xi}$ and mixing between them~\cite{batalinvilkovisky1984b}. If $\Psi$ is properly chosen, condition 4 is also satisfied, and in particular the choice $\Psi = \bar{c}_M \Psi_M\left( \phi_K \right)$ corresponds to imposing the gauge conditions $\Psi_M = 0$ strictly. Finally, the solution to conditions 1 and 2 can in general only be constructed as a formal power series in antifields $S_\text{ext} = \sum_{k=1}^\infty S_\text{ext}^\text{(k)}$ with $S_\text{ext}^\text{(k)}$ of order $k$ in antifields. The lowest-order terms of the extension are fixed~\cite{batalinvilkovisky1983,batalinvilkovisky1984,batalinvilkovisky1985}, and we have
\begin{equation}
\label{sec_gauge_sextlin}
S_\text{ext}^{(1)} = - \int \left[ \delta_c \phi_M \right] \phi_M^\ddag \total x - \int K_M c_M^\ddag \total x \eqend{,}
\end{equation}
where we set $\delta_c \equiv \int c_M(y) \delta_\text{L} / \delta \xi_M(y) \total y \, \delta_\xi$, and $K_M$ is determined such that
\begin{equation}
\label{sec_gauge_kmdef}
\left( \delta_c + \int K_M(y) \frac{\delta_\text{L}}{\delta c_M(y)} \total y \right) \delta_c \phi_N = 0
\end{equation}
holds, up to terms which vanish by the equations of motion for the original fields $\phi_K$. If no such terms arise, one speaks of a closed gauge algebra and the extension $S_\text{ext}^{(1)}$ is sufficient to fulfil condition 1 (as can be checked straightforwardly). Otherwise, one has an open gauge algebra and needs to recursively add terms with higher and higher powers of antifields (starting with the quadratic term). A formal solution always exists~\cite{batalinvilkovisky1985}, and in many specific theories the recursion must terminate because of dimensional constraints (e.g., when all antifields and background fields have positive engineering dimension). The functions $K_M$ determined by~\eqref{sec_gauge_kmdef} are nothing else but the structure functions of the gauge algebra; if the gauge algebra is open (and thus does not form a Lie algebra) the higher-order structure functions are determined from similar conditions involving antifields.

Expanding $S_\text{tot}$ with respect to the grading by antifield number,
\begin{equation}
S_\text{tot} = \sum_{k=0}^\infty S_\text{tot}^{(k)} \eqend{,}
\end{equation}
since the antibracket respects the grading~\eqref{sec_gauge_antibracket_grading} one obtains a corresponding expansion of the BRST differential:
\begin{equation}
\brst = \sum_{k=-1}^\infty \brst^{(k)} \eqend{,} \qquad \brst^{(k)} F = \left( S_\text{tot}^{(k+1)}, F \right) \eqend{.}
\end{equation}
Expanding the nilpotency relation $\brst^2 = 0$, it follows that
\begin{equation}
\label{sec_gauge_brstnilpot_expanded}
0 = \sum_{\ell=-1}^k \brst^{(\ell)} \brst^{(k-\ell-1)} \quad (k = -1,0,\ldots) \eqend{.}
\end{equation}
The lowest-order component $\brst^{(-1)}$ annihilates fields and transforms antifields into the (gauge-fixed) equations of motion of the corresponding field; it is known as the \emph{Koszul--Tate differential} and is nilpotent itself, following from the $k = -1$ term of the expanded nilpotency relation~\eqref{sec_gauge_brstnilpot_expanded}. The next component $\brst^{(0)}$ (called the longitudinal exterior derivative along the gauge orbits) is not a differential in general because the $k = 1$ term of the expanded nilpotency relation~\eqref{sec_gauge_brstnilpot_expanded} reads
\begin{equation}
\left( \brst^{(0)} \right)^2 = - \brst^{(-1)} \brst^{(1)} - \brst^{(1)} \brst^{(-1)} \neq 0 \eqend{.}
\end{equation}
However, for a closed gauge algebra we have $S_\text{tot}^{(2)} = 0$, thus $\brst^{(1)} = 0$ and in this case $\brst^{(0)}$ is also a differential, whose action on fields coincides with the original BRST differential of~\cite{becchirouetstora1976,tyutin1975}, before the introduction of the BV formalism.

Lastly, we note that the BV master equation can be rewritten as a condition for the existence of a divergence-free current. This can be done as follows: one rewrites the antibracket of two functionals as
\begin{equation}
\label{sec_gauge_antibracket_rewrite}
(F,G) = \int \left[ \left( F, \Phi^\ddag_K(x) \right) \left( \Phi^{\vphantom{\ddag}}_K(x), G \right) + \left( G, \Phi^\ddag_K(x) \right) \left( \Phi^{\vphantom{\ddag}}_K(x), F \right) \right] \total x \eqend{.}
\end{equation}
The fulfillment of the BV master equation $\left( S_\text{tot}, S_\text{tot} \right) = 0$ is thus equivalent to
\begin{equation}
\label{sec_gauge_brst_current}
\frac{\delta_\text{R} S_\text{tot}}{\delta \Phi_K(x)} \frac{\delta_\text{L} S_\text{tot}}{\delta \Phi^\ddag_K(x)} = \left( S_\text{tot}, \Phi^\ddag_K(x) \right) \left( \Phi^{\vphantom{\ddag}}_K(x), S_\text{tot} \right) = \nabla_\mu J^\mu(x)
\end{equation}
for some $J^\mu$ constructed locally from fields and antifields, which is nothing else but the Noether current associated to the BRST symmetry~\cite{hollands2008,taslimitehrani2017}.

\begin{example*}
To illustrate the general theory, we consider Yang--Mills theory~\cite{yangmills1954} where the basic field is a bosonic Lie algebra-valued one form, and we take its components $A_\mu^a$ with respect to a fixed basis in a given representation of the (semi-simple) Lie algebra and a fixed basis of one-forms on spacetime. The action is given by
\begin{equation}
S = - \frac{1}{4} \int F_{\mu\nu}^a F^{\mu\nu a} \total x
\end{equation}
with the field strength tensor $F_{\mu\nu}^a \equiv \nabla_\mu A_\nu^a - \nabla_\nu A_\mu^a + \mathi g f_{abc} A_\mu^b A_\nu^c$ and the coupling constant $g$. We have normalised the basis elements such that the Cartan--Killing form is the identity (so that Lie algebra indices $a,b,c,\ldots$ are summed over regardless of their position), and $f_{abc}$ are the totally antisymmetric structure constants in this basis that fulfil the Jacobi identity
\begin{equation}
f_{abs} f_{cds} - f_{acs} f_{bds} + f_{ads} f_{bcs} = 0 \eqend{.}
\end{equation}
The action is invariant under the symmetry transformation
\begin{equation}
\delta_\xi A_\mu^a = \left( D_\mu \xi \right)^a \equiv \nabla_\mu \xi^a + \mathi g f_{abc} A_\mu^b \xi^c
\end{equation}
with a Lie algebra-valued parameter $\xi$ expanded in the same basis, and one checks that the choice
\begin{equation}
K^a = - \frac{1}{2} \mathi g f_{abc} c^b c^c
\end{equation}
satisfies equation~\eqref{sec_gauge_kmdef} without any extra terms. The minimal extension thus reads
\begin{equation}
S_\text{ext} = S_\text{ext}^{(1)} = - \int \left( D^\mu c \right)^a A_\mu^{a\ddag} \total x + \frac{\mathi}{2} g f_{abc} \int c^b c^c c_a^\ddag \total x \eqend{,}
\end{equation}
and adding the non-minimal terms~\eqref{sec_gauge_nonminimal} and choosing a gauge-fixing fermion $\Psi$ that only depends on fields, the total action after the transformation~\eqref{sec_gauge_canonicaltrafo} reads
\begin{splitequation}
S_\text{tot} &= - \frac{1}{4} \int F_{\mu\nu}^a F^{\mu\nu a} \total x + \int \left( D_\mu c \right)^a \frac{\delta_\text{R} \Psi}{\delta A_\mu^a} \total x - \frac{\mathi}{2} g f_{abc} \int c^b c^c \frac{\delta_\text{R} \Psi}{\delta c^a} \total x \\
&\quad+ \int B_a \frac{\delta_\text{R} \Psi}{\delta \bar{c}^a} \total x - \int \left( D^\mu c \right)^a A_\mu^{a\ddag} \total x + \frac{\mathi}{2} g f_{abc} \int c^b c^c c_a^\ddag \total x - \int B_a \bar{c}^\ddag_a \total x \eqend{.}
\end{splitequation}
A suitable choice for $\Psi$ is given by
\begin{equation}
\Psi = \int \bar{c}_a \left( \frac{\xi}{2} B^a - \nabla^\mu A_\mu^a \right) \total x
\end{equation}
with a parameter $\xi \in \mathbb{R}$, and the total action reduces to
\begin{splitequation}
S_\text{tot} &= \frac{1}{2} \int A_\mu^a \left( g^{\mu\nu} \nabla^2 - R^{\mu\nu} - \frac{\xi-1}{\xi} \nabla^\mu \nabla^\nu \right) A_\nu^a \total x + \int \bar{c}_a \nabla^2 c^a \total x \\
&\quad+ \frac{\xi}{2} \int \left( B^a - \frac{1}{\xi} \nabla^\mu A_\mu^a \right) \left( B^a - \frac{1}{\xi} \nabla^\nu A_\nu^a \right) \total x \\
&\quad- \mathi g f_{abc} \int \left[ A_\mu^b A_\nu^c \nabla^\mu A^\nu_a + \left( \nabla^\mu \bar{c}_a \right) A_\mu^b c^c + \frac{\mathi}{4} g f_{ade} A_\mu^b A_\nu^c A^\mu_d A^\nu_e \right] \total x \\
&\quad- \int \left( D^\mu c \right)^a A_\mu^{a\ddag} \total x + \frac{\mathi}{2} g f_{abc} \int c^b c^c c_a^\ddag \total x - \int B_a \bar{c}^\ddag_a \total x \eqend{.}
\end{splitequation}
The antifield-independent part of the free action $S^{(0)}_{\text{tot},0}$ (the first two lines) now involves a differential operator that is invertible and possesses unique retarded and advanced Green's functions for all $\abs{\xi} < \infty$~\cite{froebtaslimitehrani2018}. The construction of quantum Yang--Mills theory in~\cite{hollands2008} was done in Feynman gauge $\xi = 1$.
\end{example*}

\begin{example*}
The second example is $\mathcal{N} = 1$ Super-Yang--Mills theory in flat space~\cite{ferrarazumino1974,salamstrathdee1974,dewitfreedman1975}, which in addition to the Yang--Mills vector boson contains a Majorana spinor $\chi^a$ in the adjoint representation. The action reads
\begin{equation}
S = - \frac{1}{4} \int F_{\mu\nu}^a F^{\mu\nu a} \total x - \frac{1}{2} \int \bar{\chi}^a \gamma^\mu \left( D_\mu \chi \right)^a \total x \eqend{,}
\end{equation}
and is invariant under the gauge transformation
\begin{equation}
\delta^\text{gauge}_\xi A_\mu^a = \left( D_\mu \xi \right)^a \eqend{,} \qquad \delta^\text{gauge}_\xi \chi^a = - \mathi g f_{abc} \xi^b \chi^c
\end{equation}
and the supersymmetry transformation
\begin{equation}
\delta^\text{susy}_\epsilon A_\mu^a = - \bar{\epsilon} \gamma_\mu \chi^a \eqend{,} \qquad \delta^\text{susy}_\epsilon \chi^a = \frac{1}{2} \gamma^\mu \gamma^\nu F_{\mu\nu}^a \epsilon \eqend{,}
\end{equation}
where $\epsilon$ is a constant Grassmann-odd Majorana spinor. To check invariance under the supersymmetry transformation, one needs to use Fierz rearrangement identities~\cite{fierz1937,freedmanvanproeyen}\footnote{Fierz actually attributes these identities to Pauli: ``The content of this 1. section originates from Prof. W. Pauli and I am indebted to him for ceding me his calculations.''~\cite{fierz1937}, footnote 2.} and the Bianchi identities for the field strength tensor; we omit the long but essentially straightforward calculation. Two supersymmetry transformations close into a gauge transformation and a translation plus equation-of-motion terms,
\begin{equations}
\left[ \delta^\text{susy}_2, \delta^\text{susy}_1 \right] A_\mu^a &= - 2 F_{\mu\nu}^a \bar{\epsilon}_1 \gamma^\nu \epsilon_2 = b^\nu \nabla_\nu A_\mu^a + \delta_\xi A_\mu^a \eqend{,} \\
\begin{split}
\left[ \delta^\text{susy}_2, \delta^\text{susy}_1 \right] \chi^a &= b^\nu \nabla_\nu \chi^a + \delta_\xi \chi^a - \frac{1}{4} \left( b^\rho \gamma_\rho + \bar{\epsilon}_1 \gamma^{[\rho} \gamma^{\sigma]} \epsilon_2 \gamma_\rho \gamma_\sigma \right) \gamma^\nu \left( D_\nu \chi \right)^a
\end{split}
\end{equations}
with the translation parameter $b^\nu$ and the field-dependent gauge transformation parameter $\xi^a$ defined by
\begin{equation}
b^\nu \equiv 2 \bar{\epsilon}_1 \gamma^\nu \epsilon_2 \eqend{,} \qquad \xi^a \equiv - b^\nu A_\nu^a \eqend{.}
\end{equation}
(Again, Fierz rearrangement identities are needed to check this.) Therefore, we need to introduce three ghosts: the gauge ghost $c^a$ (a scalar of odd Grassmann parity), the supersymmetry ghost $\theta$ (a constant Majorana spinor of even Grassmann parity), and the translation ghost $\alpha^\mu$ (a constant vector of odd Grassmann parity). The total symmetry transformation is then the sum of these three:
\begin{equation}
\delta_c \phi_K = \delta^\text{gauge}_c \phi_K + \delta^\text{susy}_\theta \phi_K + \alpha^\mu \nabla_\mu \phi_K \eqend{.}
\end{equation}
Equation~\eqref{sec_gauge_kmdef} can be satisfied for $A_\mu^a$ by taking
\begin{equations}[sec_gauge_example_sym_k]
K_{c^a} &= - \frac{\mathi}{2} g f_{abc} c^b c^c + \bar{\theta} \gamma^\rho \theta A_\rho^a + \alpha^\rho \nabla_\rho c^a \eqend{,} \\
K_\theta &= 0 \eqend{,} \\
K_{\alpha^\rho} &= - \bar{\theta} \gamma^\rho \theta \eqend{,}
\end{equations}
but acting on $\chi^a$ an equation-of-motion term remains:
\begin{equation}
\left( \delta_c + \int K_M(y) \frac{\delta_\text{L}}{\delta c_M(y)} \total y \right) \delta_c \chi^a = - \frac{1}{8} \left( 2 \bar{\theta} \gamma^\rho \theta \gamma_\rho + \bar{\theta} \gamma^{[\rho} \gamma^{\sigma]} \theta \gamma_\rho \gamma_\sigma \right) \gamma^\mu \left( D_\mu \chi \right)^a \eqend{.}
\end{equation}
We therefore need to add a term quadratic in antifields (and possibly higher-order terms) to the extended action; one checks that
\begin{equation}
S_\text{ext} = S_\text{ext}^{(1)} + \frac{1}{16} \int \chi_a^\ddag \left( 2 \bar{\theta} \gamma^\rho \theta \gamma_\rho + \bar{\theta} \gamma^{[\rho} \gamma^{\sigma]} \theta \gamma_\rho \gamma_\sigma \right) \bar{\chi}_a^\ddag \total x
\end{equation}
is sufficient to fulfil the BV master equation $\left( S_\text{tot}, S_\text{tot} \right) = 0$, with $S_\text{ext}^{(1)}$ given by~\eqref{sec_gauge_sextlin}.
\end{example*}
\begin{example*}
As a last (somewhat curious) example, we show that one can also treat non-gauge theories in the same framework. Considering a real scalar field $\phi$ with action $S_\phi$, we introduce the antifield $\phi^\ddag$ and define the total action $S_\text{tot} = S_\phi$. The action of the BRST differential reads
\begin{equation}
\brst \phi = 0 \eqend{,} \qquad \brst \phi^\ddag = \frac{\delta_\text{R} S}{\delta \phi} \eqend{,}
\end{equation}
and the assignment of Grassmann parity and ghost numbers is
\begin{equation}
\epsilon(\phi) = 0 \eqend{,} \qquad \epsilon\left( \phi^\ddag \right) = 1 \eqend{,} \qquad g(\phi) = 0 \eqend{,} \qquad g\left( \phi^\ddag \right) = -1 \eqend{.}
\end{equation}
Since there are no fields with positive ghost number, the cohomologies at positive ghost number are empty: $H^g(\brst) = \{0\}$ for $g > 0$, and any function of $\phi$ (and its derivatives) is an element of $H^0(\brst)$. All conditions of the remaining theorems are thus fulfilled.
\end{example*}

In the quantum theory, fields and antifields are treated on the same footing. However, since the antifields are not dynamical but fixed background fields, there are no retarded and advanced Green's functions associated with them, and thus no commutator function. Consequently, they (anti-)commute with all other fields and among themselves,
\begin{equation}
\left[ \Phi_K(f), \Phi^\ddag_L(g) \right]_{\star_\hbar} = 0 = \left[ \Phi^\ddag_K(f), \Phi^\ddag_L(g) \right]_{\star_\hbar} \eqend{,}
\end{equation}
can be taken out of normal-ordered products (where the hat denotes omission),
\begin{splitequation}
&\normord{\Phi_{K_1} \cdots \Phi^\ddag_{K_\ell} \cdots \Phi_{K_n}}_G(f_1 \otimes \cdots \otimes f_n) = (-1)^{(1+\epsilon_{K_\ell}) \sum_{k=1}^{\ell-1} \epsilon_{K_k}} \\
&\quad\times \Phi^\ddag_{K_\ell}(f_\ell) \star_\hbar \normord{\Phi_{K_1} \cdots \hat{\Phi}^\ddag_{K_\ell} \cdots \Phi_{K_n}}_G(f_1 \otimes \cdots \otimes \hat{f}_\ell \otimes \cdots \otimes f_n) \eqend{,}
\end{splitequation}
and most importantly also out of time-ordered products:
\begin{splitequation}
\label{sec_gauge_antifields_timeordered}
&\mathcal{T}_n\left( F_1 \otimes \cdots \otimes F_n \right) = (-1)^{\epsilon_{F_\ell} \sum_{k=1}^{\ell-1} \epsilon_{F_k}} F_\ell \star_\hbar \mathcal{T}_{n-1}\left( F_1 \otimes \cdots \otimes \hat{F}_\ell \otimes \cdots \otimes F_n \right) \\
&\qquad= (-1)^{\epsilon_{F_\ell} \sum_{k=\ell+1}^{n} \epsilon_{F_k}} \mathcal{T}_{n-1}\left( F_1 \otimes \cdots \otimes \hat{F}_\ell \otimes \cdots \otimes F_n \right) \star_\hbar F_\ell
\raisetag{1.6em}
\end{splitequation}
if $F_\ell$ only contains antifields.

We now study the anomalous Ward identities for gauge theories, and first need an analogue of Theorem~\ref{thm_classanom}. For this, we take any derivation $D_\mathcal{F}$ acting on $\mathcal{F}$ by the antibracket with an element $Q \in \mathcal{F}$ at most of second order in fields (or antifields), whose action on basic fields and antifields is given by
\begin{equation}
\label{thm_classanom_bv_deract}
D_\mathcal{F} \Phi_K(x) = - \frac{\delta_\text{R} Q}{\delta \Phi^\ddag_K(x)} \eqend{,} \qquad D_\mathcal{F} \Phi^\ddag_K(x) = \frac{\delta_\text{R} Q}{\delta \Phi_K(x)} \eqend{.}
\end{equation}
Since the right-hand sides are at most linear in fields (or antifields), we can obtain a corresponding differential $D$ on the free-field algebra $\overline{\mathfrak{A}}_0$ by defining its action on the generators $\Phi_K(f),\Phi^\ddag_K(f)$ of $\mathfrak{A}_0$ to be given by~\eqref{thm_classanom_bv_deract} with the right-hand sides identified as (linear combinations of) generators of $\mathfrak{A}_0$, and then extending it to general $A \in \overline{\mathfrak{A}}_0$ by linearity and a graded Leibniz rule. We then obtain
\begin{theorem}
\label{thm_classanom_bv}
Given a derivation $D$ acting on the free-field algebra $\overline{\mathfrak{A}}_0$, obtained from a derivation $D_\mathcal{F}$ acting on $\mathcal{F}$ by the antibracket with an element $Q \in \mathcal{F}$ at most of second order in fields (or antifields) as described above, the following holds for the maps $\mathcal{D}_n$ appearing in the anomalous Ward identity~\eqref{thm_anomward_identity}:
\begin{enumerate}
\item At first order, we have
\begin{equation*} \mathcal{D}_1\left( F \right) = (Q,F) \eqend{.} \end{equation*}
\item At second order, we have
\begin{splitequation*} \mathcal{D}_2\left( F \otimes F \right) &= \left( Q, R^\text{cl}_1\left( F; F \right) \right) - R^\text{cl}_1\left[ F; \left( Q, F \right) \right] - R^\text{cl}_1\left[ \left( Q, F \right); F \right] \\ &= \iint \frac{\delta_\text{R} F}{\delta \Phi_K(x)} \left[ G^\text{ret}_{KL}(x,y) + G^\text{adv}_{KL}(x,y) \right] \left( \frac{\delta_\text{L} Q}{\delta \Phi_L(y)}, F \right) \total x \total y \eqend{.} \end{splitequation*}
\item $\mathcal{D}_k\left( F^{\otimes k} \right) = 0$ for all $k \geq 3$.
\end{enumerate}
\end{theorem}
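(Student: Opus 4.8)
The plan is to mirror the proof of Theorem~\ref{thm_classanom} as closely as possible, exploiting that the derivation of the master recursion there never used the specific nature of $D$ beyond the fact that it is a derivation satisfying the anomalous Ward identity~\eqref{thm_anomward_identity}. Since the present $D$ is a derivation by construction and Theorem~\ref{thm_anomward} furnishes the corresponding identity, I would apply $D$ to the retarded product $\mathcal{R}\left( \mathe_\otimes^{\alpha F}; F \right)$ exactly as before, use the graded Leibniz rule together with~\eqref{thm_anomward_identity}, integrate by parts in $\alpha$, and pass to the classical limit. This reproduces the recursion~\eqref{proof_classanom_actionretarded} verbatim, with only two labels reinterpreted: the operator $D_\text{cl}$ on the left is now $D_\mathcal{F} = (Q, \cdot\,)$, whereas the Poisson bracket on the right is \emph{unchanged}, since it originates from the classical limit of the dynamical commutator $\frac{\mathi}{\hbar}[\mathcal{R}, \mathcal{R}]_{\star_\hbar}$ and is insensitive to the choice of $D$.

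For part~(1), reading off the $n=0$ term gives $\mathcal{D}_1(F) = D_\mathcal{F} F = (Q,F)$ immediately from~\eqref{thm_classanom_bv_deract}. For part~(2), the $n=1$ term yields $\mathcal{D}_2(F\otimes F) = D_\mathcal{F} R^\text{cl}_1(F;F) - 2 R^\text{cl}_1(F; D_\mathcal{F} F) + \{F, D_\mathcal{F} F\}$, and I would convert this to the stated first line using the first-order case of the GLZ relation~\eqref{sec_aqft_rcl_glz}, namely $\{F, (Q,F)\} = R^\text{cl}_1(F;(Q,F)) - R^\text{cl}_1((Q,F);F)$. The second line then follows from the explicit formula~\eqref{sec_aqft_rcl_firstorder} for $R^\text{cl}_1$, the propagator symmetry $G^\text{ret}_{MN}(x,y) = (-1)^{\epsilon_M\epsilon_N} G^\text{adv}_{NM}(y,x)$, and the fact that the antibracket commutes with functional derivatives and obeys a graded Leibniz rule~\eqref{sec_gauge_antibracket_funcder}--\eqref{sec_gauge_antibracket_leibniz}; this is the direct analogue of the corresponding manipulation in Theorem~\ref{thm_classanom}, with $\{Q_\text{cl},\cdot\,\}$ traded for $(Q,\cdot\,)$.

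Part~(3) is the substantive step and proceeds by induction in $n$, exactly as for Theorem~\ref{thm_classanom}: shifting the summation indices in~\eqref{proof_classanom_actionretarded} expresses $\mathcal{D}_{n+1}(F^{\otimes(n+1)})$ through $D_\mathcal{F} R^\text{cl}_n(F^{\otimes n};F)$ and the lower maps $\mathcal{D}_k$, which vanish for $3 \leq k \leq n$ by the inductive hypothesis, so that everything reduces to showing that an expression $\mathcal{N}_{n+1}$ built only from $\mathcal{D}_1$ and $\mathcal{D}_2$ vanishes. To evaluate $D_\mathcal{F} R^\text{cl}_n(F^{\otimes n};F) = (Q, R^\text{cl}_n(F^{\otimes n};F))$ I would expand the antibracket~\eqref{sec_gauge_antibracket_def}, act with the field and antifield derivatives on the retarded product via field independence~\eqref{sec_aqft_rcl_funcder} (which holds verbatim for antifield derivatives, since antifields carry no commutator function and enter only through the functional arguments), and then repeatedly apply GLZ~\eqref{sec_aqft_rcl_glz} and the factorisation~\eqref{sec_aqft_rcl_factorisation}. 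The second-order assumption on $Q$ guarantees that both $\delta_\text{R} Q/\delta\Phi_K$ and $\delta_\text{R} Q/\delta\Phi^\ddag_K$ are at most linear in the dynamical fields, which is precisely the condition under which the collapse relation~\eqref{sec_aqft_rcl_glinear_rel} applies and lets the higher retarded products telescope down.

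The main obstacle is the final cancellation. In Theorem~\ref{thm_classanom} the surviving terms carried two commutator functions $\Delta_{MN}\Delta_{KL}$ contracted against a doubly field-differentiated $\delta_\text{R}^2 Q_\text{cl}/\delta\phi_K\delta\phi_M$, and they cancelled by the graded symmetry of that second derivative together with the parity constraint $\Delta_{KL}=0$ for $\epsilon_K\neq\epsilon_L$. Here the outer contraction is instead the \emph{local} field--antifield pairing of the antibracket rather than a second factor of $\Delta$, and the second derivatives of $Q$ now appear in the three types $\delta^2 Q/\delta\Phi\delta\Phi$, $\delta^2 Q/\delta\Phi\delta\Phi^\ddag$ and $\delta^2 Q/\delta\Phi^\ddag\delta\Phi^\ddag$, each with its own graded symmetry from~\eqref{sec_aqft_leftrightder_props}; moreover the Grassmann-odd character of the antibracket distributes extra signs throughout. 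I expect the cancellation to persist by the same mechanism---pairwise cancellation in the double sum, and the antisymmetry factor $(n+1-2k)$ killing the single sum after $k\to n+1-k$---now powered by the graded symmetry~\eqref{sec_gauge_antibracket_gradsym} of the antibracket in place of that of the Poisson bracket. Verifying that every sign conspires correctly, with the Grassmann-odd antibracket interleaved with the dynamical Poisson bracket, is the step that demands genuine care.
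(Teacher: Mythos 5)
Your proposal is correct and takes exactly the route the paper itself takes: the paper's entire proof is the remark that the argument is completely analogous to that of Theorem~\ref{thm_classanom}, with the Poisson-bracket properties replaced by the facts that the antibracket commutes with functional derivatives~\eqref{sec_gauge_antibracket_funcder} and satisfies the graded Leibniz rule~\eqref{sec_gauge_antibracket_leibniz} --- precisely the substitutions you identify. Your write-up in fact supplies more detail (the first-order GLZ conversion, the role of the collapse relation~\eqref{sec_aqft_rcl_glinear_rel} via the at-most-quadratic assumption on $Q$, and the sign bookkeeping in the final cancellation) than the paper, which omits all of it.
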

\begin{proof}
The proof is completely analogous to the proof of Theorem~\ref{thm_classanom}, using that the antibracket commutes with functional derivatives~\eqref{sec_gauge_antibracket_funcder} and factorises~\eqref{sec_gauge_antibracket_leibniz}; we therefore omit the details.
\end{proof}

For gauge theories, this result is needed for the free BRST differential $\brst_0$, which is obtained by taking the antibracket with the free action $S_0$ (which is quadratic in fields and antifields): $\brst_0 F = (S_0,F)$. Expanding the free action with respect to antifield number $S_0 = S_0^{(0)} + S_0^{(1)} + S_0^{(2)}$, we obtain an expansion of the free BRST differential $\brst_0$ into the free Koszul--Tate differential $\brst_0^{(-1)}$ and the operators $\brst_0^{(0)}$ and $\brst_0^{(1)}$, acting according to $\brst_0^{(k)} F = \left( S_0^{(k+1)}, F \right)$. We would like to define corresponding derivations, denoted by $\st_0$, $\st_0^{(k)}$ and acting on $\overline{\mathfrak{A}}_0$ according to~\eqref{thm_classanom_bv_deract}. This is obviously unproblematic for basic fields (and antifields), but in order for $\st_0$ to act consistently on $\overline{\mathfrak{A}}_0$ it must be compatible with the (anti-)commutation relations~\eqref{sec_aqft_commutator}, which leads to conditions on the commutator function $\Delta_{KL}(x,y)$. These conditions can be seen as Ward identities in the free theory, and can in fact be easily derived from the BRST invariance of the free action: Expanding the BV master equation for the free action $(S_0,S_0) = 0$ in antifield number, at first order we obtain
\begin{equation}
0 = 2 \left( S_0^{(1)}, S_0^{(0)} \right) = - 2 \int \frac{\delta_\text{R} S_0^{(1)}}{\delta \Phi^\ddag_K(x)} \frac{\delta_\text{L} S_0^{(0)}}{\delta \Phi_K(x)} \total x = - 2 \int \frac{\delta_\text{R} S_0^{(1)}}{\delta \Phi^\ddag_K(x)} P_{KL} \Phi_L(x) \total x
\end{equation}
since $S_0^{(0)}$ does not depend on antifields. We write $S_0^{(1)}$ in the form
\begin{equation}
S_0^{(1)} = - \int \left( Q_{MN} \Phi_N \right) \Phi^\ddag_M \total x
\end{equation}
with a differential operator $Q_{MN}$. Since the action was assumed to be Grassmann even and the Grassmann parity of the antifield is opposite to the one of the corresponding field, it follows that $Q_{MN} = 0$ if $\epsilon_M = \epsilon_N$. We then obtain
\begin{equation}
\int \Phi_N Q^*_{KN} P_{KM} \Phi_M \total x = 0 \eqend{,}
\end{equation}
and since this must hold for all $\Phi$, the symmetric part of the differential operator must vanish:
\begin{equation}
Q^*_{KN} P_{KM} + (-1)^{\epsilon_M \epsilon_N} P^*_{KN} Q_{KM} = 0 \eqend{,}
\end{equation}
where we used that $P_{KL} = 0$ if $\epsilon_K \neq \epsilon_L$. We multiply this condition from the left by $\delta_\text{L} F/\delta \Phi_P(y) G^\text{ret}_{NP}(x,y)$ and from the right by $G^\text{adv}_{MQ}(x,z) \delta_\text{L} G/\delta \Phi_Q(z)$ for some $F,G \in \mathcal{F}$ and integrate over $x$. Because of the support properties of the retarded and advanced Green's functions, the integration domain is compact and we can integrate by parts to obtain
\begin{equation}
\frac{\delta_\text{L} F}{\delta \Phi_P(y)} \left[ Q_{QN}(z) G^\text{ret}_{NP}(z,y) + (-1)^{\epsilon_M \epsilon_P} Q_{PM}(y) G^\text{adv}_{MQ}(y,z) \right] \frac{\delta_\text{L} G}{\delta \Phi_Q(z)} = 0 \eqend{.}
\end{equation}
Since $F$ and $G$ are arbitrary, the combination in brackets must vanish, and since $Q_{MN}$ vanishes unless $\Phi_M$ and $\Phi_N$ have opposite Grassmann parity, we get
\begin{equation}
\label{sec_gauge_compatibilitycond}
Q_{QN}(z) G^\text{ret}_{NP}(z,y) = - Q_{PM}(y) G^\text{adv}_{MQ}(y,z) \eqend{,}
\raisetag{1.2em}
\end{equation}
and using the relation~\eqref{sec_aqft_retadv_rel} between retarded and advanced Green's functions we also have
\begin{equation}
\label{sec_gauge_compatibilitycond2}
Q_{PM}(y) G^\text{ret}_{QM}(z,y) = Q_{QN}(z) G^\text{adv}_{PN}(y,z) \eqend{.}
\end{equation}
These two conditions are the free-theory Ward identities that must be fulfilled for a consistent quantum gauge theory, and they are also used in the proof of the next theorem~\ref{thm_freebrstward}. It has been shown in~\cite{froebtaslimitehrani2018} that they are fulfilled for the retarded and advanced Green's functions [and thus for the commutator~\eqref{sec_aqft_delta_def}] in general linear covariant gauges for vector and tensor fields and the corresponding ghosts. However, in order to perform the completion of $\mathfrak{A}_0$ as explained in section~\ref{sec_aqft} and to obtain the on-shell algebra, these conditions must in principle also be fulfilled for the two-point functions. This is much harder to prove; a construction of suitable two-point functions for Yang--Mills theory in Feynman gauge was given in~\cite{hollands2008,hollands_rev}.

For the corresponding anomalous Ward identities, we obtain:
\begin{theorem}
\label{thm_freebrstward}
For the free Koszul--Tate differential $\st_0^{(-1)}$, the classical part of the anomalous Ward identity is given by $\mathcal{D}_2\left( F \otimes F \right) = (F,F)$. For the remaining parts of the free BRST differential $\st_0^{(0)}$ and $\st_0^{(1)}$ we have $\mathcal{D}_2\left( F \otimes F \right) = 0$.
\end{theorem}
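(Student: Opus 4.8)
The plan is to apply Theorem~\ref{thm_classanom_bv} separately to the three components of the free BRST differential, identifying its generating element $Q$ with the corresponding part of the free action: $Q = S_0^{(0)}$ for the Koszul--Tate piece $\st_0^{(-1)}$, $Q = S_0^{(1)}$ for $\st_0^{(0)}$, and $Q = S_0^{(2)}$ for $\st_0^{(1)}$. Each of these is quadratic in the fields and antifields, so the hypothesis of Theorem~\ref{thm_classanom_bv} holds: $\mathcal{D}_1(F) = (Q,F)$ reproduces the respective classical differential, $\mathcal{D}_k(F^{\otimes k}) = 0$ for $k \geq 3$, and only $\mathcal{D}_2$ requires computation, through the second-order formula
\begin{equation*}
\mathcal{D}_2(F \otimes F) = \iint \frac{\delta_\text{R} F}{\delta \Phi_K(x)} \left[ G^\text{ret}_{KL}(x,y) + G^\text{adv}_{KL}(x,y) \right] \left( \frac{\delta_\text{L} Q}{\delta \Phi_L(y)}, F \right) \total x \total y \eqend{.}
\end{equation*}

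For the Koszul--Tate piece, $Q = S_0^{(0)} = \frac{1}{2}\int \Phi_K P_{KL} \Phi_L \total x$ depends only on the fields, so $\delta_\text{L} Q/\delta \Phi_L(y) = (P_{LM}\Phi_M)(y)$ and, since $(\Phi_M(y),F) = \delta_\text{L} F/\delta \Phi^\ddag_M(y)$ directly from the antibracket~\eqref{sec_gauge_antibracket_def}, the inner antibracket becomes $P_{LM}(y)\,\delta_\text{L} F/\delta \Phi^\ddag_M(y)$. Substituting this, I would integrate by parts in $y$ — unproblematic because $\delta_\text{L} F/\delta \Phi^\ddag_M$ is compactly supported for a local smeared $F$ — to move $P_{LM}$ onto $G^\text{ret/adv}_{KL}(x,y)$, and then use the defining relations $P^* G^\text{ret/adv} = \delta$ of~\eqref{sec_aqft_retadv}. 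Each Green's function then collapses to a delta, summing to a factor of two, and I obtain $\mathcal{D}_2(F\otimes F) = 2\int (\delta_\text{R} F/\delta \Phi_K)(\delta_\text{L} F/\delta \Phi^\ddag_K) \total x$. The final step is to recognise this as $(F,F)$: for bosonic $F$ the two terms of the antibracket coincide, which one checks directly from the graded exchange rules between left and right derivatives~\eqref{sec_aqft_leftrightder_props}.

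The two remaining pieces vanish by distinct mechanisms. For $\st_0^{(1)}$, the element $Q = S_0^{(2)}$ is the antifield-number-two part of a globally quadratic free action, hence purely quadratic in antifields with field-independent coefficients; therefore $\delta_\text{L} Q/\delta \Phi_L(y) = 0$ and the formula gives $\mathcal{D}_2 = 0$ at once. For $\st_0^{(0)}$, with $Q = S_0^{(1)} = -\int (Q_{MN}\Phi_N)\Phi^\ddag_M \total x$, the derivative $\delta_\text{L} Q/\delta \Phi_L(y)$ is linear in the antifields, and the inner antibracket $(\,\cdot\,,F)$ — using $(\Phi^\ddag_M(w),F) = -\,\delta_\text{L} F/\delta \Phi_M(w)$ — produces a term of the schematic form $Q_{KL}(y)\,\delta_\text{L} F/\delta \Phi_K(y)$. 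Substituting, the integrand carries the combination $G^\text{ret}_{KL}(x,y)\,Q_{\cdot L}(y)$ together with its advanced counterpart, and I would invoke the free-theory Ward identities~\eqref{sec_gauge_compatibilitycond}--\eqref{sec_gauge_compatibilitycond2}, which relate $Q\cdot G^\text{ret}$ to $-\,Q\cdot G^\text{adv}$ with the two indices and the two spacetime arguments simultaneously interchanged. Since the factors $\delta_\text{R} F/\delta \Phi(x)$ and $\delta_\text{L} F/\delta \Phi(y)$ are derivatives of one and the same bosonic $F$, and hence symmetric under this interchange, the retarded contribution equals minus the advanced one and the sum vanishes.

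The main obstacle is the sign and index bookkeeping in the $\st_0^{(0)}$ case: one must match the integrand precisely to the form of the compatibility conditions~\eqref{sec_gauge_compatibilitycond}--\eqref{sec_gauge_compatibilitycond2} and verify that the graded symmetry of the two $F$-derivatives introduces no spurious sign. Here the relevant parity selection rules — $Q_{MN} = 0$ unless $\epsilon_M \neq \epsilon_N$ and $G_{KL} = 0$ unless $\epsilon_K = \epsilon_L$ — ensure that the swap is sign-neutral, in the same spirit as the cancellation at the end of the proof of Theorem~\ref{thm_classanom}. By comparison, the Koszul--Tate computation is routine once the integration by parts is set up, and the $\st_0^{(1)}$ case is immediate.
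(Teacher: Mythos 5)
Your proof is correct and takes essentially the same route as the paper's own: apply Theorem~\ref{thm_classanom_bv} with $Q = S_0^{(k+1)}$ for each piece, integrate by parts so the differential operator hits the Green's functions (collapsing them to deltas via~\eqref{sec_aqft_retadv} in the Koszul--Tate case, giving the factor of two), note that $\delta_\text{L} S_0^{(2)}/\delta\Phi_L = 0$ for $\st_0^{(1)}$, and for $\st_0^{(0)}$ symmetrise the two $F$-derivatives and cancel using the parity selection rules together with the compatibility conditions~\eqref{sec_gauge_compatibilitycond}--\eqref{sec_gauge_compatibilitycond2}. The only cosmetic difference is that you spell out the identification $2\int \frac{\delta_\text{R} F}{\delta \Phi_K(x)} \frac{\delta_\text{L} F}{\delta \Phi^\ddag_K(x)} \total x = (F,F)$ for bosonic $F$, which the paper states without comment.
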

\begin{remark*}
The anomalous Ward identity introduced by Hollands~\cite{hollands2008} for Yang--Mills theories can be reformulated as the statement that $\mathcal{D}_2\left( F \otimes F \right) = (F,F)$ for the full (free) BRST differential $\st_0 = \st_0^{(-1)} + \st_0^{(0)} + \st_0^{(1)}$, while the anomalous Master Ward Identity of Fredenhagen and Rejzner~\cite{fredenhagenrejzner2013,rejzner2015} for general gauge theories in the BV framework is the same result for the free Koszul--Tate differential only. We see that both results are compatible, and in particular no additional terms arise for open gauge algebras.
\end{remark*}
\begin{proof}
We use the explicit formula for the second-order term given in Theorem~\ref{thm_classanom_bv}. Note that since antifields are not dynamic, and consequently have no retarded or advanced propagator, one only sums over fields in the formulas given there. In the formulas of Theorem~\ref{thm_classanom_bv}, we need the left derivative of $S_0^{(k)}$ with respect to fields, which vanishes for $S_0^{(2)}$ such that $\mathcal{D}_2\left( F \otimes F \right) = 0$ for $\st_0^{(1)}$. For the free Koszul--Tate differential $\st_0^{(-1)}$ we calculate
\begin{equation}
\left( \frac{\delta_\text{L} S_0^{(0)}}{\delta \Phi_L(y)}, F \right) = \left( P_{LM}(y) \Phi_M(y), F \right) = P_{LM}(y) \frac{\delta_\text{L} F}{\delta \Phi^\ddag_M(y)} \eqend{,}
\end{equation}
and for $\st_0^{(0)}$ we obtain
\begin{equation}
\left( \frac{\delta_\text{L} S_0^{(1)}}{\delta \Phi_L(y)}, F \right) = - \left( Q^*_{ML}(y) \Phi_M^\ddag(y), F \right) = Q^*_{ML}(y) \frac{\delta_\text{L} F}{\delta \Phi_M(y)} \eqend{.}
\end{equation}
Since $F$ is compactly supported, we can integrate by parts, and using that retarded and advanced Green's functions are solutions to the inhomogeneous equations of motion~\eqref{sec_aqft_retadv}, we obtain
\begin{equation}
\mathcal{D}_2\left( F \otimes F \right) = 2 \int \frac{\delta_\text{R} F}{\delta \Phi_K(x)} \frac{\delta_\text{L} F}{\delta \Phi^\ddag_K(x)} \total x = (F,F)
\end{equation}
for the Koszul--Tate differential $\st_0^{(-1)}$. For $\st_0^{(0)}$, we integrate by parts to obtain
\begin{splitequation}
&\mathcal{D}_2\left( F \otimes F \right) = \iint \frac{\delta_\text{R} F}{\delta \Phi_K(x)} Q_{ML}(y) \left[ G^\text{ret}_{KL}(x,y) + G^\text{adv}_{KL}(x,y) \right] \frac{\delta_\text{L} F}{\delta \Phi_M(y)} \total x \total y \\
&\quad= \frac{1}{2} \iint \frac{\delta_\text{R} F}{\delta \Phi_K(x)} \frac{\delta_\text{L} F}{\delta \Phi_M(y)} \bigg[ Q_{ML}(y) G^\text{ret}_{KL}(x,y) + Q_{ML}(y) G^\text{adv}_{KL}(x,y) \\
&\hspace{6em}+ (-1)^{\epsilon_K \epsilon_M + \epsilon_K + \epsilon_M} Q_{KL}(x) \left[ G^\text{ret}_{ML}(y,x) + G^\text{adv}_{ML}(y,x) \right] \bigg] \total x \total y \eqend{,}
\end{splitequation}
where in the second step we commuted the two functional derivatives (using that $F$ is bosonic), changed left into right derivatives according to the relations~\eqref{sec_aqft_leftrightder_props} and renamed indices and integration variables. Since $Q_{KL}$ vanishes unless $\Phi_K$ and $\Phi_L$ have opposite Grassmann parity and the retarded and advanced Green's functions vanish unless $\Phi_M$ and $\Phi_L$ have the same Grassmann parity, the terms in the second line reduce to
\begin{equation}
- Q_{KL}(x) \left[ G^\text{ret}_{ML}(y,x) + G^\text{adv}_{ML}(y,x) \right] \eqend{,}
\end{equation}
and then the terms in brackets cancel according to the compatibility conditions~\eqref{sec_gauge_compatibilitycond} and~\eqref{sec_gauge_compatibilitycond2}, such that $\mathcal{D}_2\left( F \otimes F \right) = 0$ follows for this case.
\end{proof}
\begin{remark*}
Since $\left( S_0^{(0)}, S_0^{(0)} \right) = 0$, the anomalous Ward identity for the Koszul--Tate differential can be written in the form
\begin{equation}
\label{remark_freebrstward_koszultateanomward}
\st_0^{(-1)} \mathcal{T}\left[ \exp_\otimes\left( \frac{\mathi}{\hbar} F \right) \right] = \frac{\mathi}{\hbar} \mathcal{T}\left[ \left[ \frac{1}{2} \left( S_0^{(0)} + F, S_0^{(0)} + F \right) + \mathcal{A}^{(0)}\left( \mathe_\otimes^F \right) \right] \otimes \exp_\otimes\left( \frac{\mathi}{\hbar} F \right) \right] \eqend{,}
\end{equation}
and the anomalous Ward identity for the full free BRST differential $\st_0$ is
\begin{splitequation}
\label{remark_freebrstward_brstanomward}
\st_0 \mathcal{T}\left[ \exp_\otimes\left( \frac{\mathi}{\hbar} F \right) \right] = \frac{\mathi}{\hbar} \mathcal{T}\left[ \left[ \frac{1}{2} \left( S_0 + F, S_0 + F \right) + \mathcal{A}\left( \mathe_\otimes^F \right) \right] \otimes \exp_\otimes\left( \frac{\mathi}{\hbar} F \right) \right] \eqend{,}
\end{splitequation}
with the antifield-independent part of the free action $S_0^{(0)}$ replaced by the full free action $S_0$ and the anomaly $\mathcal{A}^{(0)}$ replaced by $\mathcal{A}$. However, time-ordered products are only well-defined for functionals with compact support, and both the free action $S_0$ and the antibracket involve an integration over the whole spacetime. Therefore, $\left( S_0 + F, S_0 + F \right)$ is only a notationally convenient shorthand for $2 \brst_0 F + (F,F)$, which is compactly supported whenever $F$ is, and the same applies in the following for all similar expressions.
\end{remark*}

The anomalous terms $\mathcal{A}$ appearing in the anomalous Ward identities are highly constrained. It was first shown by Wess and Zumino~\cite{wesszumino1971} for Yang--Mills thoeory that the gauge structure of the theory is reflected in the anomaly, i.e. that potential anomalies have to satisfy a consistency condition. It was realised soon after that this condition can be reformulated using the BRST differential, and that it follows from the nilpotency of that differential~\cite{becchirouetstora1976}. Namely, we have:
\begin{theorem}
\label{thm_freebrstward_cons}
For the Koszul--Tate differential $\st_0^{(-1)}$, the anomaly satisfies the consistency condition (in the sense of generating functionals)
\begin{equation*}
\left( S_0^{(0)} + F, \mathcal{A}^{(0)}\left[ \mathe_\otimes^F \right] \right) = \frac{1}{2} \mathcal{A}^{(0)}\left[ \left( S_0^{(0)} + F, S_0^{(0)} + F \right) \otimes \mathe_\otimes^F \right] + \mathcal{A}^{(0)}\left[ \mathcal{A}^{(0)}\left[ \mathe_\otimes^F \right] \otimes \mathe_\otimes^F \right] \eqend{,}
\end{equation*}
and for the full free BRST differential $\st_0$ the anomaly satisfies the analoguous condition with the antifield-independent part of the free action $S_0^{(0)}$ replaced by the full free action $S_0$ and the anomaly $\mathcal{A}^{(0)}$ replaced by $\mathcal{A}$.
\end{theorem}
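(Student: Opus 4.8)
The plan is to exploit the nilpotency of the lifted free Koszul--Tate (resp.\ full free BRST) differential together with the anomalous Ward identity in the compact form~\eqref{remark_freebrstward_koszultateanomward}. Since $\st_0^{(-1)}$ acts on the generators of $\mathfrak{A}_0$ by~\eqref{thm_classanom_bv_deract} with $Q = S_0^{(0)}$ and is extended as a graded derivation, its square is again a derivation that annihilates all generators (the Koszul--Tate action sends fields to zero and antifields to $P_{KL}\Phi_L$, which is then annihilated), so $(\st_0^{(-1)})^2 = 0$ follows from the classical nilpotency [$k=-1$ term of~\eqref{sec_gauge_brstnilpot_expanded}] together with the compatibility conditions~\eqref{sec_gauge_compatibilitycond}; for $\st_0$ one uses the free BV master equation in the same way. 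I would therefore apply the differential twice to $\mathcal{T}[\exp_\otimes(\tfrac{\mathi}{\hbar}F)]$ and read off that the result must vanish.

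The second ingredient is a one-insertion version of the anomalous Ward identity~\eqref{thm_anomward_identity}, obtained by polarisation: substituting $F \to F + \alpha H$ and applying $\tfrac{\hbar}{\mathi}\partial_\alpha|_{\alpha=0}$ gives, for any fixed $H \in \mathcal{F}$,
\[
\st_0^{(-1)}\mathcal{T}\left[ H \otimes \mathe_\otimes^{\frac{\mathi}{\hbar}F} \right] = \mathcal{T}\left[ \hat{\mathcal{D}}\left( H \otimes \mathe_\otimes^F \right) \otimes \mathe_\otimes^{\frac{\mathi}{\hbar}F} \right] + \frac{\mathi}{\hbar} \mathcal{T}\left[ \hat{\mathcal{D}}\left( \mathe_\otimes^F \right) \otimes H \otimes \mathe_\otimes^{\frac{\mathi}{\hbar}F} \right] \eqend{,}
\]
with $\hat{\mathcal{D}} = \mathcal{D} + \mathcal{A}^{(0)}$ and the classical part known explicitly, $\mathcal{D}(\mathe_\otimes^F) = \tfrac{1}{2}(S_0^{(0)}+F,S_0^{(0)}+F)$ and $\mathcal{D}(H\otimes\mathe_\otimes^F) = (S_0^{(0)}+F,H)$, by Theorems~\ref{thm_freebrstward} and~\ref{thm_classanom_bv} (using $\mathcal{D}_k = 0$ for $k\geq 3$). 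Applying this with $H = \hat{\mathcal{D}}(\mathe_\otimes^F)$ to the right-hand side of~\eqref{remark_freebrstward_koszultateanomward} and imposing $(\st_0^{(-1)})^2 = 0$ yields
\[
\mathcal{T}\left[ \hat{\mathcal{D}}\left( \hat{\mathcal{D}}(\mathe_\otimes^F) \otimes \mathe_\otimes^F \right) \otimes \mathe_\otimes^{\frac{\mathi}{\hbar}F} \right] + \frac{\mathi}{\hbar} \mathcal{T}\left[ \hat{\mathcal{D}}(\mathe_\otimes^F) \otimes \hat{\mathcal{D}}(\mathe_\otimes^F) \otimes \mathe_\otimes^{\frac{\mathi}{\hbar}F} \right] = 0 \eqend{.}
\]
I would then simplify the single-insertion term with the Leibniz/commutation rules~\eqref{sec_gauge_antibracket_funcder} and, crucially, the graded Jacobi identity~\eqref{sec_gauge_antibracket_jacobi}, which makes the purely classical piece $(S_0^{(0)}+F,\tfrac{1}{2}(S_0^{(0)}+F,S_0^{(0)}+F))$ vanish and leaves exactly $(S_0^{(0)}+F,\mathcal{A}^{(0)}) + \tfrac{1}{2}\mathcal{A}^{(0)}[(S_0^{(0)}+F,S_0^{(0)}+F)\otimes\mathe_\otimes^F] + \mathcal{A}^{(0)}[\mathcal{A}^{(0)}\otimes\mathe_\otimes^F]$, i.e.\ the combination appearing in the statement.

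The hard part will be the double-insertion term $\tfrac{\mathi}{\hbar}\mathcal{T}[\hat{\mathcal{D}}(\mathe_\otimes^F)\otimes\hat{\mathcal{D}}(\mathe_\otimes^F)\otimes\mathe_\otimes^{\frac{\mathi}{\hbar}F}]$: it is a genuine two-factor time-ordered product, whereas the quadratic anomaly $\mathcal{A}^{(0)}[\mathcal{A}^{(0)}\otimes\mathe_\otimes^F]$ in the target is a single $\mathcal{F}$-insertion, and the two must be matched exactly (the relation holds at all orders in $\hbar$, so a mere classical limit would drop the $\mathcal{O}(\hbar^2)$ quadratic term). The apparent $\mathcal{O}(\hbar^{-1})$ behaviour of this term — absent from the single-insertion term, which is $\mathcal{O}(\hbar)$ — is itself a strong constraint, and its resolution is essentially the content of the consistency condition. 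I would handle it by decomposing the double insertion through the connected time-ordered products~\eqref{sec_aqft_tc_def}--\eqref{sec_aqft_tc_expanded}: the disconnected part factorises through the $\star_\hbar$-product (and, away from the total diagonal, through causal factorisation exactly as in the proof of Theorem~\ref{thm_anomward}), recombining with the reorganised classical terms, while the connected contact part, controlled in $\hbar$ by Theorem~\ref{thm_tc_hbar} and localised on the total diagonal by the support property~5 of $\hat{\mathcal{D}}$, supplies the quadratic term with the correct sign. Proceeding by induction on the order in $F$ and using that the lower-order $\mathcal{A}^{(0)}_k$ are already determined, the whole relation reduces to the single-insertion form $\mathcal{T}_1[\mathcal{C}(\mathe_\otimes^F)] = 0$; injectivity of $\mathcal{T}_1 = \normord{\cdot}_H$ on local functionals~\eqref{sec_aqft_local_sum_wick} then gives $\mathcal{C}(\mathe_\otimes^F) = 0$. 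The statement for the full free BRST differential $\st_0$ is identical upon replacing $S_0^{(0)}$ by $S_0$, $\mathcal{A}^{(0)}$ by $\mathcal{A}$, and $(\st_0^{(-1)})^2 = 0$ by $\st_0^2 = 0$, starting from~\eqref{remark_freebrstward_brstanomward}.
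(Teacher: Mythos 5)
Your overall strategy is the paper's: apply $\st_0^{(-1)}$ a second time to \eqref{remark_freebrstward_koszultateanomward}, use the polarised (one-insertion) form of the anomalous Ward identity with the insertion $H = \hat{\mathcal{D}}(\mathe_\otimes^F) = \frac{1}{2}\left(S_0^{(0)}+F,S_0^{(0)}+F\right) + \mathcal{A}^{(0)}\left(\mathe_\otimes^F\right)$, simplify with the graded Jacobi identity \eqref{sec_gauge_antibracket_jacobi} and graded antisymmetry \eqref{sec_gauge_antibracket_gradsym}, and conclude because a time-ordered product vanishes only if its argument does. Your single-insertion computation is also correct in substance: the quadratic term $\mathcal{A}^{(0)}\left[\mathcal{A}^{(0)}\left[\mathe_\otimes^F\right]\otimes\mathe_\otimes^F\right]$ arises there, as the anomaly of the anomaly insertion contained in $\hat{\mathcal{D}}\left(H\otimes\mathe_\otimes^F\right)$ (up to relative signs, which need more care than your all-plus combination suggests).

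The genuine gap is your treatment of the double-insertion term $\frac{\mathi}{\hbar}\mathcal{T}\left[\hat{\mathcal{D}}(\mathe_\otimes^F)\otimes\hat{\mathcal{D}}(\mathe_\otimes^F)\otimes\exp_\otimes\left(\frac{\mathi}{\hbar}F\right)\right]$. This term vanishes identically, by a one-line parity argument: $\hat{\mathcal{D}}(\mathe_\otimes^F)$ is Grassmann-odd, since the antibracket of bosonic functionals is odd by \eqref{sec_gauge_antibracket_grading} and the anomaly carries the parity of the differential, and the graded symmetry of time-ordered products gives $\mathcal{T}\left[A\otimes A\otimes\cdots\right] = -\mathcal{T}\left[A\otimes A\otimes\cdots\right] = 0$ for any fermionic $A$. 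So there is no $\bigo{\hbar^{-1}}$ puzzle and nothing to "match": the connected-products decomposition, causal factorisation and induction in the order of $F$ that you sketch are not needed, and moreover your proposed resolution cannot be right — you claim this term "supplies the quadratic term", but you already obtained $\mathcal{A}^{(0)}\left[\mathcal{A}^{(0)}\otimes\mathe_\otimes^F\right]$ from the single-insertion term in the preceding step, so your mechanism would count it twice and the consistency condition would come out wrong. Replacing that paragraph by the parity observation (which applies verbatim to $\st_0$ and $\mathcal{A}$ as well) turns your sketch into the paper's proof.
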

\begin{proof}
Applying $\st_0^{(-1)}$ again on the anomalous Ward identity~\eqref{remark_freebrstward_koszultateanomward} and using that $\left( \st_0^{(-1)} \right)^2 = 0$, we obtain
\begin{splitequation}
0 &= \st_0^{(-1)} \mathcal{T}\left[ \left[ \frac{1}{2} \left( S_0^{(0)} + F, S_0^{(0)} + F \right) + \mathcal{A}^{(0)}\left( \mathe_\otimes^F \right) \right] \otimes \exp_\otimes\left( \frac{\mathi}{\hbar} F \right) \right] \\
&= - \mathcal{T}\left[ \left( \left[ \frac{1}{2} \left( S_0^{(0)} + F, S_0^{(0)} + F \right) + \mathcal{A}^{(0)}\left( \mathe_\otimes^F \right) \right], S_0^{(0)} + F \right) \otimes \exp_\otimes\left( \frac{\mathi}{\hbar} F \right) \right] \\
&\quad- \mathcal{T}\left[ \mathcal{A}^{(0)}\left[ \mathe_\otimes^F \otimes \left[ \frac{1}{2} \left( S_0^{(0)} + F, S_0^{(0)} + F \right) + \mathcal{A}^{(0)}\left( \mathe_\otimes^F \right) \right] \right]\otimes \exp_\otimes\left( \frac{\mathi}{\hbar} F \right) \right] \eqend{,}
\end{splitequation}
using the anomalous Ward identity~\eqref{remark_freebrstward_koszultateanomward} again and using that $\mathcal{T}\left[ A \otimes A \otimes \cdots \right] = 0$ for fermionic $A$ (such as $\mathcal{A}^{(0)}\left[ \exp_\otimes(F) \right]$) because of the graded symmetry of the time-ordered products. Since the antibracket satisfies the graded Jacobi identity~\eqref{sec_gauge_antibracket_jacobi}, we have
\begin{equation}
\left( \left( S_0^{(0)} + F, S_0^{(0)} + F \right), S_0^{(0)} + F \right) = 0 \eqend{,}
\end{equation}
and by the graded symmetry~\eqref{sec_gauge_antibracket_gradsym} of the antibracket also
\begin{equation}
\left( \mathcal{A}^{(0)}\left[ \mathe_\otimes^F \right], S_0^{(0)} + F \right) = - \left( S_0^{(0)} + F, \mathcal{A}^{(0)}\left[ \mathe_\otimes^F \right] \right) \eqend{.}
\end{equation}
Since a time-ordered product vanishes only if its argument vanishes, the consistency condition follows. For the full free BRST differential, the same proof applies with $S_0^{(0)}$ replaced by $S_0$ and $\mathcal{A}^{(0)}$ replaced by $\mathcal{A}$.
\end{proof}

Furthermore, since antifields can be taken out of time-ordered products~\eqref{sec_gauge_antifields_timeordered}, there should be no anomaly associated to them. While this was asserted in~\cite{demedeiroshollands2013}, no proof was given. Closing this gap, we have the following:
\begin{theorem}
\label{thm_anom_antifields}
Given a derivation $D$ of the form given in Theorem~\ref{thm_classanom_bv}, if perturbative agreement holds for all fields (if any) appearing in $\delta_\text{R} Q/\delta \Phi_K(x)$ and if there exists $k$ such that $F_k \in \mathcal{F}$ only contains antifields, i.e. if $\delta F_k/\delta \Phi_K(x) = 0$ for all $K$, the anomaly vanishes: $\mathcal{A}_n\left[ F_1 \otimes \cdots \otimes F_n \right] = 0$.
\end{theorem}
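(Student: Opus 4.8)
The plan is to prove the claim by induction on $n$, singling out the antifield-only factor and exploiting two facts: antifields factor out of every time-ordered product by~\eqref{sec_gauge_antifields_timeordered}, and applying $D$ to such a factor produces a functional linear in the fields, which perturbative agreement prevents from anomalising. Using the graded symmetry and multilinearity of the maps $\mathcal{A}_n$, I would first relabel so that the antifield-only factor is the last one, $F_n$, and treat $F_1,\dots,F_{n-1}$ as arbitrary. The base case $n=1$ is immediate: there $D\mathcal{T}_1(F_n)=\normord{\hat{\mathcal{D}}_1(F_n)}_H$, and since antifields are inert background fields (so no Wick contractions occur) and $D\Phi^\ddag_K=\delta_\text{R}Q/\delta\Phi_K$ is at most linear, one gets $\hat{\mathcal{D}}_1(F_n)=(Q,F_n)=\mathcal{D}_1(F_n)$ with no $\bigo{\hbar}$ remainder, so $\mathcal{A}_1(F_n)=0$.

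For the inductive step I would compute $D\mathcal{T}_n(F_1\otimes\cdots\otimes F_n)$ in two ways. On the algebraic side, pulling $F_n$ out first by~\eqref{sec_gauge_antifields_timeordered} and using that $D$ is a graded derivation gives $D\mathcal{T}_n=\pm(DF_n)\star_\hbar\mathcal{T}_{n-1}(F_1\otimes\cdots\otimes F_{n-1})\pm F_n\star_\hbar D\mathcal{T}_{n-1}(\cdots)$, the signs being fixed by the Grassmann parities. On the Ward-identity side, the expanded identity~\eqref{proof_anomward_identity} (applied to distinct factors by polarisation) writes $D\mathcal{T}_n$ as a sum over subsets $I$ of insertions $\hat{\mathcal{D}}_{|I|}$. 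The terms with $n\notin I$ have $F_n$ as a spectator, so $F_n$ factors out and they reassemble into $\pm F_n\star_\hbar D\mathcal{T}_{n-1}(\cdots)$ by the Ward identity at level $n-1$, cancelling the second Leibniz term. In the terms with $n\in I$ and $|I|<n$ the anomalous part $\mathcal{A}_{|I|}$ still carries the antifield factor $F_n$ and vanishes by the induction hypothesis, while the classical parts survive only as $\mathcal{D}_1$ and $\mathcal{D}_2$ by Theorem~\ref{thm_classanom_bv}.

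What remains is an identity of the form $\pm(DF_n)\star_\hbar\mathcal{T}_{n-1}(\cdots)=[\text{classical }\mathcal{D}_1,\mathcal{D}_2\text{ contributions}]+(\hbar/\mathi)^{n-1}\mathcal{T}_1[\hat{\mathcal{D}}_n(F_1\otimes\cdots\otimes F_n)]$, in which isolating the $\bigo{\hbar}$ content of the right-hand side is, by property~$6$ of Theorem~\ref{thm_anomward}, exactly $\mathcal{A}_n$. Here I would split $DF_n=\frac{\delta_\text{R}F_n}{\delta\Phi^\ddag_K}\frac{\delta_\text{R}Q}{\delta\Phi_K}$ into its antifield-only part, arising from the $\Phi\Phi^\ddag$ content of $Q$, and its part linear in the fields that appear in $\delta_\text{R}Q/\delta\Phi_K$. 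The antifield-only part again factors out by~\eqref{sec_gauge_antifields_timeordered} and reassembles into a classical time-ordered product with no $\bigo{\hbar}$ remainder.

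I expect the main obstacle to be the field-linear part: I must show that such an insertion can be moved from the star product into the time-ordered product up to purely classical retarded and advanced contraction terms, with no $\bigo{\hbar}$ remainder. This is precisely what perturbative agreement for the fields occurring in $\delta_\text{R}Q/\delta\Phi_K$ supplies (cf.\ Theorem~\ref{thm_pertagreement}): an insertion linear in those fields obeys the non-anomalous field-equation Ward identity, so no anomaly can be generated. Granting this, the classical contraction terms combine with the $\mathcal{D}_1,\mathcal{D}_2$ contributions already present, every term matches at order $\hbar^0$, and no $\bigo{\hbar}$ piece survives to be absorbed into $\mathcal{A}_n$. Since $\mathcal{T}_1=\normord{\cdot}_H$ is injective, this forces $\mathcal{A}_n(F_1\otimes\cdots\otimes F_n)=0$, closing the induction.
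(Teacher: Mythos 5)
Your proposal is correct and follows essentially the same route as the paper's proof: induction in $n$ with the antifield-only factor pulled out of the time-ordered products via~\eqref{sec_gauge_antifields_timeordered}, the lower-order anomalies carrying that factor killed by the induction hypothesis, the classical terms truncated to $\mathcal{D}_1$ and $\mathcal{D}_2$ by Theorem~\ref{thm_classanom_bv}, and the split of $D F_n$ into an antifield-only piece plus a field-linear piece, with the latter handled by perturbative agreement~\eqref{thm_anom_antifields_pa}. The only cosmetic difference is that the paper makes the required symmetrisation explicit -- taking the antifield factor out on both the left and the right and averaging, so that the two perturbative-agreement conditions produce exactly the symmetric combination $R^\text{cl}_1\left( F; \hat{D}_\text{cl} G \right) + R^\text{cl}_1\left( \hat{D}_\text{cl} G; F \right)$ appearing in $\mathcal{D}_2$ -- which your reference to both retarded and advanced contraction terms implicitly contains.
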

\begin{remark*}
Perturbative agreement~\cite{hollandswald2005,zahn2015,dragohackpinamonti2017} is the statement that terms that are at most quadratic in the fields can be shifted freely between the free action and the interaction part. While this is usually assumed to hold, it actually results in non-trivial conditions on the time-ordered products, and one needs to show that these conditions can be fulfilled by a suitable field redefinition~\eqref{sec_aqft_renormfreedom}. For the above theorem, one needs perturbative agreement for terms linear in the fields, which results in the conditions~\cite{hollandswald2005}
\begin{equations}[thm_anom_antifields_pa]
\begin{split}
\mathcal{T}_{n+1}\left[ F^{\otimes n} \otimes \Phi_K(x) \right] &= \mathcal{T}_n\left( F^{\otimes n} \right) \star_\hbar \Phi_K(x) \\
&\quad+ \mathi \hbar \, n \, \int G^\text{adv}_{LK}(y,x) \mathcal{T}_n\left[ F^{\otimes (n-1)} \otimes \frac{\delta_\text{R} F}{\delta \Phi_L(y)} \right] \total y \eqend{,}
\label{thm_anom_antifields_pa_1}
\end{split} \\
\begin{split}
\mathcal{T}_{n+1}\left[ \Phi_K(x) \otimes F^{\otimes n} \right] &= \Phi_K(x) \star_\hbar \mathcal{T}_n\left( F^{\otimes n} \right) \\
&\quad+ \mathi \hbar \, n \, \int G^\text{ret}_{LK}(y,x) \mathcal{T}_n\left[ F^{\otimes (n-1)} \otimes \frac{\delta_\text{R} F}{\delta \Phi_L(y)} \right] \total y \eqend{.}
\label{thm_anom_antifields_pa_2}
\end{split}
\end{equations}
Using the commutation relation for time-ordered products~\eqref{sec_aqft_timeord_commutator}, either condition follows from the other.
\end{remark*}
\begin{proof}
Since the general case can be recovered by polarisation using linearity and the graded symmetry of the anomaly, we can restrict to the case where $k = n$ and all the $F_i$ with $1 \leq i < n$ are equal, setting $F_n = G$ and $F_i = F$ for $1 \leq i < n$, and furthermore assume that $F$ and $G$ are bosonic. We furthermore set $D_\text{cl} F \equiv \left( Q, F \right) = \mathcal{D}_1\left( F \right)$ as stated in Theorem~\ref{thm_classanom_bv}. By polarisation (replacing $F \to F + \alpha G$, taking a derivative with respect to $\alpha$ and setting $\alpha = 0$), the anomalous Ward identity~\eqref{proof_anomward_identity} gives
\begin{splitequation}
\label{proof_anom_antifields_wardidentity}
&D \mathcal{T}_n\left( F^{\otimes (n-1)} \otimes G \right) = \left( \frac{\hbar}{\mathi} \right)^{n-1} \mathcal{T}_1\left[ \hat{\mathcal{D}}_n\left( F^{\otimes (n-1)} \otimes G \right) \right] \\
&\qquad+ \sum_{k=1}^{n-1} \frac{(n-1)!}{(k-1)! (n-k)!} \left( \frac{\hbar}{\mathi} \right)^{k-1} \mathcal{T}_{n-k+1}\left[ \hat{\mathcal{D}}_k\left( F^{\otimes (k-1)} \otimes G \right) \otimes F^{\otimes (n-k)} \right] \\
&\qquad+ \sum_{k=1}^{n-1} \frac{(n-1)!}{k! (n-k-1)!} \left( \frac{\hbar}{\mathi} \right)^{k-1} \mathcal{T}_{n-k+1}\left[ \hat{\mathcal{D}}_k\left( F^{\otimes k} \right) \otimes F^{\otimes (n-k-1)} \otimes G \right] \eqend{,}
\end{splitequation}
where $\hat{\mathcal{D}}_k = \mathcal{D}_k + \mathcal{A}_k$ and we have isolated the terms with $k = n$ in the sum.

For $n = 1$, this reads
\begin{equation}
\label{proof_anom_antifields_ward1}
D \mathcal{T}_1\left( G \right) = \mathcal{T}_1\left[ \mathcal{D}_1\left( G \right) \right] + \mathcal{T}_1\left[ \mathcal{A}_1\left( G \right) \right] \eqend{.}
\end{equation}
Since $G$ does not depend on fields, we have $\mathcal{T}_1\left( G \right) = G$. Because $\mathcal{D}_1\left( G \right) = D_\text{cl} G$, which can be at most linear in fields (and of arbitrary order in antifields) since $Q$ is at most quadratic in fields and antifields by assumption, and no Hadamard parametrix for antifields exists, it follows that
\begin{equation}
D \mathcal{T}_1\left( G \right) = D_\text{cl} G = \normord{ D_\text{cl} G }_H = \mathcal{T}_1\left( D_\text{cl} G \right) = \mathcal{T}_1\left[ \mathcal{D}_1\left( G \right) \right] \eqend{,}
\end{equation}
and since a time-ordered product vanishes only if its argument vanishes we obtain $\mathcal{A}_1\left( G \right) = 0$. For $n = 2$, the anomalous Ward identity~\eqref{proof_anom_antifields_wardidentity} gives
\begin{splitequation}
\mathcal{T}_1\left[ \mathcal{A}_2\left( F \otimes G \right) \right] &= \frac{\mathi}{\hbar} D \mathcal{T}_2\left( F \otimes G \right) - \frac{\mathi}{\hbar} \mathcal{T}_2\left( F \otimes D_\text{cl} G \right) - \frac{\mathi}{\hbar} \mathcal{T}_2\left( D_\text{cl} F \otimes G \right) \\
&\quad- \frac{\mathi}{\hbar} \mathcal{T}_2\left[ \mathcal{A}_1\left( F \right) \otimes G \right] - \mathcal{T}_1\left[ \mathcal{D}_2\left( F \otimes G \right) \right] \eqend{.}
\end{splitequation}
We then use that antifields can be taken out of time-ordered products~\eqref{sec_gauge_antifields_timeordered}, that $D$ is a derivation and the anomalous Ward identity~\eqref{proof_anom_antifields_ward1} to obtain
\begin{splitequation}
D \mathcal{T}_2\left( F \otimes G \right) &= D \mathcal{T}_1\left( F \right) \star_\hbar G + \mathcal{T}_1\left( F \right) \star_\hbar D G \\
&= \mathcal{T}_1\left( D_\text{cl} F \right) \star_\hbar G + \mathcal{T}_1\left[ \mathcal{A}_1\left( F \right) \right] \star_\hbar G + \mathcal{T}_1\left( F \right) \star_\hbar D G \\
&= \mathcal{T}_2\left( D_\text{cl} F \otimes G \right) + \mathcal{T}_2\left[ \mathcal{A}_1\left( F \right) \otimes G \right] + \mathcal{T}_1\left( F \right) \star_\hbar \mathcal{T}_1\left( D_\text{cl} G \right) \eqend{,}
\end{splitequation}
and therefore
\begin{equation}
\mathcal{T}_1\left[ \mathcal{A}_2\left( F \otimes G \right) \right] = \frac{\mathi}{\hbar} \mathcal{T}_1\left( F \right) \star_\hbar \mathcal{T}_1\left( D_\text{cl} G \right) - \frac{\mathi}{\hbar} \mathcal{T}_2\left( F \otimes D_\text{cl} G \right) - \mathcal{T}_1\left[ \mathcal{D}_2\left( F \otimes G \right) \right] \eqend{.}
\end{equation}
In the same way, by taking $G$ out on the left we obtain the same formula with $\mathcal{T}_1\left( D_\text{cl} G \right) \star_\hbar \mathcal{T}_1\left( F \right)$ instead of $\mathcal{T}_1\left( F \right) \star_\hbar \mathcal{T}_1\left( D_\text{cl} G \right)$, and symmetrising we get
\begin{splitequation}
\mathcal{T}_1\left[ \mathcal{A}_2\left( F \otimes G \right) \right] &= \frac{\mathi}{2 \hbar} \mathcal{T}_1\left( F \right) \star_\hbar \mathcal{T}_1\left( D_\text{cl} G \right) + \frac{\mathi}{2 \hbar} \mathcal{T}_1\left( D_\text{cl} G \right) \star_\hbar \mathcal{T}_1\left( F \right) \\
&\quad- \frac{\mathi}{\hbar} \mathcal{T}_2\left( F \otimes D_\text{cl} G \right) - \mathcal{T}_1\left[ \mathcal{D}_2\left( F \otimes G \right) \right] \eqend{.}
\end{splitequation}
From the explicit formula of Theorem~\ref{thm_classanom_bv}, we obtain by polarisation
\begin{splitequation}
\label{proof_anom_antifields_d2}
2 \mathcal{D}_2\left( F \otimes G \right) &= D_\text{cl} R^\text{cl}_1\left( F; G \right) + D_\text{cl} R^\text{cl}_1\left( G; F \right) - R^\text{cl}_1\left( F; D_\text{cl} G \right) \\
&\quad- R^\text{cl}_1\left( D_\text{cl} F; G \right) - R^\text{cl}_1\left( G; D_\text{cl} F \right) - R^\text{cl}_1\left( D_\text{cl} G; F \right) \eqend{.}
\end{splitequation}
Since the antifields are not dynamical, in the formula~\eqref{sec_aqft_rcl_firstorder} for the classical retarded product one only sums over fields, and it follows that $R^\text{cl}_1\left( H; G \right) = R^\text{cl}_1\left( G; H \right) = 0$ for any $H \in \mathcal{F}$. Therefore,
\begin{equation}
2 \mathcal{D}_2\left( F \otimes G \right) = - R^\text{cl}_1\left( F; D_\text{cl} G \right)  - R^\text{cl}_1\left( D_\text{cl} G; F \right) \eqend{,}
\end{equation}
and we obtain
\begin{splitequation}
\mathcal{T}_1\left[ \mathcal{A}_2\left( F \otimes G \right) \right] &= \frac{\mathi}{2 \hbar} \mathcal{T}_1\left( F \right) \star_\hbar \mathcal{T}_1\left( D_\text{cl} G \right) + \frac{\mathi}{2 \hbar} \mathcal{T}_1\left( D_\text{cl} G \right) \star_\hbar \mathcal{T}_1\left( F \right) \\
&\quad- \frac{\mathi}{\hbar} \mathcal{T}_2\left( F \otimes D_\text{cl} G \right) + \frac{1}{2} \mathcal{T}_1\left[ R^\text{cl}_1\left( F; D_\text{cl} G \right) \right] + \frac{1}{2} \mathcal{T}_1\left[ R^\text{cl}_1\left( D_\text{cl} G; F \right) \right] \eqend{.}
\end{splitequation}
If $D_\text{cl} G$ does not contain any fields, we have $\mathcal{T}_2\left( F \otimes D_\text{cl} G \right) = \mathcal{T}_1\left( F \right) \star_\hbar \mathcal{T}_1\left( D_\text{cl} G \right) = \mathcal{T}_1\left( D_\text{cl} G \right) \star_\hbar \mathcal{T}_1\left( F \right)$ and $R^\text{cl}_1\left( F; D_\text{cl} G \right) = R^\text{cl}_1\left( D_\text{cl} G; F \right) = 0$, and the anomaly $\mathcal{A}_2\left( F \otimes G \right)$ vanishes. We thus concentrate on the terms of $D_\text{cl} G$ that are exactly linear in fields, writing $D_\text{cl} G = \hat{D}_\text{cl} G + {}$terms only depending on antifields with
\begin{equation}
\hat{D}_\text{cl} G = \iint \Phi_L(y) \frac{\delta_\text{L} \delta_\text{R} Q}{\delta \Phi_L(y) \delta \Phi_K(x)} \frac{\delta_\text{L} G}{\delta \Phi^\ddag_K(x)} \total x \total y \eqend{.}
\end{equation}
For these terms perturbative agreement~\eqref{thm_anom_antifields_pa} holds by assumption, which gives
\begin{splitequation}
&\mathcal{T}_2\left( F \otimes \hat{D}_\text{cl} G \right) = \iint \mathcal{T}_2\left[ F \otimes \Phi_L(y) \right] \frac{\delta_\text{L} \delta_\text{R} Q}{\delta \Phi_L(y) \delta \Phi_K(x)} \frac{\delta_\text{L} G}{\delta \Phi^\ddag_K(x)} \total x \total y \\
&\quad= \iint \mathcal{T}_1\left( F \right) \star_\hbar \Phi_L(y) \frac{\delta_\text{L} \delta_\text{R} Q}{\delta \Phi_L(y) \delta \Phi_K(x)} \frac{\delta_\text{L} G}{\delta \Phi^\ddag_K(x)} \total x \total y \\
&\qquad\quad+ \mathi \hbar \iiint G^\text{adv}_{ML}(z,y) \mathcal{T}_1\left( \frac{\delta_\text{R} F}{\delta \Phi_M(z)} \right) \frac{\delta_\text{L} \delta_\text{R} Q}{\delta \Phi_L(y) \delta \Phi_K(x)} \frac{\delta_\text{L} G}{\delta \Phi^\ddag_K(x)} \total x \total y \total z \\
&\quad= \mathcal{T}_1\left( F \right) \star_\hbar \hat{D}_\text{cl} G + \mathi \hbar \mathcal{T}_1\left[ \iint \frac{\delta_\text{R} F}{\delta \Phi_M(z)} G^\text{adv}_{ML}(z,y) \frac{\delta_\text{L} \left( \hat{D}_\text{cl} G \right)}{\delta \Phi_L(y)} \total y \total z \right] \\
&\quad= \mathcal{T}_1\left( F \right) \star_\hbar \mathcal{T}_1\left( \hat{D}_\text{cl} G \right) - \mathi \hbar \mathcal{T}_1\left[ R^\text{cl}_1\left( F; \hat{D}_\text{cl} G \right) \right] \eqend{,}
\raisetag{1.3em}
\end{splitequation}
where we used the explicit formula~\eqref{sec_aqft_rcl_firstorder} for the classical retarded product to obtain the last equality. Similarly, perturbative agreement for $\mathcal{T}_2\left[ \Phi_L(y) \otimes F \right]$ [taking out $\Phi_L(y)$ to the left~\eqref{thm_anom_antifields_pa}] leads to
\begin{equation}
\mathcal{T}_2\left( F \otimes \hat{D}_\text{cl} G \right) = \mathcal{T}_1\left( \hat{D}_\text{cl} G \right) \star_\hbar \mathcal{T}_1\left( F \right) - \mathi \hbar \mathcal{T}_1\left[ R^\text{cl}_1\left( \hat{D}_\text{cl} G; F \right) \right] \eqend{,}
\end{equation}
and by symmetrising we obtain $\mathcal{T}_1\left[ \mathcal{A}_2\left( F \otimes G \right) \right] = 0$ and thus $\mathcal{A}_2\left( F \otimes G \right) = 0$ also in this case.

In the general case for $n > 2$ we proceed in the same way, isolating the term containing $\mathcal{A}_n$ in the anomalous Ward identity~\eqref{proof_anom_antifields_wardidentity}, taking $G$ out of the time-ordered products because it only contains antifields, using that $D$ is a derivation and then using the anomalous Ward identity~\eqref{proof_anomward_identity} for the terms on the right-hand side that do not contain $G$. Since $\mathcal{D}_k = 0$ for $k > 2$, and we have already proven that $\mathcal{A}_1\left( G \right) = 0 = \mathcal{A}_2\left( F \otimes G \right)$, this results in
\begin{splitequation}
&\mathcal{T}_1\left[ \mathcal{A}_n\left( F^{\otimes (n-1)} \otimes G \right) \right] = \left( \frac{\mathi}{\hbar} \right)^{n-1} N_n\left( F^{\otimes (n-1)} \otimes G \right) \\
&\qquad- \sum_{k=3}^{n-1} \frac{(n-1)!}{(k-1)! (n-k)!} \left( \frac{\hbar}{\mathi} \right)^{k-n} \mathcal{T}_{n-k+1}\left[ \mathcal{A}_k\left( F^{\otimes (k-1)} \otimes G \right) \otimes F^{\otimes (n-k)} \right] \eqend{,}
\end{splitequation}
where the last sum only appears for $n > 3$ and where we set
\begin{splitequation}
N_n\left( F^{\otimes (n-1)} \otimes G \right) &= \frac{1}{2} \mathcal{T}_1\left( D_\text{cl} G \right) \star_\hbar \mathcal{T}_{n-1}\left( F^{\otimes (n-1)} \right) \\
&\quad+ \frac{1}{2} \mathcal{T}_{n-1}\left( F^{\otimes (n-1)} \right) \star_\hbar \mathcal{T}_1\left( D_\text{cl} G \right) - \mathcal{T}_n\left[ D_\text{cl} G \otimes F^{\otimes (n-1)} \right] \\
&\quad- \frac{\hbar}{\mathi} (n-1) \mathcal{T}_{n-1}\left[ \mathcal{D}_2\left( F \otimes G \right) \otimes F^{\otimes (n-2)} \right] \eqend{.}
\raisetag{1.6em}
\end{splitequation}
It follows by induction in $n$ that $\mathcal{A}_n\left( F^{\otimes (n-1)} \otimes G \right) = 0$ if we can show that $N_n\left( F^{\otimes (n-1)} \otimes G \right) = 0$ for all $n$. However, this is seen in the same way as for $n = 2$: by the explicit formula~\eqref{proof_anom_antifields_d2} for $\mathcal{D}_2\left( F \otimes G \right)$, if $D_\text{cl} G$ does not contain any fields, $\mathcal{D}_2\left( F \otimes G \right)$ vanishes and $D_\text{cl} G$ can be taken out of the time-ordered product such that $N_n\left( F^{\otimes (n-1)} \otimes G \right) = 0$, while if $D_\text{cl} G$ is linear in fields, we have to use perturbative agreement~\eqref{thm_anom_antifields_pa} and arrive at the same conclusion.
\end{proof}

However, perturbative agreement for terms linear in the fields can be always fulfilled by a field redefinition. This has been shown explicitly for scalar and spinor fields~\cite{hollandswald2005,demedeiroshollands2013,zahn2015,dragohackpinamonti2017}, but it also follows in the general case with antifields:
\begin{theorem}
\label{thm_pertagreement}
One can choose renormalisation conditions such that~\eqref{thm_anom_antifields_pa} holds.
\end{theorem}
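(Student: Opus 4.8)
The plan is to follow the established strategy for the principle of perturbative agreement, adapted to the present off-shell setting with antifields: identify the obstruction to \eqref{thm_anom_antifields_pa}, show that it is a local field polynomial supported on the total diagonal, prove that it is of order $\hbar$, and then absorb it into a redefinition of the time-ordered products using the renormalisation freedom \eqref{sec_aqft_renormfreedom}. First I would reduce the two conditions to one: as noted in the remark, \eqref{thm_anom_antifields_pa_1} and \eqref{thm_anom_antifields_pa_2} are equivalent given the commutation relation \eqref{sec_aqft_timeord_commutator}, so it suffices to arrange \eqref{thm_anom_antifields_pa_1}. Moreover, if $\Phi_K$ is an antifield the advanced term vanishes (antifields carry no Green's functions) and the condition reduces to the statement that antifields can be pulled out of time-ordered products \eqref{sec_gauge_antifields_timeordered}, which already holds; so only dynamical $\Phi_K$ need to be treated, exactly as in the scalar and spinor cases.

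Next I would define the obstruction
\[
\mathcal{O}_n \equiv \mathcal{T}_{n+1}[F^{\otimes n} \otimes \Phi_K(x)] - \mathcal{T}_n(F^{\otimes n}) \star_\hbar \Phi_K(x) - \mathi \hbar\, n \int G^\text{adv}_{LK}(y,x)\, \mathcal{T}_n\Big[F^{\otimes(n-1)} \otimes \frac{\delta_\text{R} F}{\delta \Phi_L(y)}\Big] \total y \eqend{,}
\]
and show that it is supported on the total diagonal, by the same reasoning as in the proofs of Theorems~\ref{thm_anomward} and~\ref{thm_tint_factor}. Whenever the point $x$ can be separated from $\supp F$ by a Cauchy surface, causal factorisation identifies $\mathcal{T}_{n+1}[F^{\otimes n}\otimes\Phi_K(x)]$ with $\Phi_K(x)\star_\hbar\mathcal{T}_n(F^{\otimes n})$ if $x$ lies to the future, and with $\mathcal{T}_n(F^{\otimes n})\star_\hbar\Phi_K(x)$ if to the past. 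In the first case a short computation using \eqref{sec_aqft_timeord_commutator}, in which only the advanced part of $\Delta_{LK}(y,x)=G^\text{ret}_{LK}(y,x)-G^\text{adv}_{LK}(y,x)$ survives on $\supp F$, reproduces exactly the advanced term, so $\mathcal{O}_n=0$; in the second case both the advanced term and the relevant commutator vanish. Fattening the Cauchy surface and decomposing by a partition of unity as in Theorem~\ref{thm_tint_factor} shows that $\mathcal{O}_n$ vanishes off the total diagonal, hence by \eqref{sec_aqft_local_sum_wick} it is a local Wick polynomial, whose grading is fixed by the quantum-number preservation of the time-ordered products.

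The key step is the order in $\hbar$. Taking $F=\bigo{\hbar^0}$, the lowest-order part of every time-ordered product is the classical pointwise product: using Theorem~\ref{thm_tc_hbar} and the expansion of $\mathcal{T}_n$ in connected products, only $\mathcal{T}^\text{c}_1=\normord{\,\cdot\,}_H$ contributes at order $\hbar^0$, so both $\mathcal{T}_{n+1}[F^{\otimes n}\otimes\Phi_K]$ and $\mathcal{T}_n(F^{\otimes n})\star_\hbar\Phi_K$ reduce to the same pointwise product $F^n\Phi_K$, while the explicit advanced term is manifestly $\bigo{\hbar}$. Hence $\mathcal{O}_n=\bigo{\hbar}$. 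Together with locality and the correct grading, this places $\mathcal{O}_n$ precisely in the range of the admissible finite renormalisations: the maps $\mathcal{Z}_n$ in \eqref{sec_aqft_renormfreedom} are supported on the total diagonal, locally covariant, and satisfy $\mathcal{Z}_n(F^{\otimes n})=\bigo{\hbar}$, matching the structure of $\mathcal{O}_n$ term by term.

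Finally I would remove the obstruction inductively. Expressing the change of $\mathcal{O}_n$ under $\mathcal{T}\to\hat{\mathcal{T}}$ from \eqref{sec_aqft_renormfreedom} in terms of the $\mathcal{Z}_k$, the leading modification at order $n$ is $\mathcal{Z}_{n+1}$ evaluated with one linear argument, plus contributions from the lower $\mathcal{Z}_k$ already fixed; one then chooses $\mathcal{Z}_{n+1}$ to cancel the current $\mathcal{O}_n$. The hard part is twofold: first, establishing cleanly that the violation is of order $\hbar$ — this is the heart of the theorem and the only place where the detailed $\hbar$-counting of the time-ordered products enters; and second, verifying that the inductive removal is consistent, i.e.\ that the finite renormalisation cancelling $\mathcal{O}_n$ can be chosen compatibly with the graded symmetry, locality, covariance and neutral-element properties of $\mathcal{Z}$, without reintroducing a violation at lower orders. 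The presence of antifields poses no additional obstruction, since for antifield entries the condition is automatic and the dynamical fields are handled as before.
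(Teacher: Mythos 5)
Your overall strategy (define the obstruction, show it is supported on the total diagonal, absorb it into the renormalisation freedom) matches the paper's, but your $\hbar$-counting is insufficient, and this is precisely the step the paper singles out as the crux. To cancel $\mathcal{O}_n$ one must set $\mathcal{Z}_{n+1}\left[ F^{\otimes n} \otimes \Phi_K(x) \right] = -\left( \mathi/\hbar \right)^n \mathcal{O}_n$, because in the expansion of the renormalisation freedom~\eqref{sec_aqft_renormfreedom} the map $\mathcal{Z}_{n+1}$ enters the $(n+1)$-point time-ordered product with a prefactor $\left( \hbar/\mathi \right)^n$. Since an admissible field redefinition must satisfy $\mathcal{Z}_{n+1} = \bigo{\hbar}$, you need $\mathcal{O}_n = \bigo{\hbar^{n+1}}$, not merely $\bigo{\hbar}$. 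Your argument --- that the $\hbar^0$ parts of $\mathcal{T}_{n+1}\left[ F^{\otimes n} \otimes \Phi_K \right]$ and $\mathcal{T}_n\left( F^{\otimes n} \right) \star_\hbar \Phi_K$ coincide while the advanced term carries an explicit $\hbar$ --- yields exactly one power of $\hbar$ and cannot be iterated to give more. The paper instead uses the inductive vanishing of the lower obstructions to rewrite $\mathcal{O}_n$ exactly as
\begin{equation*}
\mathcal{O}_n = \left( \frac{\hbar}{\mathi} \right)^n \left[ \mathcal{R}_n\left[ F^{\otimes n}; \Phi_K(x) \right] + n \int G^\text{adv}_{LK}(y,x) \, \mathcal{R}_{n-1}\left[ F^{\otimes (n-1)}; \frac{\delta_\text{R} F}{\delta \Phi_L(y)} \right] \total y \right] \eqend{,}
\end{equation*}
and then shows that the bracket is $\bigo{\hbar}$ because its classical limit vanishes by the identities~\eqref{sec_aqft_rcl_glinear_rel} and~\eqref{sec_aqft_rcl_firstorder}; the prefactor supplies the remaining $n$ powers of $\hbar$.

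A second, related gap: you only establish that $\mathcal{O}_n$ is a local Wick polynomial supported on the diagonal, but for the cancellation to be compatible with the field-independence property~\eqref{sec_aqft_renorm_fieldindependence} of the maps $\mathcal{Z}_n$, the obstruction must be proportional to the identity $\unitmatrix$ (a c-number). Otherwise a field-dependent $\mathcal{Z}_{n+1}$ would, through its functional derivatives, force non-vanishing lower-order $\mathcal{Z}_k$, contradicting the assumption that all lower obstructions have already been removed. The paper secures this by organising the induction over the \emph{total number of fields} $N_\Phi$ contained in $F^{\otimes n}$ (and for fixed $N_\Phi$ ascending in $n$): a functional derivative of the obstruction is again an obstruction but with fewer fields, which vanishes by hypothesis, so $\mathcal{O}_n \propto \unitmatrix$. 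This double induction, together with the explicit ``integrating up'' construction of the remaining components of $\mathcal{Z}$ (needed to realise field independence), is what makes the final step of your proposal --- which you correctly flag as unverified --- actually go through; a single induction in $n$, as you set it up, does not provide it.
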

\begin{proof}
We consider the difference between left- and right-hand side of the condition~\eqref{thm_anom_antifields_pa_1}:
\begin{splitequation}
\label{proof_pertagreement_n_def}
\mathcal{N}_n\left[ \Phi_K(x); F^{\otimes n} \right] &\equiv \mathcal{T}_{n+1}\left[ \Phi_K(x) \otimes F^{\otimes n} \right] - \mathcal{T}_n\left( F^{\otimes n} \right) \star_\hbar \Phi_K(x) \\
&\quad- \mathi \hbar \, n \, \int G^\text{adv}_{LK}(y,x) \mathcal{T}_n\left[ F^{\otimes (n-1)} \otimes \frac{\delta_\text{R} F}{\delta \Phi_L(y)} \right] \total y \eqend{.}
\end{splitequation}
By the multilinearity of the time-ordered products, $\mathcal{N}_n$ is also multilinear, and by the locality and covariance property it is also local and covariant. It inherits further the neutral element and graded symmetry properties, and we can thus recover the general case (for different $F_i$) by polarisation. Moreover, since antifields can be taken out of time-ordered products (including individual $F_i$) we can assume without loss of generality that $F$ only contains fields.


To show that we can make $\mathcal{N}_n$ vanish by a field redefinition, we have to proceed by induction in the total number of fields $N_\Phi$ contained in $F^{\otimes n}$~\cite{hollandswald2005,zahn2015}, and for fixed $N_\Phi$ ascend in $n$. The induction starts with $N_\Phi = 0$ and $n = 0$, where
\begin{equation}
\mathcal{N}_0\left[ \Phi_K(x); - \right] = \mathcal{T}_1\left[ \Phi_K(x) \right] - \Phi_K(x) = 0 \eqend{.}
\end{equation}
Assume thus that $\mathcal{N}$ vanishes for all $N_\Phi' < N_\Phi$, and all $n' < n$. Using the field independence of the time-ordered products~\eqref{sec_aqft_timeord_fieldindep}, we calculate
\begin{equation}
\label{proof_pertagreement_n_fieldder}
\frac{\delta_\text{L}}{\delta \Phi_L(y)} \mathcal{N}_n\left[ \Phi_K(x); F^{\otimes n} \right] = n \mathcal{N}_n\left[ \Phi_K(x); F^{\otimes (n-1)} \otimes \frac{\delta_\text{L} F}{\delta \Phi_L(y)} \right]
\end{equation}
where we used the commutation relations for left and right derivatives~\eqref{sec_aqft_leftrightder_props}, and that by polarisation
\begin{splitequation}
&\mathcal{N}_n\left[ \Phi_K(x); F^{\otimes (n-1)} \otimes G \right] = \mathcal{T}_{n+1}\left[ F^{\otimes (n-1)} \otimes G \otimes \Phi_K(x) \right] \\
&\quad- \mathcal{T}_n\left[ F^{\otimes (n-1)} \otimes G \right] \star_\hbar \Phi_K(x) - \mathi \hbar \int G^\text{adv}_{MK}(z,x) \mathcal{T}_n\left[ F^{\otimes (n-1)} \otimes \frac{\delta_\text{R} G}{\delta \Phi_M(z)} \right] \total z \\
&\quad- \mathi \hbar (n-1) \int G^\text{adv}_{MK}(z,x) \mathcal{T}_n\left[ F^{\otimes (n-2)} \otimes G \otimes \frac{\delta_\text{R} F}{\delta \Phi_M(z)} \right] \total z
\raisetag{1.9em}
\end{splitequation}
for $G$ of arbitrary Grassmann parity. Since the right-hand side of~\eqref{proof_pertagreement_n_fieldder} contains a smaller total number of fields $N_\Phi$ (and possibly a smaller $n$ by the neutral element property if $F$ is linear in fields), it vanishes by assumption, and it follows that $\mathcal{N}_n\left[ \Phi_K(x); F^{\otimes n} \right]$ is proportional to the identity operator $\unitmatrix$. Assume now that $x \not\in \supp F$; then either $J^+(\{ x \}) \cap J^-(\supp F) = \emptyset$ or $J^-(\{ x \}) \cap J^+(\supp F) = \emptyset$ (or both), and $\mathcal{N}_n\left[ \Phi_K(x); F^{\otimes n} \right]$ vanishes by the factorisation property of the time-ordered products [for $J^+(\{ x \}) \cap J^-(\supp F) = \emptyset$ one additionally needs to use the commutation relation for time-ordered products~\eqref{sec_aqft_timeord_commutator}]. In the general case for $\mathcal{N}_n\left[ \Phi_K(x); F_1 \otimes \cdots \otimes F_n \right]$ with distinct $F_i$, one shows in complete analogy to the proof of Theorem~\ref{thm_anomward} that $\mathcal{N}_n = 0$ if there exist $i$, $j$ with $\supp F_i \cap \supp F_j = \emptyset$, or $i$ with $\{ x \} \cap \supp F_i = \emptyset$, i.e., it is supported on the total diagonal.

Consider now a field redefinition according to~\eqref{sec_aqft_renormfreedom} with $\mathcal{Z}_k = 0$ for all $k \leq n$. All time-ordered products with $k \leq n$ entries are thus unchanged, and for the time-ordered products with $n+1$ entries we obtain by polarisation
\begin{equation}
\hat{\mathcal{T}}_{n+1}\left( F^{\otimes n} \otimes G \right) = \mathcal{T}_{n+1}\left( F^{\otimes n} \otimes G \right) + \left( \frac{\hbar}{\mathi} \right)^n \mathcal{T}_1\left[ \mathcal{Z}_{n+1}\left( F^{\otimes n} \otimes G \right) \right] \eqend{,}
\end{equation}
such that the new $\hat{\mathcal{N}}$ after the redefinition is given by
\begin{equation}
\hat{\mathcal{N}}_n\left[ \Phi_K(x); F^{\otimes n} \right] = \mathcal{N}_n\left[ \Phi_K(x); F^{\otimes n} \right] + \left( \frac{\hbar}{\mathi} \right)^n \mathcal{T}_1\left[ \mathcal{Z}_{n+1}\left[ F^{\otimes n} \otimes \Phi_K(x) \right] \right] \eqend{.}
\end{equation}
Since $\mathcal{N}_n$ is proportional to the identity, we can identify it with a classical functional and make the right-hand side vanish by setting
\begin{equation}
\label{proof_pertagreement_z_from_n}
\mathcal{Z}_{n+1}\left[ F^{\otimes n} \otimes \Phi_K(x) \right] = - \left( \frac{\mathi}{\hbar} \right)^n \mathcal{N}_n\left[ \Phi_K(x); F^{\otimes n} \right] \eqend{.}
\end{equation}
In order for this definition to be a valid field redefinition, we thus need to further show that $\mathcal{N}_n$ is (graded) symmetric if one of the $F$ is a basic field, that it has the right order in $\hbar$, and that we can fulfil the field-independence property of $\mathcal{Z}_{n+1}$ (properties of the field redefinition which we have not stated explicitly, such as a wave front set condition and almost homogeneous scaling, are automatically fulfilled from the definition of $\mathcal{N}_n$~\cite{hollandswald2005,zahn2015}).

For the graded symmetry, we calculate using the commutation relation for basic fields~\eqref{sec_aqft_commutator}, the relation between retarded Green's function and commutator~\eqref{sec_aqft_delta_def} and the properties of left and right derivatives~\eqref{sec_aqft_leftrightder_props} that
\begin{splitequation}
&\mathcal{N}_n\left[ \Phi_K(x); \Phi_L(y) \otimes F^{\otimes (n-1)} \right] - (-1)^{\epsilon_K \epsilon_L} \mathcal{N}_n\left[ \Phi_L(y); \Phi_K(x) \otimes F^{\otimes (n-1)} \right] = \\
&\quad+ (-1)^{\epsilon_K \epsilon_L} \mathcal{N}_{n-1}\left[ \Phi_K(x); F^{\otimes (n-1)} \right] \star_\hbar \Phi_L(y) \\
&\quad- \mathcal{N}_{n-1}\left[ \Phi_L(y); F^{\otimes (n-1)} \right] \star_\hbar \Phi_K(x) \\
&\quad+ \mathi \hbar \, (n-1) \, \int G^\text{adv}_{ML}(z,y) \mathcal{N}_{n-1}\left[ \Phi_K(x); F^{\otimes (n-2)} \otimes \frac{\delta_\text{R} F}{\delta \Phi_M(z)} \right] \total z \\
&\quad- \mathi \hbar \, (n-1) \, \int (-1)^{\epsilon_K \epsilon_L} G^\text{adv}_{MK}(z,x) \mathcal{N}_{n-1}\left[ \Phi_L(y); F^{\otimes (n-2)} \otimes \frac{\delta_\text{R} F}{\delta \Phi_M(z)} \right] \total z \eqend{.}
\end{splitequation}
Since by induction all $\mathcal{N}$ with a smaller total number of fields $N_\Phi' < N_\Phi$ vanish, the right-hand side is zero and the graded symmetry property is fulfilled. Field independence is similarly easy to show: since $\delta_\text{L}/\delta \Phi_M(z) \mathcal{N}_n\left[ \Phi_K(x); F^{\otimes n} \right] = 0$, we have to set all $\mathcal{Z}$ with a smaller total number of fields to zero, which is consistent with the induction and the assumption that the $\mathcal{Z}_k$ with $k \leq n$ vanish (because of the neutral element property). We also set all $\mathcal{Z}_k$ with the same total number of fields $N_\Phi$, but all entries at least quadratic in fields, to zero. In the other direction, we need to define $\mathcal{Z}$ with a higher number of fields, which is straightforward to do by ``integrating up'' the lower-order expressions: first one defines the $\mathcal{Z}_k\left[ F_1 \otimes \cdots \otimes F_k \right]$ with $k \geq n+1$, $N_\Phi+1$ total fields and all $F_i$ at least quadratic in fields. Acting with a functional derivative on these, one obtains $\mathcal{Z}_k$ with $k \geq n$ and $N_\Phi$ total fields, and either all entries at least quadratic in fields (in which case they vanish by definition), or one entry linear in fields, in which case they are defined by~\eqref{proof_pertagreement_z_from_n}. These can thus be defined such that field independence holds, where possible ``integration constants'' can be chosen arbitrarily. One can then define the $\mathcal{Z}_k\left[ F_1 \otimes \cdots \otimes F_k \right]$ with $k \geq n+1$, $N_\Phi+1$ total fields and at most one $F_i$ linear in fields; acting with a functional derivative one obtains a non-vanishing term only when the entry linear in fields is preserved, and we can again ``integrate up'' this term. The procedure is continued ascending in the number of entries that are linear in fields, and stops with $\mathcal{Z}_{N_\Phi+1}\left[ F_1 \otimes \cdots \otimes F_{N_\Phi+1} \right]$ where all $F_i$ are linear in fields. One then continues to define the $\mathcal{Z}_k$ with $N_\Phi+2$ total fields, etc.

It remains to show that $\mathcal{N}_n$ is at least of order $\hbar^{n+1}$ to make~\eqref{proof_pertagreement_z_from_n} a valid field redefinition. This is more involved, and does not seem to have been given much consideration before. Multiplying the definition of $\mathcal{N}_n$~\eqref{proof_pertagreement_n_def} by $1/n! \, \left( \mathi/\hbar \right)^n$ and summing over $n$, we obtain the generating functional
\begin{splitequation}
\mathcal{N}\left[ \Phi_K(x); \exp_\otimes\left( \frac{\mathi}{\hbar} F \right) \right] &= \mathcal{T}\left[ \Phi_K(x) \otimes \exp_\otimes\left( \frac{\mathi}{\hbar} F \right) \right] - \mathcal{T}\left[ \exp_\otimes\left( \frac{\mathi}{\hbar} F \right) \right] \star_\hbar \Phi_K(x) \\
&\quad+ \int G^\text{adv}_{LK}(y,x) \mathcal{T}\left[ \exp_\otimes\left( \frac{\mathi}{\hbar} F \right) \otimes \frac{\delta_\text{R} F}{\delta \Phi_L(y)} \right] \total y \eqend{.}
\end{splitequation}
Multiplying from the left by $\mathcal{T}\left[ \exp_\otimes\left( \frac{\mathi}{\hbar} F \right) \right]^{\star_\hbar (-1)}$, we can express the right-hand side using retarded products~\eqref{sec_aqft_retprod_def} to get
\begin{splitequation}
&\mathcal{T}\left[ \exp_\otimes\left( \frac{\mathi}{\hbar} F \right) \right]^{\star_\hbar (-1)} \star_\hbar \mathcal{N}\left[ \Phi_K(x); \exp_\otimes\left( \frac{\mathi}{\hbar} F \right) \right] \\
&\quad= \mathcal{R}\left[ \mathe_\otimes^F; \Phi_K(x) \right] - \Phi_K(x) + \int G^\text{adv}_{LK}(y,x) \mathcal{R}\left[ \mathe_\otimes^F; \frac{\delta_\text{R} F}{\delta \Phi_L(y)} \right] \total y \eqend{.}
\end{splitequation}
We then expand both sides in powers of $F$; since all $\mathcal{N}_k$ with $k < n$ vanish by induction, we simply obtain (for $n \geq 1$)
\begin{splitequation}
\mathcal{N}_n\left[ \Phi_K(x); F^{\otimes n} \right] = \left( \frac{\hbar}{\mathi} \right)^n \bigg[ &\mathcal{R}_n\left[ F^{\otimes n}; \Phi_K(x) \right] \\
&+ n \int G^\text{adv}_{LK}(y,x) \mathcal{R}_{n-1}\left[ F^{\otimes (n-1)}; \frac{\delta_\text{R} F}{\delta \Phi_L(y)} \right] \total y \bigg] \eqend{.}
\end{splitequation}
Since the retarded products have a well-defined classical limit as $\hbar \to 0$, it follows that $\mathcal{N}_n = \bigo{\hbar^{n+1}}$ if the classical limit of the retarded products inside the brackets vanishes. This, however, follows straightforwardly: from the relation~\eqref{sec_aqft_rcl_glinear_rel} and explicit formula for the first-order classical retarded products~\eqref{sec_aqft_rcl_firstorder} we obtain
\begin{splitequation}
R^\text{cl}_n\left[ F^{\otimes n}; \Phi_K(x) \right] &= n R^\text{cl}_{n-1}\left[ F^{\otimes (n-1)}; R^\text{cl}_1\left[ F; \Phi_K(x) \right] \right] \\
&= - n \int G^\text{adv}_{LK}(y,x) R^\text{cl}_{n-1}\left[ F^{\otimes (n-1)}; \frac{\delta_\text{R} F}{\delta \phi_L(y)} \right] \total y \eqend{.}
\end{splitequation}
\end{proof}

One would now like to extend the above results to the interacting theory, and in particular obtain the derivation $\st$ corresponding to the classical interacting BRST differential $\brst$. However, this is not possible in general because of the presence of anomalous terms in the anomalous Ward identity~\eqref{remark_freebrstward_brstanomward}, in particular the anomaly $\mathcal{A}\left[ \exp_\otimes(L) \right]$ for the interaction $L$ (with cutoff function $g(x)$) itself. Since the anomalous Ward identity, and therefore the anomalous terms depend on the renormalisation conditions that one has imposed in the construction of the time-ordered products, it is possible in certain cases to remove the anomaly order by order in perturbation theory by a field redefinition~\eqref{sec_aqft_renormfreedom}. This is in particular the case if a certain cohomological condition holds. The explicit field redefinitions that one needs to perform to remove the anomaly in this case, and for which the consistency conditions of Theorem~\ref{thm_freebrstward_cons} are essential, are detailed in~\cite{hollands2008}. However, this condition is not fulfilled if the theory contains chiral fermions (the well-known axial anomaly~\cite{adler1969,belljackiw1969,fujikawa1979}), and it also can happen that while an anomalous term can exist in principle, its coefficient vanishes because of a specific choice of matter representation (as in the Standard Model~\cite{gengmarshak1989,minahanramondwarner1990}). In the following, we thus only assume that $\mathcal{A}\left[ \exp_\otimes(L) \right]$ vanishes, without worrying about the underlying reason.
\begin{theorem}
\label{thm_brstward}
If the anomaly of the interaction $L$ with cutoff function $g$ vanishes, $\mathcal{A}\left[ \exp_\otimes(L) \right] = 0$, one can deform the free BRST differential $\st_0$ into an interacting nilpotent BRST differential $\st = \st_0 + \bigo{g}$ as a graded derivation on the free-field algebra $\overline{\mathfrak{A}}_0$. On perturbatively interacting fields, we have the interacting anomalous Ward identity
\begin{equation}
\label{thm_brstward_anomward}
\st \mathcal{T}_L\left[ \exp_\otimes\left( \frac{\mathi}{\hbar} F \right) \right] = \frac{\mathi}{\hbar} \mathcal{T}_L\left[ \left( \brst F + \frac{1}{2} (F,F) + \mathcal{A}\left( \mathe_\otimes^{L+F} \right) \right) \otimes \exp_\otimes\left( \frac{\mathi}{\hbar} F \right) \right]
\end{equation}
as an equality between generating functionals, where $g \rvert_{\supp F} = 1$ and for simplicity we assume $F$ to be bosonic. On-shell and whenever the interacting BRST Noether charge $\mathcal{T}_L\left( Q \right)$ is well-defined, we have $\st F = (\mathi \hbar)^{-1} \left[ \mathcal{T}_L\left( Q \right), F \right]_{\star_\hbar}$ for all $F \in \overline{\mathfrak{A}}_0$.
\end{theorem}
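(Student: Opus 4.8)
The plan is to transport the free BRST anomalous Ward identity~\eqref{remark_freebrstward_brstanomward} through the definition~\eqref{sec_aqft_inttimeord_def} of the interacting time-ordered products. Writing $\mathcal{T}\left[ \exp_\otimes\left( \frac{\mathi}{\hbar} (L+F) \right) \right] = \mathcal{T}\left[ \exp_\otimes\left( \frac{\mathi}{\hbar} L \right) \right] \star_\hbar \mathcal{T}_L\left[ \exp_\otimes\left( \frac{\mathi}{\hbar} F \right) \right]$ and applying the free BRST differential $\st_0$ (a graded derivation on $\overline{\mathfrak{A}}_0$, to which the free identity applies), I would solve for $\st_0 \mathcal{T}_L\left[ \exp_\otimes\left( \frac{\mathi}{\hbar} F \right) \right]$. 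Using~\eqref{remark_freebrstward_brstanomward} for both $\mathcal{T}\left[ \exp_\otimes\left( \frac{\mathi}{\hbar} (L+F) \right) \right]$ and $\mathcal{T}\left[ \exp_\otimes\left( \frac{\mathi}{\hbar} L \right) \right]$, the hypothesis $\mathcal{A}\left( \mathe_\otimes^L \right) = 0$, and the graded-derivation rule for the $\star_\hbar$-inverse of the bosonic factor $\mathcal{T}\left[ \exp_\otimes\left( \frac{\mathi}{\hbar} L \right) \right]$, the result organises into the interacting classical term, the anomaly $\mathcal{A}\left( \mathe_\otimes^{L+F} \right)$, and correction terms proportional to the cutoff-dependent quantity $B_L \equiv \brst_0 L + \frac{1}{2} (L,L)$. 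Here the algebraic identity $\brst_0(L+F) + \frac{1}{2}(L+F,L+F) = B_L + \brst F + \frac{1}{2}(F,F)$, valid because $\brst F = \brst_0 F + (L,F)$ wherever $g = 1$, is exactly what produces the full interacting classical part $\brst F + \frac{1}{2}(F,F)$ of~\eqref{thm_brstward_anomward}.

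Next I would identify the $B_L$-corrections as an inner derivation and define $\st$ accordingly. By the BV master equation in the current form~\eqref{sec_gauge_brst_current}, $B_L$ is the integral of a total divergence, hence supported where $\total g \neq 0$ and in particular disjoint from $\supp F$; it is of order $g$. Splitting $B_L = B_L^+ + B_L^-$ into a part to the future and a part to the past of $\supp F$ and invoking the causal factorisation of interacting time-ordered products (Theorem~\ref{thm_tint_factor}), the two leftover $B_L$-terms collapse into a single graded commutator with the interacting field $\mathcal{R}\left( \mathe_\otimes^L; B_L^- \right) \in \overline{\mathfrak{A}}_0$. I would therefore set $\st \equiv \st_0 - (\mathi\hbar)^{-1}\left[ \mathcal{R}\left( \mathe_\otimes^L; B_L^- \right), \,\cdot\, \right]_{\star_\hbar}$; as the sum of a graded derivation and an inner graded derivation it is again a graded derivation on $\overline{\mathfrak{A}}_0$, manifestly of the asserted form $\st_0 + \bigo{g}$, and by construction the correction cancels precisely the unwanted terms and leaves~\eqref{thm_brstward_anomward}.

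For nilpotency I would apply $\st$ to~\eqref{thm_brstward_anomward} once more and reduce $\st^2 \mathcal{T}_L\left[ \exp_\otimes\left( \frac{\mathi}{\hbar} F \right) \right]$ by using the Ward identity a second time, exactly as in the proof of Theorem~\ref{thm_freebrstward_cons}. The classical part vanishes through $\brst^2 = 0$ (equivalently the master equation) and the graded Jacobi identity~\eqref{sec_gauge_antibracket_jacobi}, while the anomalous part is governed by the consistency condition of Theorem~\ref{thm_freebrstward_cons} applied with $F \to L + F$, whose inhomogeneity is killed by the hypothesis $\mathcal{A}\left( \mathe_\otimes^L \right) = 0$. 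I expect this to be the main obstacle: one must check that the entire tower of higher anomaly insertions $\mathcal{A}_n\left( L^{\otimes(n-1)} \otimes \,\cdot\, \right)$ entering $\st$ at order $g$ closes consistently, which is precisely the content of the $L_\infty$ relations of Theorem~\ref{thm_linfty}; I would either invoke those relations or extract the required low-order identities directly from Theorem~\ref{thm_freebrstward_cons}.

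Finally, for the on-shell statement, in the adiabatic limit $\mathcal{R}\left( \mathe_\otimes^L; B_L^- \right)$ becomes the flux of the interacting BRST Noether current through a Cauchy surface to the past of $\supp F$, so that the free commutator implementing $\st_0$ on-shell and this current-flux correction combine into the commutator with the full interacting charge $\mathcal{T}_L\left( Q \right)$. Using that antifields drop out of time-ordered products~\eqref{sec_gauge_antifields_timeordered} together with the vanishing of the antifield anomaly (Theorem~\ref{thm_anom_antifields}) to dispose of the antifield-dependent contributions, I would then conclude $\st F = (\mathi\hbar)^{-1}\left[ \mathcal{T}_L\left( Q \right), F \right]_{\star_\hbar}$ for all $F \in \overline{\mathfrak{A}}_0$ whenever $\mathcal{T}_L\left( Q \right)$ is well defined.
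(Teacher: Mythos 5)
Your proposal is correct and follows essentially the same route as the paper: transporting the free anomalous Ward identity~\eqref{remark_freebrstward_brstanomward} through the definition of $\mathcal{T}_L$, localising $\brst_0 L + \tfrac{1}{2}(L,L)$ on $\supp \total g$ via the BRST current, splitting it causally with respect to $\supp F$, defining $\st$ as $\st_0$ plus the inner graded derivation given by the commutator with the interacting insertion of the past part, and obtaining nilpotency from the consistency condition of Theorem~\ref{thm_freebrstward_cons} applied at $F \to L+F$ together with $\mathcal{A}\left( \mathe_\otimes^L \right) = 0$. The one caveat is that for nilpotency you must use that consistency-condition route rather than invoking Theorem~\ref{thm_linfty}, since the paper's proof of the $L_\infty$ relations itself presupposes $\st^2 = 0$ and that alternative would be circular.
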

\begin{proof}
We determine explicitly the generator of the difference $\st - \st_0$. For this, observe that for local functionals $F$ and $G$ we have~\cite{hollands2008,taslimitehrani2017}
\begin{equation}
\label{proof_brstward_antibracket}
(F,G) = \int f(x) \left[ \left( F, \Phi^\ddag_K(x) \right) \left( \Phi^{\vphantom{\ddag}}_K(x), G \right) + \left( G, \Phi^\ddag_K(x) \right) \left( \Phi^{\vphantom{\ddag}}_K(x), F \right) \right] \total x \eqend{,}
\end{equation}
where $f\rvert_{\supp F \cap \supp G} = 1$, as an extension of the relation~\eqref{sec_gauge_brst_current}. We recall that the fulfilment of the classical BV master equation is equivalent of the existence of the BRST current $J^\mu$ according to~\eqref{sec_gauge_brst_current} (for constant cutoff $g \neq g(x)$)
\begin{equation}
\label{proof_brstward_current}
\left. \left( S_0 + L, \Phi^\ddag_K(x) \right) \left( \Phi^{\vphantom{\ddag}}_K(x), S_0 + L \right) \right\rvert_{g = \text{const.}} = \nabla_\mu J^\mu(x) \eqend{,}
\end{equation}
and the free BRST current $J^\mu_0$ is defined by the analogue equation without $L$,
\begin{equation}
\label{proof_brstward_freecurrent}
\left( S_0, \Phi^\ddag_K(x) \right) \left( \Phi^{\vphantom{\ddag}}_K(x), S_0 \right) = \nabla_\mu J^\mu_0(x) \eqend{.}
\end{equation}
The difference $J^\mu - J^\mu_0$ admits an expansion in powers of the coupling $g$, and we define $J^\mu_g(x)$ by replacing in this expansion the constant $g$ by the cutoff function $g(x)$, $J^\mu_g \equiv \left( J^\mu - J^\mu_0 \right)_{g \to g(x)}$, and using this
\begin{splitequation}
q(x) &\equiv \left( S_0, \Phi^\ddag_K(x) \right) \left( \Phi^{\vphantom{\ddag}}_K(x), L \right) + \left( L, \Phi^\ddag_K(x) \right) \left( \Phi^{\vphantom{\ddag}}_K(x), S_0 \right) \\
&\quad+ \left( L, \Phi^\ddag_K(x) \right) \left( \Phi^{\vphantom{\ddag}}_K(x), L \right) - \nabla_\mu J^\mu_g(x) \eqend{.}
\end{splitequation}
For all $x$ such that $g(y) = \text{const.}$ for all $y$ in an open neighbourhood of $x$, we have $J^\mu_g(x) = J^\mu(x) - J^\mu_0(x)$ and thus $q(x) = 0$ using equations~\eqref{proof_brstward_current} and~\eqref{proof_brstward_freecurrent}, and therefore $\supp q \subseteq \supp \nabla_\mu g$. In particular, its integral is well-defined, and we obtain using equation~\eqref{proof_brstward_antibracket}
\begin{equation}
\Delta Q \equiv \int q(x) \total x = \left( S_0, L \right) + \frac{1}{2} \left( L, L \right) = \frac{1}{2} \left( S_0 + L, S_0 + L \right) \eqend{,}
\end{equation}
which is of order $\bigo{g}$ since it contains at least one factor of $L$. We decompose $\Delta Q = \Delta Q^+ - \Delta Q^-$ in such a way that
\begin{equation}
\label{proof_brstward_deltaqpm_supp}
J^\pm\left( \supp \Delta Q^\pm \right) \cap J^\mp \left( \supp F \right) = \emptyset
\end{equation}
(see figure~\ref{fig_brst_charge}).
\begin{figure}[ht]
\centering
\begin{tikzpicture}
\def\figaspect{1.4}
\def\figscale{0.8}
\colorlet{mdark}{blue!60} 
\colorlet{mmed}{blue!30} 
\colorlet{mlight}{green!20} 
\colorlet{mhatch}{gray!90} 
\colorlet{mcaus}{red!40} 
\fill[pattern=north east lines, pattern color=mcaus] (0,-2.0*\figscale) -- +(-1.5*\figscale*\figaspect,+1.5*\figscale) -- +(-4.5*\figscale*\figaspect,-1.5*\figscale) -- +(4.5*\figscale*\figaspect,-1.5*\figscale) -- +(+1.5*\figscale*\figaspect,+1.5*\figscale) -- cycle;
\fill[pattern=north west lines, pattern color=mcaus] (0,2.0*\figscale) -- +(-1.5*\figscale*\figaspect,-1.5*\figscale) -- +(-4.5*\figscale*\figaspect,+1.5*\figscale) -- +(4.5*\figscale*\figaspect,+1.5*\figscale) -- +(+1.5*\figscale*\figaspect,-1.5*\figscale) -- cycle;
\fill[color=mlight] (3.0*\figscale*\figaspect,0) -- (0,3.0*\figscale) -- (-3.0*\figscale*\figaspect,0) -- (0,-3.0*\figscale) -- cycle;
\fill[color=mmed] (2.0*\figscale*\figaspect,0) -- (0,2.0*\figscale) -- (-2.0*\figscale*\figaspect,0) -- (0,-2.0*\figscale) -- cycle;
\fill[pattern=north east lines, pattern color=mhatch] (2.0*\figscale*\figaspect,0) -- (3.0*\figscale*\figaspect,0) -- (0,3.0*\figscale) -- (-3.0*\figscale*\figaspect,0) -- (-2.0*\figscale*\figaspect,0) -- (0,2.0*\figscale) -- cycle;
\fill[pattern=north west lines, pattern color=mhatch] (2.0*\figscale*\figaspect,0) -- (3.0*\figscale*\figaspect,0) -- (0,-3.0*\figscale) -- (-3.0*\figscale*\figaspect,0) -- (-2.0*\figscale*\figaspect,0) -- (0,-2.0*\figscale) -- cycle;
\draw (3.0*\figscale*\figaspect,0) -- (0,3.0*\figscale) -- (-3.0*\figscale*\figaspect,0) -- (0,-3.0*\figscale) -- cycle;
\fill[color=mcaus] (\figscale*\figaspect,0) -- (0,\figscale) -- (-\figscale*\figaspect,0) -- (0,-\figscale) -- cycle;
\draw (\figscale*\figaspect,0) -- (0,\figscale) -- (-\figscale*\figaspect,0) -- (0,-\figscale) -- cycle;
\draw[dashed] (\figscale*\figaspect,0) -- +(3.5*\figscale*\figaspect,-3.5*\figscale);
\draw[dashed] (-\figscale*\figaspect,0) -- +(-3.5*\figscale*\figaspect,-3.5*\figscale);
\draw[dashed] (\figscale*\figaspect,0) -- +(3.5*\figscale*\figaspect,3.5*\figscale);
\draw[dashed] (-\figscale*\figaspect,0) -- +(-3.5*\figscale*\figaspect,3.5*\figscale);
\node at (0,0) {supp\,$F$};
\node at (1.8*\figscale*\figaspect,2.8*\figscale) {supp\,$g$};
\draw (1.8*\figscale*\figaspect,2.5*\figscale) -- +(-0.65*\figscale*\figaspect,-0.65*\figscale);
\node at (-2.3*\figscale*\figaspect,2.8*\figscale) {$J^+(\text{supp}\,F)$};
\node at (2.3*\figscale*\figaspect,-2.8*\figscale) {$J^-(\text{supp}\,F)$};
\fill[color=mmed] (-2.5*\figscale*\figaspect,-4.5*\figscale) circle [radius=0.2*\figscale];
\node at (-1.8*\figscale*\figaspect,-4.5*\figscale) {$g=1$};
\fill[color=mlight] (-0.5*\figscale*\figaspect,-4.5*\figscale) circle [radius=0.2*\figscale];
\fill[pattern=north east lines, pattern color=mhatch] (-0.5*\figscale*\figaspect,-4.5*\figscale) circle [radius=0.2*\figscale];
\node at (0.35*\figscale*\figaspect,-4.5*\figscale) {supp\,$\Delta Q^+$};
\fill[color=mlight] (1.5*\figscale*\figaspect,-4.5*\figscale) circle [radius=0.2*\figscale];
\fill[pattern=north west lines, pattern color=mhatch] (1.5*\figscale*\figaspect,-4.5*\figscale) circle [radius=0.2*\figscale];
\node at (2.35*\figscale*\figaspect,-4.5*\figscale) {supp\,$\Delta Q^-$};
\end{tikzpicture}
\caption{The cutoff functions involved in the construction of the interacting BRST differential.}
\label{fig_brst_charge}
\end{figure}
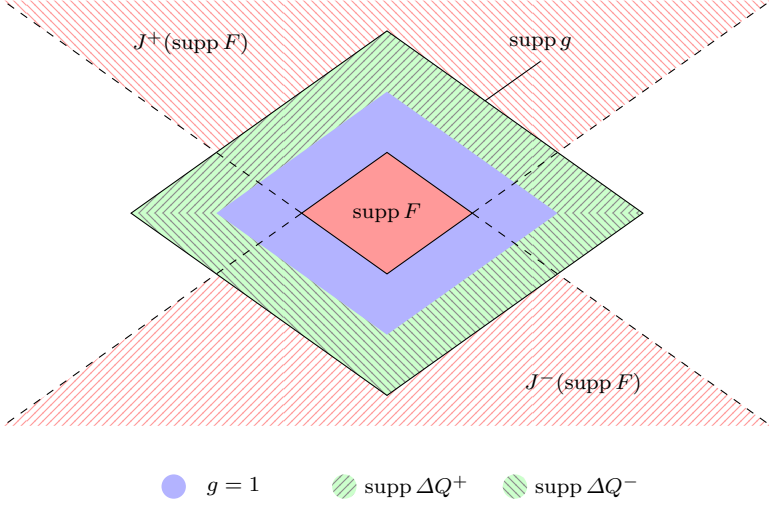
From the causal factorisation of interacting time-ordered products~\eqref{sec_aqft_inttimeord_fact}, it then follows that
\begin{splitequation}
\label{proof_brstward_qcomm}
&\left[ \mathcal{T}_L\left( \Delta Q^- \right), \mathcal{T}_L\left( F^{\otimes k} \right) \right]_{\star_\hbar} = \mathcal{T}_L\left[ \left( \Delta Q^+ - \Delta Q^- \right) \otimes F^{\otimes k} \right] \\
&\hspace{12em}- \mathcal{T}_L\left( \Delta Q^+ - \Delta Q^- \right) \star_\hbar \mathcal{T}_L\left( F^{\otimes k} \right) \\
&\quad= \frac{1}{2} \mathcal{T}_L\left[ \left( S_0 + L, S_0 + L \right) \otimes F^{\otimes k} \right] - \frac{1}{2} \mathcal{T}_L\left[ \left( S_0 + L, S_0 + L \right) \right] \star_\hbar \mathcal{T}_L\left( F^{\otimes k} \right) \eqend{.}
\end{splitequation}

We now define for any $\mathcal{G} \in \overline{\mathfrak{A}}_0$ [with the graded commutator~\eqref{sec_aqft_commutator}]
\begin{equation}
\label{proof_brstward_brstdef}
\st \mathcal{G} \equiv \st_0 \mathcal{G} + \frac{1}{\mathi \hbar} \left[ \mathcal{T}_L\left( \Delta Q^- \right), \mathcal{G} \right]_{\star_\hbar} \eqend{,}
\end{equation}
and since $\Delta Q^-$ is fermionic and time-ordered products preserve the grading, $\st$ is a graded derivation, and $\st - \st_0$ is of order $\bigo{g}$ because $\Delta Q^-$ is. While the decomposition of $\Delta Q$ into $\Delta Q^\pm$ is not unique, the action of $\st$ is nevertheless well defined on the interacting time-ordered products of $F$, because for any other decomposition $\Delta Q^-$ differs only by terms whose support is spacelike separated from the support of $F$ by condition~\eqref{proof_brstward_deltaqpm_supp}, and which therefore do not contribute to the commutator~\eqref{proof_brstward_qcomm} or~\eqref{proof_brstward_brstdef}. Using the anomalous Ward identity for the free BRST differential $\st_0$~\eqref{remark_freebrstward_brstanomward}, it follows that
\begin{splitequation}
\label{proof_brstward_s0_on_tl}
&\st_0 \mathcal{T}_L\left[ \exp_\otimes\left( \frac{\mathi}{\hbar} F \right) \right] = - \frac{\mathi}{\hbar} \mathcal{T}_L\left[ \left( \brst_0 L + \frac{1}{2} (L,L) + \mathcal{A}\left( \mathe_\otimes^L \right) \right) \right] \star_\hbar \mathcal{T}_L\left[ \exp_\otimes\left( \frac{\mathi}{\hbar} F \right) \right] \\
&\qquad+ \frac{\mathi}{\hbar} \mathcal{T}_L\left[ \left( \brst_0 (L+F) + \frac{1}{2} (L+F,L+F) + \mathcal{A}\left( \mathe_\otimes^{L+F} \right) \right) \otimes \exp_\otimes\left( \frac{\mathi}{\hbar} F \right) \right] \eqend{,}
\end{splitequation}
and because
\begin{equation}
\brst_0 G + \frac{1}{2} (G,G) = \frac{1}{2} \left( S_0 + G, S_0 + G \right)
\end{equation}
and we assume that $\mathcal{A}\left[ \exp_\otimes(L) \right] = 0$, we have
\begin{splitequation}
&\st_0 \mathcal{T}_L\left[ \exp_\otimes\left( \frac{\mathi}{\hbar} F \right) \right] = \frac{\mathi}{\hbar} \left[ \mathcal{T}_L\left( \Delta Q^- \right), \mathcal{T}_L\left[ \exp_\otimes\left( \frac{\mathi}{\hbar} F \right) \right] \right]_{\star_\hbar} \\
&\qquad+ \frac{\mathi}{\hbar} \mathcal{T}_L\left[ \left( (S_0+L,F) + \frac{1}{2} (F,F) + \mathcal{A}\left( \mathe_\otimes^{L+F} \right) \right) \otimes \exp_\otimes\left( \frac{\mathi}{\hbar} F \right) \right]
\end{splitequation}
using equation~\eqref{proof_brstward_qcomm}. Since $g(x) = 1$ on $\supp F$, we have furthermore $(S_0 + L, F) = \brst F$, and the interacting anomalous Ward identity~\eqref{thm_brstward_anomward} follows.

In various degrees of generality, it has been shown~\cite{duetschfredenhagen1999,duetschboas2002,duetschfredenhagen2003,hollands2008,rejzner2015,taslimitehrani2017} that on-shell $\st_0$ can be written as the commutator with the free BRST charge $\mathcal{T}( Q_0 )$ whenever this charge is well-defined, and that $\st$ can be written as the commutator with the interacting BRST charge $\mathcal{T}_L( Q )$ (when it is well-defined). For this to work, one first has to show that the BRST current is conserved, i.e., that $\mathcal{T}_L\left[ \nabla_\mu J^\mu(x) \right]$~\eqref{proof_brstward_current} vanishes on-shell, which either is a condition that needs to be assumed~\cite{rejzner2015} or can be fulfilled by a redefinition of time-ordered products if a certain cohomology class of the classical BRST differential is empty~\cite{hollands2008,taslimitehrani2017}. One then embeds the region of interest into a spacetime with compact Cauchy surfaces and defines the interacting BRST charge as the integral of the BRST current smeared with a closed $3$-form over a Cauchy surface. Since the Cauchy surface is compact and the BRST current is conserved, the interacting BRST charge is well-defined and independent of the choice of Cauchy surface and it follows that $\Delta Q^+ = \Delta Q^-$. The anomalous Ward identity~\eqref{thm_brstward_anomward} is then obtained by choosing one of the surfaces in the commutator to the future of $\supp F$, and one in the past~\cite{taslimitehrani2017}. We see, however, that this is in fact unnecessary, and that one can simply incorporate a non-vanishing divergence of the BRST current in the definition of $Q$. Moreover, one does not need to restrict to spacetime with compact Cauchy surfaces, and the resulting identity holds off-shell, analogously to the classical BRST invariance which also is an off-shell identity. The only question is thus whether $\Delta Q^-$ as defined above reduces on-shell to the literature definition, which might not be obvious. Let us denote a possible difference by $\delta Q$. Since both definitions give the same anomalous interacting Ward identity~\eqref{thm_brstward_anomward}, it follows that $\mathcal{T}_L\left( \delta Q \right)$ (anti-)commutes with all interacting time-ordered products. As a formal power series in the coupling $g$, we can invert the interacting time-ordered products and express $\Phi_K(x) = \mathcal{T}_L\left[ \Phi_K(x) \right] + \bigo{g}$, and since the free-field algebra $\overline{\mathfrak{A}}_0$ is complete, we can express the higher-order terms recursively as interacting time-ordered products and obtain $\Phi_K(x) = \mathcal{T}_L\left[ \Phi_K(x) + \bigo{g} \right]$. It follows that $\mathcal{T}_L\left( \delta Q \right)$ (anti-)commutes with all $\Phi_K$, and thus is proportional to the identity~\cite{hollandswald2001}. Since $\delta Q$ is fermionic and the time-ordered products preserve the grading, the proportionality factor must contain an odd number of fermionic antifields, but because all antifields vanish on-shell, we have $\mathcal{T}_L\left( \delta Q \right) = 0$ on-shell.\footnote{Note that since the definition of $\Delta Q$ and thus $\delta Q$ depend on $\supp F$, what we have really shown is that $\supp \delta Q \cap \supp F = \emptyset$, which however is all that we care about.}

It remains to show that $\st^2 = 0$, at least when acting on interacting time-ordered products. For this we notice that the functionals on the right-hand side of the interacting anomalous Ward identity~\eqref{thm_brstward_anomward} have the same or smaller support as $F$, and we can thus apply $\st$ again, which gives
\begin{equation}
\st^2 \mathcal{T}_L\left[ \exp_\otimes\left( \frac{\mathi}{\hbar} F \right) \right] = \frac{\mathi}{\hbar} \mathcal{T}_L\left[ N \otimes \exp_\otimes\left( \frac{\mathi}{\hbar} F \right) \right]
\end{equation}
with
\begin{splitequation}
N &= \brst \left[ \brst F + \frac{1}{2} (F,F) + \mathcal{A}\left( \mathe_\otimes^{L+F} \right) \right] - \left( \left[ \brst F + \frac{1}{2} (F,F) + \mathcal{A}\left( \mathe_\otimes^{L+F} \right) \right], F \right) \\
&\quad- \mathcal{A}\left[ \mathe_\otimes^{L+F} \otimes \left( \brst F + \frac{1}{2} (F,F) + \mathcal{A}\left( \mathe_\otimes^{L+F} \right) \right) \right] \\
&= \brst \mathcal{A}\left( \mathe_\otimes^{L+F} \right) - \left( \mathcal{A}\left( \mathe_\otimes^{L+F} \right), F \right) - \mathcal{A}\left[ \mathe_\otimes^{L+F} \otimes \left( \brst F + \frac{1}{2} (F,F) + \mathcal{A}\left( \mathe_\otimes^{L+F} \right) \right) \right] \eqend{.}
\end{splitequation}
where we used that $\brst^2 = 0$, that $((F,F),F) = 0$ because of the graded Jacobi identity~\eqref{sec_gauge_antibracket_jacobi} for the antibracket, and that $\brst (F,F) = \left( S, (F,F) \right) = 2 \left( (S,F), F \right) = 2 (\brst F,F)$ again because of the graded Jacobi identity. We now consider the consistency condition for the anomaly of Theorem~\ref{thm_freebrstward_cons} for $F \to L+F$, which reads
\begin{splitequation}
\brst \mathcal{A}\left[ \mathe_\otimes^{L+F} \right] &= - \left( F, \mathcal{A}\left[ \mathe_\otimes^{L+F} \right] \right) + \mathcal{A}\left[ \left( \brst F + \frac{1}{2} (F,F) \right) \otimes \mathe_\otimes^{L+F} \right] \\
&\quad+ \mathcal{A}\left[ \mathcal{A}\left( \mathe_\otimes^{L+F} \right) \otimes \mathe_\otimes^{L+F} \right] + \frac{1}{2} \mathcal{A}\left[ \left( S_0 + L, S_0 + L \right) \otimes \mathe_\otimes^L \right] \eqend{.}
\end{splitequation}
Here, we used that
\begin{equation}
\left( S_0 + L, \mathcal{A}\left[ \mathe_\otimes^{L+F} \right] \right) = \brst \mathcal{A}\left[ \mathe_\otimes^{L+F} \right]
\end{equation}
because the anomaly is supported on the total diagonal, $\mathcal{A}\left[ \exp_\otimes(L) \right] = 0$ and $g(x) = \text{const.}$ on $\supp F$, for the same reason $\left( S_0 + L, F \right) = \brst F$, and since $(S,S) = 0$ also $\mathcal{A}\left[ \left( S_0 + L, S_0 + L \right) \otimes \mathe_\otimes^{L+F} \right] = \mathcal{A}\left[ \left( S_0 + L, S_0 + L \right) \otimes \mathe_\otimes^L \right]$. It follows that
\begin{equation}
N = \frac{1}{2} \mathcal{A}\left[ \left( S_0 + L, S_0 + L \right) \otimes \mathe_\otimes^L \right] = 0 \eqend{,}
\end{equation}
where the last equality follows from the consistency condition for the anomaly of Theorem~\ref{thm_freebrstward_cons} for $F \to L$ and $\mathcal{A}\left[ \exp_\otimes(L) \right] = 0$.
\end{proof}
\begin{remark*}
Finally, we can compare this interacting anomalous Ward identity with the identities of Fredenhagen and Rejzner~\cite{fredenhagenrejzner2013,rejzner2015} and Taslimi Tehrani~\cite{taslimitehrani2017}. While the results of~\cite{taslimitehrani2017} only hold on-shell, the analogue of our interacting Ward identity is given by their Remark 20, which expresses the commutator of the interacting BRST charge with an interacting time-ordered product, and contains the same right-hand side. Our interacting Ward identity can thus be seen as an off-shell generalisation of this identity. The comparison with~\cite{fredenhagenrejzner2013,rejzner2015} is somewhat more involved: we first note that their interacting BRST operator is obtained by conjugating the free BRST operator with the interaction, and thus corresponds to our $\st_0$ acting on an interacting time-ordered product, with the corresponding Ward identity given by~\eqref{proof_brstward_s0_on_tl}. In the formal algebraic adiabatic limit, when the quantum master equation of~\cite{duetschboas2002,brenneckeduetsch2008,fredenhagenrejzner2013,rejzner2015} holds, we have $\brst_0 L + \frac{1}{2} (L,L) = 0$ such that $\Delta Q = 0$ and $\st$ reduces to $\st_0$. In this limit, our interacting BRST differential thus agrees with the one of~\cite{fredenhagenrejzner2013,rejzner2015}, and can thus be seen as a generalisation which holds without taking the algebraic adiabatic limit. Moreover, our construction is also applicable to theories with open gauge algebras (such as supersymmetric theories), which these other works do not cover.
\end{remark*}

We now analyse the structure of the terms appearing in the interacting Ward identity of Theorem~\ref{thm_brstward} in more detail:
\begin{definition}
\label{def_brackets_linfty}
Given the anomaly $\mathcal{A}$ for the full free BRST differential $\st_0$, we define the $n$-ary quantum brackets $[\cdot]_\hbar$ by
\begin{equations}[brackets_linfty]
[-]_\hbar &\equiv 0 \eqend{,} \\
[F_1]_\hbar &\equiv \brst F_1 + (-1)^{\epsilon_1} \mathcal{A}\left[ F_1 \otimes \mathe_\otimes^L \right] \eqend{,} \\
[F_1,F_2]_\hbar &\equiv (-1)^{\epsilon_1} (F_1,F_2) + (-1)^{\epsilon_1+\epsilon_2} \mathcal{A}\left[ F_1 \otimes F_2 \otimes \mathe_\otimes^L \right] \eqend{,} \\
[F_1,\ldots,F_k]_\hbar &\equiv (-1)^{\epsilon_1+\cdots+\epsilon_k} \mathcal{A}\left[ F_1 \otimes \cdots \otimes F_k \otimes \mathe_\otimes^L \right] \eqend{,} \qquad k \geq 3 \eqend{,}
\end{equations}
where we set $\epsilon_i = \epsilon(F_i)$. We alternatively write $\stq F_1 \equiv [F_1]_\hbar$, called the \emph{quantum BRST differential}, and $(F_1,F_2)_\hbar \equiv (-1)^{\epsilon_1} [F_1,F_2]_\hbar$, the \emph{quantum antibracket}.
\end{definition}
\begin{remark*}
To shorten notation, we also write $[F^n]_\hbar = [F,\ldots,F]_\hbar$ with $n$ factors of $F$ inside the last bracket. The signs in the definition ensure that the quantum brackets are graded symmetric when two arguments are exchanged, and that they are intrinsically odd in the sense that $[\alpha G,F^k]_\hbar = (-1)^{\epsilon_\alpha} \alpha [G,F^k]_\hbar$. We see that they are multilinear maps $[\cdot]_\hbar\colon \mathcal{F}^{\otimes n} \to \mathcal{F}$, and that they are jointly induced by the free BRST differential $\brst_0$ and the interaction $L$. Furthermore, they implicitly depend on the choice of the time-ordered products, which determines the anomaly $\mathcal{A}$. Since from its definition~\eqref{sec_gauge_antibracket_def} the antibracket $(\cdot,\cdot)$ is supported on the diagonal and by Theorem~\ref{thm_anomward} the anomaly $\mathcal{A}$ is supported on the total diagonal, the quantum antibrackets are also supported on the total diagonal.
\end{remark*}
Using the quantum brackets, we can write the interacting anomalous Ward identity~\eqref{thm_brstward_anomward} under the conditions of Theorem~\ref{thm_brstward} in the succinct form
\begin{equation}
\label{anomward_quantumbracket}
\st \mathcal{T}_L\left[ \exp_\otimes\left( \frac{\mathi}{\hbar} F \right) \right] = \frac{\mathi}{\hbar} \mathcal{T}_L\left[ \left[ \mathe^F \right]_\hbar \otimes \exp_\otimes\left( \frac{\mathi}{\hbar} F \right) \right] \eqend{.}
\end{equation}

\begin{theorem}
\label{thm_linfty}
If the anomaly of the interaction $L$ with cutoff function $g$ vanishes, $\mathcal{A}\left[ \exp_\otimes\left( L \right) \right] = 0$, the quantum brackets $[\cdot]_\hbar$ form an $L_\infty$ algebra (in the $b$-picture) over $\mathcal{F}_V$ for all $V$ such that $g\rvert_V = 1$. In particular, the quantum BRST differential is nilpotent, and the quantum antibracket is a well-defined graded symmetric map between the cohomology classes of $\stq$,
\begin{equation*}
(\cdot,\cdot)_\hbar \colon H^g(\stq) \otimes H^{g'}(\stq) \to H^{g+g'+1}(\stq) \eqend{,}
\end{equation*}
and fulfils the graded Jacobi identity in cohomology.
\end{theorem}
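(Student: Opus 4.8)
The plan is to distil the entire family of $L_\infty$ relations from a single generating-functional identity obtained by applying the interacting BRST differential twice to the Ward identity~\eqref{anomward_quantumbracket}. First I would record the two inputs. The compact Ward identity reads $\st \mathcal{T}_L[\exp_\otimes(\frac{\mathi}{\hbar}F)] = \frac{\mathi}{\hbar}\mathcal{T}_L[[\mathe^F]_\hbar \otimes \exp_\otimes(\frac{\mathi}{\hbar}F)]$, and polarising it (replacing $F \to F + \alpha G$ and differentiating at $\alpha = 0$) gives the one-insertion form
\begin{splitequation*}
\st \mathcal{T}_L\left[ G \otimes \exp_\otimes\left(\frac{\mathi}{\hbar}F\right)\right] &= \mathcal{T}_L\left[ [G,\mathe^F]_\hbar \otimes \exp_\otimes\left(\frac{\mathi}{\hbar}F\right)\right] \\
&\quad+ \frac{\mathi}{\hbar}\mathcal{T}_L\left[ [\mathe^F]_\hbar \otimes G \otimes \exp_\otimes\left(\frac{\mathi}{\hbar}F\right)\right] \eqend{,}
\end{splitequation*}
with the usual extra signs for fermionic $G$. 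This is legitimate over $\mathcal{F}_V$ because the antibracket is supported on the diagonal and the anomaly $\mathcal{A}$ on the total diagonal, so $[\mathe^F]_\hbar$ and all brackets land in $\supp F \subseteq V$ where $g = 1$; this also shows the brackets close, $[\cdot]_\hbar\colon \mathcal{F}_V^{\otimes n}\to\mathcal{F}_V$.

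Next I would apply $\st$ once more to the compact Ward identity. On the left $\st^2\mathcal{T}_L[\exp_\otimes(\frac{\mathi}{\hbar}F)] = 0$ by the nilpotency established in Theorem~\ref{thm_brstward}. On the right I insert $G = [\mathe^F]_\hbar$ into the one-insertion identity. The decisive simplification is that $[\mathe^F]_\hbar = \brst F + \frac{1}{2}(F,F) + \mathcal{A}(\mathe_\otimes^{L+F})$ is Grassmann odd, so the term $\frac{\mathi}{\hbar}\mathcal{T}_L[[\mathe^F]_\hbar \otimes [\mathe^F]_\hbar \otimes \exp_\otimes(\frac{\mathi}{\hbar}F)]$ vanishes by the graded symmetry of the time-ordered products, exactly as in the proof of Theorem~\ref{thm_freebrstward_cons}. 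What remains is $\mathcal{T}_L[[[\mathe^F]_\hbar,\mathe^F]_\hbar \otimes \exp_\otimes(\frac{\mathi}{\hbar}F)] = 0$; since the interacting time-ordered products vanish only if their argument vanishes, expanding order by order in $F$ and inducting strips off the dressing factor and yields the master identity $[[\mathe^F]_\hbar,\mathe^F]_\hbar = 0$ in $\mathcal{F}_V$. The awkward fermionic signs in the polarised identity are immaterial here, precisely because the only surviving term is the one not killed by graded symmetry.

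The master identity is the complete system of $L_\infty$ relations. Expanding $[\mathe^F]_\hbar = \sum_k \frac{1}{k!}[F^k]_\hbar$, its order-$n$ coefficient is $\sum_{k=0}^n \binom{n}{k}[\,[F^k]_\hbar, F^{n-k}]_\hbar = 0$, which by polarisation is the $n$-th graded homotopy Jacobi identity; because Definition~\ref{def_brackets_linfty} makes the brackets graded symmetric and intrinsically odd, this is exactly the $L_\infty$ axiom in the $b$-picture. I would then read off the three stated consequences. At $n = 1$, the vanishing zero-ary bracket $[-]_\hbar = 0$ leaves $\stq\stq F = 0$, so $\stq$ is nilpotent. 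At $n = 2$ the would-be homotopy correction $\ell_3([-]_\hbar,F,F)$ again drops because $[-]_\hbar = 0$, so $\stq$ is a genuine graded derivation of the quantum antibracket $(\cdot,\cdot)_\hbar = (-1)^{\epsilon}[\cdot,\cdot]_\hbar$; hence $(\cdot,\cdot)_\hbar$ descends to a graded symmetric map $H^g(\stq)\otimes H^{g'}(\stq)\to H^{g+g'+1}(\stq)$ on cohomology, the grading following from~\eqref{sec_gauge_antibracket_grading} and the quantum-number preservation of $\mathcal{A}$. At $n = 3$ the Jacobiator of $(\cdot,\cdot)_\hbar$ equals $\stq$-exact terms together with terms in which $\stq$ acts on an argument, so it vanishes on $\stq$-closed classes, giving the graded Jacobi identity for the quantum antibracket in cohomology.

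The hard part will not be the conceptual skeleton, which collapses neatly once $\st^2 = 0$ and the fermionic-square cancellation are in hand, but the careful bookkeeping: tracking the Grassmann and ghost-number signs through the polarised Ward identity, justifying the order-by-order injectivity argument that removes the overall factor $\exp_\otimes(\frac{\mathi}{\hbar}F)$, and matching $[[\mathe^F]_\hbar,\mathe^F]_\hbar = 0$ term by term against the standard $L_\infty$ sign conventions. I expect these to be routine but sign-delicate, and the intrinsic-oddness and graded-symmetry conventions of Definition~\ref{def_brackets_linfty} to be precisely what render them consistent.
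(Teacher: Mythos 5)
Your proposal is correct and follows essentially the same route as the paper's proof: apply $\st$ twice to the interacting Ward identity, use $\st^2 = 0$ from Theorem~\ref{thm_brstward}, kill the term quadratic in the brackets by Grassmann oddness, strip the dressing factor using injectivity of the time-ordered products together with induction, and read off the relations $\sum_{\ell=1}^n \binom{n}{\ell}[F^{n-\ell},[F^\ell]_\hbar]_\hbar = 0$ as the $b$-picture $L_\infty$ axioms with the same $n=1,2,3$ consequences for cohomology. The only difference is cosmetic packaging: you run the cancellation at the generating-functional level via $\mathcal{T}_L[A\otimes A\otimes\cdots]=0$ for fermionic $A=[\mathe^F]_\hbar$, whereas the paper performs the equivalent relabeling argument ($\ell \to n-\ell$) on the expanded double sum.
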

\begin{remark*}
The graded symmetry of the quantum antibracket is the same as for the classical antibracket~\eqref{sec_gauge_antibracket_gradsym}. While one might expect the natural grading of the $L_\infty$ algebra to be given by the ghost number, such a grading is incompatible with the graded symmetry of the anomaly $\mathcal{A}$ whenever fermionic symmetries (such as supersymmetry) exist, since their ghosts are bosonic. The grading which is respected by the $L_\infty$ algebra is instead just the Grassmann $\mathbb{Z}_2$ grading, inherited from the grading of the anomaly $\mathcal{A}$.
\end{remark*}
\begin{proof}
If $\mathcal{A}\left[ \exp_\otimes\left( L \right) \right] = 0$ and $g\rvert_{\supp F} = 1$, the term of order $F^k$ in the interacting anomalous Ward identity~\eqref{thm_brstward_anomward}, \eqref{anomward_quantumbracket} reads
\begin{equation}
\label{proof_linfty_anomward}
\st \mathcal{T}_L\left[ F^{\otimes k} \right] = \sum_{\ell=1}^k \frac{k!}{\ell! (k-\ell)!} \left( \frac{\hbar}{\mathi} \right)^{\ell-1} \mathcal{T}_L\left[ [F^\ell]_\hbar \otimes F^{\otimes (k-\ell)} \right] \eqend{,}
\end{equation}
where we assume $F$ to be bosonic for simplicity. By polarisation (replacing $F \to F + \alpha G$, taking a derivative with respect to $\alpha$ and setting $\alpha = 0$), it follows that
\begin{splitequation}
\label{proof_linfty_anomward2}
\st \mathcal{T}_L\left[ G \otimes F^{\otimes k} \right] &= \sum_{\ell=0}^k \frac{k!}{\ell! (k-\ell)!} \left( \frac{\hbar}{\mathi} \right)^\ell \mathcal{T}_L\left[ [F^\ell,G]_\hbar \otimes F^{\otimes (k-\ell)} \right] \\
&\quad+ \sum_{\ell=1}^k \frac{k!}{\ell! (k-\ell)!} \left( \frac{\hbar}{\mathi} \right)^{\ell-1} \mathcal{T}_L\left[ [F^\ell]_\hbar \otimes G \otimes F^{\otimes (k-\ell)} \right]
\end{splitequation}
for both fermionic and bosonic $G$, using the graded symmetry of the time-ordered products and the quantum brackets. Since $\st^2 = 0$ by Theorem~\ref{thm_brstward}, we apply $\st$ again on equation~\eqref{proof_linfty_anomward}, use equation~\eqref{proof_linfty_anomward2} and shift summation indices to obtain
\begin{splitequation}
0 &= \sum_{\ell=1}^k \sum_{n=\ell}^k \frac{k!}{(n-\ell)! \ell! (k-n)!} \left( \frac{\hbar}{\mathi} \right)^{n-1} \mathcal{T}_L\left[ [F^{n-\ell},[F^\ell]_\hbar]_\hbar \otimes F^{\otimes (k-n)} \right] \\
&\quad+ \sum_{\ell=1}^k \sum_{n=\ell+1}^k \frac{k!}{(n-\ell)! \ell! (k-n)!} \left( \frac{\hbar}{\mathi} \right)^{n-2} \mathcal{T}_L\left[ [F^{n-\ell}]_\hbar \otimes [F^\ell]_\hbar \otimes F^{\otimes (k-n)} \right] \eqend{.}
\end{splitequation}
We then interchange sums using
\begin{equation}
\sum_{\ell=1}^k \sum_{n=\ell+s}^k a_{\ell,n} = \sum_{n=1}^k \sum_{\ell=1}^{n-s} a_{\ell,n} \eqend{,}
\end{equation}
and since the $[\cdot]_\hbar$ are fermionic, the sum on the second line vanishes by symmetry: take it twice and change $\ell \to n-\ell$ in the second sum, where a single minus sign results from the interchange of $[F^\ell]_\hbar$ and$[F^{n-\ell}]_\hbar$. It follows that
\begin{equation}
\sum_{n=1}^k \left( \frac{\hbar}{\mathi} \right)^{n-1} \frac{k!}{n! (k-n)!} \sum_{\ell=1}^n \frac{n!}{(n-\ell)! \ell!} \mathcal{T}_L\left[ [F^{n-\ell},[F^\ell]_\hbar]_\hbar \otimes F^{\otimes (k-n)} \right] = 0 \eqend{,}
\end{equation}
and since a time-ordered product only vanishes when its argument vanishes, we obtain by induction in $k$ that
\begin{equation}
\label{proof_linfty_relation}
\sum_{\ell=1}^n \frac{n!}{(n-\ell)! \ell!} [F^{n-\ell},[F^\ell]_\hbar]_\hbar = 0 \eqend{.}
\end{equation}
This relation (for bosonic $F$), together with the graded symmetry under the interchange of two arguments and the intrinsic oddness of the quantum brackets as explained in the remark after Definition~\ref{def_brackets_linfty}, are the defining relations for an $L_\infty$ algebra (in the $b$-picture)~\cite{ladastesheff1993,hohmzwiebach2017}. In particular, for $n = 1,2,3$ we obtain
\begin{equations}[]
&[[F]_\hbar]_\hbar = 0 \eqend{,} \\
&2 [F,[F]_\hbar]_\hbar + [[F,F]_\hbar]_\hbar = 0 \eqend{,} \\
&3 [F,F,[F]_\hbar]_\hbar + 3 [F,[F,F]_\hbar]_\hbar + [[F^3]_\hbar]_\hbar = 0 \eqend{,}
\end{equations}
which express the nilpotency of the quantum BRST differential $\stq^2 F = 0$, a compatibility condition between quantum BRST differential and quantum antibracket $\stq (F,F)_\hbar = - 2 (F, \stq F)_\hbar$, and the failure of the Jacobi identity for the quantum antibracket, $(F,(F,F)_\hbar)_\hbar = - [F,F,\stq F]_\hbar - \frac{1}{3} \stq [F^3]_\hbar$. Since $\stq^2 = 0$ and by Theorem~\ref{thm_anomward} quantum numbers are preserved, analogously to the cohomology classes of the classical BRST differential~\eqref{sec_gauge_cohom_def} we can define the cohomology classes of the quantum BRST differential at ghost number $g$:
\begin{equation}
H^g(\stq) \equiv \frac{\mathrm{Ker}(\stq\colon \mathcal{F}^g \to \mathcal{F}^{g+1})}{\mathrm{Im}(\stq\colon \mathcal{F}^{g-1} \to \mathcal{F}^g)} \eqend{.}
\end{equation}
The compatibility condition ensures that the quantum antibracket is a well-defined map between cohomology classes as stated in the theorem. By polarisation and using the graded symmetry of $[F,F]_\hbar$ under the exchange of the arguments together with its intrinsic oddness, for arbitrary $F$ and $G$ we obtain
\begin{equation}
\stq ( F, G )_\hbar = ( \stq F, G )_\hbar - (-1)^{\epsilon_F} ( F, \stq G )_\hbar \eqend{.}
\end{equation}
Choosing further $F + \stq F'$, $G + \stq G'$ for some $F' \in \mathcal{F}^{g-1}$, $G' \in \mathcal{F}^{g'-1}$ and using repeatedly the compatibility condition it follows that
\begin{equation}
( F + \stq F', G + \stq G' )_\hbar = ( F, G )_\hbar + \stq \left[ ( F', G+\stq G' )_\hbar - (-1)^{\epsilon_F} ( F, G' )_\hbar \right] \eqend{.}
\end{equation}
Therefore, the quantum antibracket is well-defined as a map from $H^g(\stq) \otimes H^{g'}(\stq) \to H^{g+g'+1}(\stq)$, since it is annihilated by $\stq$ for $\stq$-invariant entries $F$ and $G$, and choosing a different representative of an element in $H^g(\stq)$ or $H^{g'}(\stq)$ only gives a different representative of the resulting element in $H^{g+g'+1}(\stq)$. Furthermore, since for representatives $F$ of elements in $H^g(\stq)$ we have $\stq F = 0$, the Jacobi identity holds in cohomology:
\begin{equation}
(F,(F,F)_\hbar)_\hbar = 0 \mod \stq \eqend{.}
\end{equation}
\end{proof}

Lastly, we want to determine the possible observables in the quantum theory. Classical observables are (representatives of) elements of the comology of the BRST operator at zero ghost number $H^0(\brst)$, but an interacting time-ordered product of a classical observable is not $\st$-invariant because of the anomaly $\mathcal{A}$, as can be seen from the interacting anomalous Ward identity~\eqref{thm_brstward_anomward} (at linear order in $F$). However, a $\stq$-invariant classical functional will give a $\st$-invariant interacting time-ordered product, and quantum observables are thus (representatives of) elements of the cohomology of the quantum BRST operator $H^0(\stq)$. Similarly, while classically the product of two observables is again an observable because $\brst$ is a derivation, the time-ordered product of two quantum observables is not necessarily $\st$-invariant because of the higher anomalies in the interacting anomalous Ward identity~\eqref{thm_brstward_anomward}. It is therefore necessary to know if all classical observables can be extended to the quantum theory, and whether it is possible to correct their time-ordered products to obtain $\st$-invariant algebra elements. Concretely, we have:
\begin{theorem}
\label{thm_obs}
If the anomaly of the interaction $L$ with cutoff function vanishes, $\mathcal{A}\left[ \exp_\otimes\left( L \right) \right] = 0$, if the cohomology of the classical BRST differential at ghost number $1$ vanishes, $H^1(\brst) = \{0\}$, and if there exists a contracting homotopy $\mathsf{h}\colon \mathcal{F}^k \to \mathcal{F}^{k-1}$ with respect to the classical BRST differential, the following holds:
\begin{enumerate}
\item To each classical observable corresponds an observable in the quantum theory, that is, each representative of an element in $H^0(\brst)$ can be extended to a representative of an element in $H^0(\stq)$.
\item The cohomology of the quantum BRST differential $\stq$ at ghost number $1$ vanishes: $H^1(\stq) = \{0\}$.
\item There exist multilinear maps $\mathcal{C}_n\colon \mathcal{F}^{0\otimes n} \to \mathcal{F}^0$ (the contact terms), such that the interacting time-ordered product (as generating functional)
\begin{equation}
\label{thm_obs_timeordered_contact}
\mathcal{T}_L\left[ \exp_\otimes\left[ \frac{\mathi}{\hbar} F - \frac{\mathi}{\hbar} \mathcal{C}\left( \mathe_\otimes^F \right) \right] \right] \qquad\text{with}\qquad \mathcal{C}\left( \mathe_\otimes^F \right) = \sum_{k=0}^\infty \frac{1}{k!} \mathcal{C}_k\left( F^{\otimes k} \right)
\end{equation}
is $\st$-invariant and independent of the choice of representative $F$ of an element in $H^0(\stq)$, up to $\st$-exact terms. They satisfy the identities (in the sense of generating functionals)
\begin{equation}
\label{thm_obs_contactdef}
\left[ \exp\left( F - \mathcal{C}\left( \mathe_\otimes^F \right) \right) \right]_\hbar = 0 \eqend{,}
\end{equation}
$\mathcal{C}_0(-) = 0 = \mathcal{C}_1(F)$, and (again in the sense of generating functionals)
\begin{equation}
\label{thm_obs_contactdef2}
\mathcal{C}\left( \mathe_\otimes^F \otimes \stq G \right) = \left[ 1 - \exp\left( F - \mathcal{C}\left( \mathe_\otimes^F \right) \right), G \right]_\hbar
\end{equation}
for $G \in \mathcal{F}^{-1}$, and can in fact be determined from these identities.
\end{enumerate}
\end{theorem}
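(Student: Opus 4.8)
The plan is to treat the quantum BRST differential $\stq$ as a deformation of the classical one. Since by Theorem~\ref{thm_anomward} the anomaly $\mathcal{A}$ is of order $\hbar$ and raises the ghost number by one, the unary bracket $\stq F = [F]_\hbar = \brst F + (-1)^{\epsilon_F}\mathcal{A}[F \otimes \mathe_\otimes^L]$ is a \emph{linear} operator of the form $\stq = \brst + \sum_{k \geq 1}\hbar^k \sigma_k$ with $\sigma_k\colon \mathcal{F}^g \to \mathcal{F}^{g+1}$, and its nilpotency $\stq^2 = 0$ (the $n=1$ case of the $L_\infty$ relation~\eqref{proof_linfty_relation}) expands into the tower $\brst\sigma_n + \sigma_n\brst + \sum_{j=1}^{n-1}\sigma_j\sigma_{n-j} = 0$. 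First I would use this to handle parts (1) and (2) by the same obstruction argument, building the desired element order by order in $\hbar$: at order $\hbar^n$ the equation to solve is always a classical one, $\brst X_n = -R_n$ with $R_n \in \mathcal{F}^1$ assembled from the $\sigma_j$ and the already-constructed lower orders.

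The key lemma is that each source $R_n$ is $\brst$-closed; this follows from a telescoping computation in which $\brst R_n$ is rewritten using the nilpotency tower and the lower-order equations, the two resulting double sums cancelling after relabelling. Granting this, $H^1(\brst) = \{0\}$ forces $R_n$ to be $\brst$-exact, and the contracting homotopy $\mathsf{h}$ supplies the canonical primitive $X_n = -\mathsf{h}R_n$ (since $\{\brst,\mathsf{h}\} = \mathrm{id}$ on $\mathcal{F}^1$, where the cohomology vanishes). Applied with leading term $F_0 \in \ker\brst \cap \mathcal{F}^0$ this produces the quantum extension $F = F_0 + \bigo{\hbar}$ with $\stq F = 0$, which is (1); applied to solve $\stq G = F$ for a $\stq$-closed $F \in \mathcal{F}^1$ (the leading equation $\brst G_0 = F_0$ being solvable because $F_0$ is $\brst$-closed and $H^1(\brst) = \{0\}$) it gives $H^1(\stq) = \{0\}$, which is (2).

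For (3) I would observe that, via the interacting anomalous Ward identity~\eqref{anomward_quantumbracket}, the $\st$-invariance of the contact-corrected product~\eqref{thm_obs_timeordered_contact} with $\tilde F = F - \mathcal{C}(\mathe_\otimes^F)$ is \emph{exactly} the Maurer--Cartan equation $[\exp(\tilde F)]_\hbar = 0$ of~\eqref{thm_obs_contactdef}, which I would solve order by order in the number of factors of $F$. The orders $0$ and $1$ give $\mathcal{C}_0 = \mathcal{C}_1 = 0$, the linear term being $[F]_\hbar = \stq F = 0$ for a representative $F$; at order $n \geq 2$ the equation takes the form $\stq\mathcal{C}_n(F^{\otimes n}) = [F^n]_\hbar + (\text{terms in } \mathcal{C}_k,\ k<n)$ with right-hand side of ghost number $1$. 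The crucial point is that this right-hand side is $\stq$-closed, which I would deduce from the higher $L_\infty$ relations~\eqref{proof_linfty_relation} together with the inductive hypothesis that the lower $\mathcal{C}_k$ already solve the lower-order equations; then $H^1(\stq) = \{0\}$ from part (2) makes it $\stq$-exact, and $\mathsf{h}$ again yields $\mathcal{C}_n$. The case $n=2$ is already typical: $\stq\mathcal{C}_2(F^{\otimes 2}) = (F,F)_\hbar$, solvable precisely because the quantum antibracket of closed elements represents a class in $H^1(\stq) = \{0\}$ (Theorem~\ref{thm_linfty}).

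Finally, representative independence is governed by~\eqref{thm_obs_contactdef2}, which I would derive by varying the Maurer--Cartan solution along $F \to F + \stq G$ with $G \in \mathcal{F}^{-1}$ and using the polarised Ward identity~\eqref{proof_linfty_anomward2}; integrating this relation then shows that the corrected time-ordered product changes only by $\st$-exact terms. The main obstacle I anticipate is precisely the $\stq$-closedness of the Maurer--Cartan obstruction at each order: unlike the plain nilpotency tower used for (1) and (2), here one must invoke the \emph{full} hierarchy of generalised Jacobi identities~\eqref{proof_linfty_relation} and keep careful track of the combinatorial coefficients and the Grassmann signs, while simultaneously checking that the $\mathcal{C}_n$ so produced are consistent with~\eqref{thm_obs_contactdef2}, so that the two identities together indeed determine $\mathcal{C}$ as asserted.
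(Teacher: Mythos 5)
Your overall architecture coincides with the paper's: parts (1) and (2) are handled by exactly the same $\hbar$-expansion obstruction argument (your nilpotency tower for the $\sigma_k$ is the paper's relation~\eqref{proof_obs_stq_nilpot}, and your key lemma that each source $R_n$ is $\brst$-closed is proven there by the same telescoping cancellation of double sums), and part (3) is likewise attacked order by order in powers of $F$, with the $L_\infty$ relations~\eqref{proof_linfty_relation} guaranteeing that each Maurer--Cartan obstruction $K_n$ is $\stq$-closed. Up to that point the proposal is sound and faithful to the paper.

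There are, however, two genuine gaps in part (3). First, the step ``$H^1(\stq)=\{0\}$ makes $K_n$ exact, and $\mathsf{h}$ again yields $\mathcal{C}_n$'' fails as stated: $\mathsf{h}$ is a contracting homotopy for $\brst$, not for $\stq$, so $\mathsf{h}K_n$ is not a $\stq$-primitive of $K_n$; and abstract exactness gives the existence of \emph{some} primitive but not the \emph{multilinearity} of the maps $\mathcal{C}_n$ that the theorem asserts. The paper closes this gap by constructing a quantum homotopy $\mathsf{h}_\hbar=\sum_k\hbar^k\mathsf{h}^{(k)}$ satisfying $\stq\mathsf{h}_\hbar+\mathsf{h}_\hbar\stq=\operatorname{id}$ via the recursion~\eqref{proof_obs_homotopy_def} and an inductive verification~\eqref{proof_obs_homotopy_result}, and then sets $\mathcal{C}_n\equiv\mathsf{h}_\hbar K_n$; this construction (or an equivalent double induction in powers of $F$ and of $\hbar$, applying $\mathsf{h}$ at each $\hbar$-order) is a necessary ingredient your sketch omits --- it is also what makes (2) constructive. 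Second, identity~\eqref{thm_obs_contactdef2} is not obtained by ``varying the Maurer--Cartan solution'': the equation~\eqref{thm_obs_contactdef} determines each $\mathcal{C}_n$ only up to a $\stq$-exact term, and~\eqref{thm_obs_contactdef2} is precisely the normalisation that must be actively imposed using this residual freedom, in the paper through the redefinition $\mathcal{C}_{n+1}\left(F^{\otimes n}\otimes\stq G\right)\to\mathcal{C}_{n+1}\left(F^{\otimes n}\otimes\stq G\right)-\stq\mathsf{h}_\hbar\left[F_\mathcal{C}^n,G\right]_\hbar$, whose consistency relies on the polarised $L_\infty$ identities~\eqref{proof_obs_linftyrels_fi_g}. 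Since the representative-independence of~\eqref{thm_obs_timeordered_contact} is then proved \emph{from}~\eqref{thm_obs_contactdef2}, the issue you deferred to the end (``checking that the $\mathcal{C}_n$ so produced are consistent with~\eqref{thm_obs_contactdef2}'') is not a bookkeeping afterthought but the missing core of the argument.
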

\begin{remark*}
The existence of the contracting homotopy $\mathsf{h}$ is necessary in order to determine the maps $\mathcal{C}_n$ from the given identities as multilinear maps. $\mathsf{h}$ is in particular a right inverse for the BRST differential at ghost number $1$ for closed functionals, i.e., we have $\brst \mathsf{h} G = G$ for all $G \in \mathcal{F}^1$ such that $\brst G = 0$. Such a homotopy can be constructed perturbatively if one has a contracting homotopy $\mathsf{h}_0$ with respect to the free BRST differential $\brst_0$, in complete analogy to the construction of the quantum homology $\mathsf{h}_\hbar$ from $\mathsf{h}$ which is done in the proof. The homotopy $\mathsf{h}_0$ in turn can be constructed as follows~\cite{henneauxteitelboim1992,barnichetal2000}: one considers the jet space of fields, antifields and their derivatives, and chooses a basis $\{ u_i, v_i \}_{i \in I} \cup \{ w_j \}_{j \in J}$ of jet space such that $\brst_0 u_i = v_i$ and $\brst_0 w_j = 0$. We then define $\tilde{\mathsf{h}}_0$ by $\tilde{\mathsf{h}}_0 w_j = 0 = \tilde{\mathsf{h}}_0 u_i$, $\tilde{\mathsf{h}}_0 v_i = u_i$, linearity and a graded Leibniz rule. It follows that $\tilde{\mathsf{h}}_0$ satisfies $\brst_0 \tilde{\mathsf{h}}_0 + \tilde{\mathsf{h}}_0 \brst_0 = N_{u,v}$, where $N_{u,v}$ is the counting operator for the number of $u_i$ and $v_i$ basis elements. $N_{u,v}$ defines a grading on jet space, and $\mathsf{h}_0$ is defined by setting its action equal to the action of $\tilde{\mathsf{h}}_0/n$ on the subspace of jet space where $N_{u,v}$ has positive eigenvalue $n > 0$, and by defining $\mathsf{h}_0 w = 0$ for all $w$ in the subspace of eigenvalue $0$. It follows that the contracting homotopy relation $\brst_0 \mathsf{h}_0 + \mathsf{h}_0 \brst_0 = \operatorname{id}$ holds on all subspaces of positive eigenvalue $n > 0$. By construction, the elements of the subspace of vanishing eigenvalue $n = 0$ are representatives of elements of the cohomology of $\brst_0$, which for most theories will not be empty. However, if $H(\brst) = \{0\}$ (at the ghost number under consideration) these representatives cannot be extended to the (classical) interacting theory, and one can restrict to the subspace of positive eigenvalues $n > 0$. If moreover the BRST differential $\brst$ does map this subspace into itself (as it does for all the examples given previously), the construction of the proof is still applicable. Lastly, we note that the concrete choice of basis in jet space depends on the theory under consideration and its construction is not entirely straightforward; in particular antifields will belong to the $u_i$ and equation-of-motion terms to the $v_i$, while only those (BRST-invariant) derivatives of fields that are not determined by the equations of motion or by Noether identities belong to the $w_j$~\cite{barnichetal2000}. However, a systematic construction is possible for all theories that are regular in the sense of~\cite{barnichetal2000}, which in particular includes the usual free scalar, spinor, ... theories, such that the existence of $\mathsf{h}$ does not constitute an additional restriction for them.

The maps $\mathcal{C}_n$ are called contact terms because they are supported on the total diagonal, which follows because the quantum brackets $[\cdot]_\hbar$ are supported on the total diagonal by the Remark after Def.~\ref{def_brackets_linfty}. It follows moreover that $\mathcal{C}_n\left( F^{\otimes n} \right) = \bigo{\hbar}$ whenever $(F,F) = \bigo{\hbar}$, which in particular happens when $F$ can be chosen independent of antifields (such as for classically gauge-invariant observables). The contact terms then represent a true quantum effect. The form of the interacting time-ordered products including contact terms~\eqref{thm_obs_timeordered_contact} is reminiscent of a field redefinition~\eqref{sec_aqft_renormfreedom}, which could be obtained by the local map
\begin{equation}
\mathcal{Z}\left( F^{\otimes n} \right) = - \mathcal{C}\left( F^{\otimes n} \right) \eqend{,} \qquad \mathcal{Z}\left( F^{\otimes n} \otimes L^{\otimes k} \right) = 0 \qquad (k \geq 1) \eqend{.}
\end{equation}
In order to maintain the field-independence property of the maps $\mathcal{Z}_n$~\eqref{sec_aqft_renorm_fieldindependence}, one would also need to simultaneously redefine all products of $F$ with other fields, and all submonomials of $F$, i.e., all terms which are obtained by one or more derivatives of $F$ with respect to fields. While one does not expect problems in principle, it is not fully clear whether this can be done without destroying other important properties: For example, if $F$ is the product of the interaction Lagrangian with another invariant polynomial, one might introduce a non-vanishing anomaly of the interaction $L$ with cutoff function $\mathcal{A}\left[ \exp_\otimes\left( L \right) \right]$ by such a redefinition~\cite{zahnprivate}.
\end{remark*}
\begin{proof}
The first result is known in the more general context of filtrations of coboundary operators (see~\cite{piguetsorella} and references therein), and we include its proof only for completeness. We choose a representative $F$ of an element in $H^0(\stq)$ and expand it in powers of $\hbar$,
\begin{equation}
\label{proof_obs_f_expansion}
F = \sum_{k=0}^\infty \hbar^k F^{(k)} \eqend{.}
\end{equation}
We also expand the quantum BRST differential $\stq$ from Definition~\ref{def_brackets_linfty},
\begin{equation}
\stq = \sum_{k=0}^\infty \hbar^k \stq^{(k)} \eqend{,}
\end{equation}
and since the anomaly $\mathcal{A}$ is at least of order $\hbar$ according to Theorem~\ref{thm_anomward}, we have $\stq^{(0)} = \brst$. Comparing powers of $\hbar$, we thus obtain the relations
\begin{equation}
\label{proof_obs_fk_cohomconds}
\brst F^{(0)} = 0 \eqend{,} \qquad \brst F^{(k)} = - \sum_{\ell=1}^k \stq^{(\ell)} F^{(k-\ell)} \qquad (k \geq 1) \eqend{,}
\end{equation}
and thus any representative $F$ of an element in $H^0(\stq)$ gives a representative $F^{(0)}$ of an element in $H^0(\brst)$. To show the inverse, we note that the nilpotency of $\stq$ entails
\begin{equation}
\label{proof_obs_stq_nilpot}
\brst^2 = 0 \eqend{,} \qquad \brst \stq^{(k)} = - \sum_{\ell=1}^k \stq^{(\ell)} \stq^{(k-\ell)} \qquad (k \geq 1) \eqend{.}
\end{equation}
Given $F^{(0)}$ with $\brst F^{(0)} = 0$, we then have to construct $F^{(k)}$ such that the relations~\eqref{proof_obs_fk_cohomconds} are fulfilled, which can be done by induction. For $k = 1$, the relation~\eqref{proof_obs_stq_nilpot} reads $\brst \stq^{(1)} = - \stq^{(1)} \brst$, and we have therefore
\begin{equation}
\brst \stq^{(1)} F^{(0)} = - \stq^{(1)} \brst F^{(0)} = 0 \eqend{.}
\end{equation}
Since by assumption $H^1(\brst) = \{0\}$, it follows that
\begin{equation}
\stq^{(1)} F^{(0)} = \brst B^{(1)}
\end{equation}
for some local functional $B^{(1)}$, and we simply define $F^{(1)} = - B^{(1)}$, fulfilling the condition~\eqref{proof_obs_fk_cohomconds} for $k = 1$. Assume thus that all $F^{(k')}$ with $k' < k$ have been constructed, and calculate
\begin{splitequation}
\brst \sum_{\ell=1}^k \stq^{(\ell)} F^{(k-\ell)} &= - \sum_{\ell=1}^k \sum_{n=1}^\ell \stq^{(n)} \stq^{(\ell-n)} F^{(k-\ell)} \\
&= - \sum_{\ell=1}^k \sum_{n=1}^{\ell-1} \stq^{(n)} \stq^{(\ell-n)} F^{(k-\ell)} + \sum_{\ell=1}^{k-1} \sum_{n=1}^{k-\ell} \stq^{(\ell)} \stq^{(n)} F^{(k-\ell-n)} \eqend{,}
\end{splitequation}
where we used the nilpotency relation~\eqref{proof_obs_stq_nilpot} in the first and the conditions on the $F^{(k)}$~\eqref{proof_obs_fk_cohomconds} in the second step. The sums cancel upon rearranging of the summation indices, and since by assumption $H^1(\brst) = \{0\}$, we have
\begin{equation}
\sum_{\ell=1}^k \stq^{(\ell)} F^{(k-\ell)} = \brst B^{(k)}
\end{equation}
for some local functional $B^{(k)}$, and we then define $F^{(k)} = - B^{(k)}$.

By a similar expansion, we now show that the assumption $H^1(\brst) = \{0\}$ leads to $H^1(\stq) = \{0\}$, and that the existence of a contracting homotopy $\mathsf{h}$ for $\brst$ results in a contracting homotopy $\mathsf{h}_\hbar$ for $\stq$. We consider the expansion~\eqref{proof_obs_f_expansion} for a representative $F$ of an element in $H^1(\stq)$, and obtain as before~\eqref{proof_obs_fk_cohomconds} that $\brst F^{(0)} = 0$. Since $H^1(\brst) = \{0\}$, we must have $F^{(0)} = \brst G^{(0)}$ for some local functional $G^{(0)}$. We then consider instead of $F$ the representative $F' = F - \stq G^{(0)}$ of the same element in $H^1(\stq)$, which is of higher order in $\hbar$. Repeating the process, we obtain that $F = \stq \left( G^{(0)} + \hbar G^{(1)} + \cdots \right)$, and conclude that $H^1(\stq) = \{0\}$. We now set $\mathsf{h}_\hbar \equiv \sum_{k=0}^\infty \hbar^k \mathsf{h}^{(k)}$ and define recursively
\begin{equation}
\label{proof_obs_homotopy_def}
\mathsf{h}^{(0)} = \mathsf{h} \eqend{,} \qquad \mathsf{h}^{(k)} = - \mathsf{h} \sum_{m=1}^k \left[ \stq^{(m)} \mathsf{h}^{(k-m)} + \mathsf{h}^{(k-m)} \stq^{(m)} \right] \qquad (k \geq 1) \eqend{.}
\end{equation}
By induction we show that this satisfies the homotopy condition $\stq \mathsf{h}_\hbar + \mathsf{h}_\hbar \stq = \operatorname{id}$. By assumption, the homotopy condition holds for the classical homotopy $\mathsf{h}$ and the classical BRST differential $\brst$, and at order $\hbar^k$ we have to prove that
\begin{equation}
\label{proof_obs_homotopy_cond_k}
\sum_{m=0}^k \left( \stq^{(m)} \mathsf{h}^{(k-m)} + \mathsf{h}^{(k-m)} \stq^{(m)} \right) = 0 \qquad (k \geq 1) \eqend{.}
\end{equation}
We calculate
\begin{splitequation}
&\brst \mathsf{h}^{(k)} + \mathsf{h}^{(k)} \brst = - \brst \mathsf{h} \sum_{m=1}^k \left[ \stq^{(m)} \mathsf{h}^{(k-m)} + \mathsf{h}^{(k-m)} \stq^{(m)} \right] \\
&\hspace{8em}- \mathsf{h} \sum_{m=1}^k \left[ \stq^{(m)} \mathsf{h}^{(k-m)} + \mathsf{h}^{(k-m)} \stq^{(m)} \right] \brst \\
&\quad= - \sum_{m=1}^k \left[ \stq^{(m)} \mathsf{h}^{(k-m)} + \mathsf{h}^{(k-m)} \stq^{(m)} \right] + \mathsf{h} \sum_{m=1}^k \left[ \brst \mathsf{h}^{(k-m)} \stq^{(m)} - \stq^{(m)} \mathsf{h}^{(k-m)} \brst \right] \\
&\qquad- \mathsf{h} \sum_{m=1}^k \left[ \sum_{\ell=1}^m \stq^{(\ell)} \stq^{(m-\ell)} \mathsf{h}^{(k-m)} - \sum_{\ell=0}^{m-1} \mathsf{h}^{(k-m)} \stq^{(\ell)} \stq^{(m-\ell)} \right] \eqend{,}
\raisetag{2.2em}
\end{splitequation}
where we used~\eqref{proof_obs_stq_nilpot} and the analogous equality for $\stq^{(k)} \brst$. Exchanging the sums over $m$ and $\ell$ in the last line, shifting summation indices and rearranging, it follows that
\begin{splitequation}
\label{proof_obs_homotopy_result}
&\sum_{m=0}^k \left[ \stq^{(m)} \mathsf{h}^{(k-m)} + \mathsf{h}^{(k-m)} \stq^{(m)} \right] = \mathsf{h} \left( \brst \mathsf{h} + \mathsf{h} \brst \right) \stq^{(k)} - \mathsf{h} \stq^{(k)} \left( \brst \mathsf{h} + \mathsf{h} \brst \right) \\
&\hspace{8em}+ \mathsf{h} \sum_{\ell=1}^{k-1} \sum_{m=0}^{k-\ell} \left[ \stq^{(m)} \mathsf{h}^{(k-\ell-m)} + \mathsf{h}^{(k-\ell-m)} \stq^{(m)} \right] \stq^{(\ell)} \\
&\hspace{8em}- \mathsf{h} \sum_{\ell=1}^{k-1} \stq^{(\ell)} \sum_{m=0}^{k-\ell} \left[ \stq^{(m)} \mathsf{h}^{(k-\ell-m)} + \mathsf{h}^{(k-\ell-m)} \stq^{(m)} \right] \eqend{.}
\end{splitequation}
Since $1 \leq k-\ell < k$ for all $\ell$ in the sums in the last two lines, by induction the sums over $m$, which are the homotopy condition~\eqref{proof_obs_homotopy_cond_k} at order $k-\ell$, vanish. Using the classical homotopy condition $\brst \mathsf{h} + \mathsf{h} \brst = \operatorname{id}$, the first two terms also cancel, such that the homotopy condition~\eqref{proof_obs_homotopy_cond_k} also holds at order $k$. Taking both results together, it follows that for any $F \in \mathcal{F}^1$ with $\stq F = 0$ we have $F = \stq G$ for some $G \in \mathcal{F}^0$, and we can take $G = \mathsf{h}_\hbar F$.

In case the classical homotopy condition is not fulfilled when acting on all functionals, but only on a subset $\mathcal{G} \subset \mathcal{F}$, the proof still works as long as $\stq$ maps $\mathcal{G}$ into itself, i.e., $\stq G \in \mathcal{G}$ for $G \in \mathcal{G}$, since in this case the first two terms in~\eqref{proof_obs_homotopy_result} still cancel when acting on elements of $\mathcal{G}$. In particular, this is the case for the analogous construction of $\mathsf{h}$ from $\mathsf{h}_0$ for the free theory if $H^1(\brst_0)$ is not empty but $H^1(\brst)$ is, and if $\brst$ maps the subspace of jet space that does not contain fixed (chosen) representatives of all the elements of $H^1(\brst_0)$ into itself. For the examples given previously, it is known~\cite{piguetsibold1984,barnichetal2000,brandt2003} that all elements of $H^1(\brst_0)$ have a representative that is the product of an undifferentiated ghost with a representative of an element of $H^0(\brst_0)$. The counting operator $N_{u,v}$ referred to in the Remark counts derivatives of $A_\mu$ symmetrized over all indices, differentiated ghosts, some of the antifields and Majorana spinors in the case of Super-Yang--Mills theory, and it is straightforward to check that $\brst$ maps the subspaces of jet space with positive eigenvalues of $N_{u,v}$ into itself. Therefore, in all these cases $\mathsf{h}$ can be constructed from $\mathsf{h}_0$ by the analogue of the recursion~\eqref{proof_obs_homotopy_def}.

We now show that the contact terms $\mathcal{C}$ can be determined from the identities~\eqref{thm_obs_contactdef} and~\eqref{thm_obs_contactdef2} using the contracting homotopy $\mathsf{h}$. For ease of notation, we use the shorthand
\begin{equation}
\label{proof_obs_fcn_def}
F_\mathcal{C}^n \equiv \left. \frac{\partial^n}{\partial \alpha^n} \exp\left[ \alpha F - \mathcal{C}\left( \mathe_\otimes^{\alpha F} \right) \right] \right\rvert_{\alpha = 0} \eqend{,}
\end{equation}
and expanding the identities~\eqref{thm_obs_contactdef} and~\eqref{thm_obs_contactdef2} in powers of $F \in H^0(\stq)$, we obtain
\begin{equation}
\label{proof_obs_contactdef}
\left[ F_\mathcal{C}^n \right]_\hbar = 0 \eqend{,} \qquad \mathcal{C}_{n+1}\left( F^{\otimes n} \otimes \stq G \right) = \begin{cases} 0 & n = 0 \\ - \left[ F_\mathcal{C}^n, G \right]_\hbar & n > 0 \eqend{.} \end{cases}
\end{equation}
For $n = 1$, we have $F_\mathcal{C} = F - \mathcal{C}_1(F)$ and the first identity gives
\begin{equation}
\stq F = \stq \mathcal{C}_1(F) \eqend{,}
\end{equation}
and since $\stq F = 0$ we can choose $\mathcal{C}_1(F) = 0$. We now proceed by induction in $n$. Explicitly, we have
\begin{equation}
\label{proof_obs_fcn_def2}
F_\mathcal{C}^n = \sum_{\ell=1}^n \sum_{k_1 + \cdots + k_\ell = n} \prod_{i=1}^\ell \frac{1}{k_i!} \left[ \delta_{k_i,1} F - \mathcal{C}_{k_i}\left( F^{\otimes k_i} \right) \right] \eqend{,}
\end{equation}
and the first identity that needs to be satisfied for $n > 1$ can be written as
\begin{equation}
\stq \mathcal{C}_n\left( F^{\otimes n} \right) = \left[ n! \sum_{\ell=2}^n \sum_{k_1 + \cdots + k_\ell = n} \prod_{i=1}^\ell \frac{1}{k_i!} \left[ \delta_{k_i,1} F - \mathcal{C}_{k_i}\left( F^{\otimes k_i} \right) \right] \right]_\hbar \equiv K_n \eqend{.}
\end{equation}
If the right-hand side is $\stq$-closed, $\stq K_n = 0$, we can use the contracting homotopy $\mathsf{h}_\hbar$ constructed above to define $\mathcal{C}_n\left( F^{\otimes n} \right) \equiv \mathsf{h}_\hbar K_n$, which shows (since $\mathsf{h}_\hbar$ is linear) that the $\mathcal{C}_n$ are multilinear maps. To show that $\stq K_n = 0$, we first rewrite the $L_\infty$ identities~\eqref{proof_linfty_relation} for bosonic $F_j$ by polarisation as
\begin{equation}
\label{proof_obs_linftyrels_fi}
\sum_{I \cup J = \{1,\ldots,\ell\}, J \neq \emptyset} \left[ \prod_{i \in I} F_i, \left[ \prod_{j \in J} F_j \right]_\hbar \right]_\hbar = 0 \eqend{,}
\end{equation}
that is
\begin{equation}
\label{proof_obs_linftyrels_fi2}
\stq \left[ \prod_{j=1}^\ell F_j \right]_\hbar = - \sum_{I \cup J = \{1,\ldots,\ell\}, I \neq \emptyset \neq J} \left[ \prod_{i \in I} F_i, \left[ \prod_{j \in J} F_j \right]_\hbar \right]_\hbar \eqend{.}
\end{equation}
We then calculate
\begin{splitequation}
&\stq K_n = - n! \sum_{\ell=2}^n \sum_{k_1 + \cdots + k_\ell = n} \sum_{I \cup J = \{1,\ldots,\ell\}, I \neq \emptyset \neq J} \\
&\quad\times \left[ \prod_{i \in I} \frac{1}{k_i!} \left[ \delta_{k_i,1} F - \mathcal{C}_{k_i}\left( F^{\otimes k_i} \right) \right], \left[ \prod_{j \in J} \frac{1}{k_j!} \left[ \delta_{k_j,1} F - \mathcal{C}_{k_j}\left( F^{\otimes k_j} \right) \right] \right]_\hbar \right]_\hbar \eqend{.}
\end{splitequation}
For each choice of $I$ and $J$ we now set $n' = \sum_{j \in J} k_j$, and since $I \neq \emptyset$ we have $n' < n$. Rearranging sums and using the short notation~\eqref{proof_obs_fcn_def2}, it follows that
\begin{equation}
\stq K_n = - n! \sum_{\ell=2}^n \sum_{\substack{I \cup J = \{1,\ldots,\ell\} \\ I \neq \emptyset \neq J}} \sum_{k_i\colon i\in I} \left[ \prod_{i \in I} \frac{1}{k_i!} \left[ \delta_{k_i,1} F - \mathcal{C}_{k_i}\left( F^{\otimes k_i} \right) \right], \left[ F_\mathcal{C}^{n'} \right]_\hbar \right]_\hbar \eqend{,}
\end{equation}
and by induction $\left[ F_\mathcal{C}^{n'} \right]_\hbar = 0$.

This defines the contact terms up to the addition of a $\stq$-exact term at each order, and we show that one can use this freedom to also satisfy the second identity of~\eqref{proof_obs_contactdef}. For $n = 0$, this reads $\mathcal{C}_1(\stq G) = 0$ which is true since we have taken $\mathcal{C}_1(F) = 0$ for all representatives $F$ of an element in $H^0(\stq)$, in this case the zero element. Proceeding by induction, we may then assume that the identity holds for all $k < n$. By polarisation, from the first identity~\eqref{thm_obs_contactdef} we obtain to first order in $G$
\begin{equation}
\left[ \exp\left( F - \mathcal{C}\left( \mathe_\otimes^F \right) \right), \mathcal{C}\left( \mathe_\otimes^F \otimes \stq G \right) \right]_\hbar = \left[ \exp\left( F - \mathcal{C}\left( \mathe_\otimes^F \right) \right), \stq G \right]_\hbar \eqend{,}
\end{equation}
which expanding in powers of $F$ and using the shorthand~\eqref{proof_obs_fcn_def} can be written as
\begin{equation}
\sum_{k=0}^n \frac{n!}{k! (n-k)!} \left[ F_{\mathcal{C}}^k, \mathcal{C}_{n-k+1}\left( F^{\otimes (n-k)} \otimes \stq G \right) \right]_\hbar = \left[ F_{\mathcal{C}}^n, \stq G \right]_\hbar \eqend{.}
\end{equation}
Separating the term with $k = 0$ and using that the second identity of~\eqref{proof_obs_contactdef} holds for all $k < n$ by induction to replace terms on the right-hand side, we obtain
\begin{splitequation}
\label{proof_obs_stqcfstqg_rel}
\stq \mathcal{C}_{n+1}\left( F^{\otimes n} \otimes \stq G \right) &= \left[ F_{\mathcal{C}}^n, \stq G \right]_\hbar + \sum_{k=1}^{n-1} \frac{n!}{k! (n-k)!} \left[ F_{\mathcal{C}}^k, \left[ F_\mathcal{C}^{n-k}, G \right]_\hbar \right]_\hbar \\
&= - \stq \left[ F_\mathcal{C}^n, G \right]_\hbar + \sum_{k=0}^n \frac{n!}{k! (n-k)!} \left[ F_{\mathcal{C}}^k, \left[ F_\mathcal{C}^{n-k}, G \right]_\hbar \right]_\hbar \eqend{.}
\end{splitequation}
For one fermionic $G$, one obtains from the $L_\infty$ relations~\eqref{proof_obs_linftyrels_fi} by polarisation that
\begin{equation}
\label{proof_obs_linftyrels_fi_g}
\sum_{I \cup J = \{1,\ldots,\ell\}} \left[ \prod_{i \in I} F_i, \left[ G, \prod_{j \in J} F_j \right]_\hbar \right]_\hbar = \sum_{I \cup J = \{1,\ldots,\ell\}} \left[ G, \prod_{i \in I} F_i, \left[ \prod_{j \in J} F_j \right]_\hbar \right]_\hbar \eqend{,}
\end{equation}
and replacing the $F_{\mathcal{C}}^k$ in the right-hand side of~\eqref{proof_obs_stqcfstqg_rel} by their expansion~\eqref{proof_obs_fcn_def2}, using the relation~\eqref{proof_obs_linftyrels_fi_g} to shift $G$ to the first quantum bracket, and using the shorthand~\eqref{proof_obs_fcn_def2} again we obtain
\begin{equation}
\stq \mathcal{C}_{n+1}\left( F^{\otimes n} \otimes \stq G \right) = - \stq \left[ F_\mathcal{C}^n, G \right]_\hbar + \sum_{k=0}^n \frac{n!}{k! (n-k)!} \left[ G, F_{\mathcal{C}}^k, \left[ F_\mathcal{C}^{n-k} \right]_\hbar \right]_\hbar \eqend{.}
\end{equation}
By the first identity of~\eqref{proof_obs_contactdef}, we have $\left[ F_\mathcal{C}^{n-k} \right]_\hbar = 0$, and applying the contracting homotopy $\mathsf{h}_\hbar$ we obtain
\begin{splitequation}
\mathcal{C}_{n+1}\left( F^{\otimes n} \otimes \stq G \right) &= \mathsf{h}_\hbar \stq \mathcal{C}_{n+1}\left( F^{\otimes n} \otimes \stq G \right) = - \mathsf{h}_\hbar \stq \left[ F_\mathcal{C}^n, G \right]_\hbar \\
&= - \left[ F_\mathcal{C}^n, G \right]_\hbar + \stq \mathsf{h}_\hbar \left[ F_\mathcal{C}^n, G \right]_\hbar \eqend{,}
\end{splitequation}
such that with the redefinition
\begin{equation}
\mathcal{C}_{n+1}\left( F^{\otimes n} \otimes \stq G \right) \to \mathcal{C}_{n+1}\left( F^{\otimes n} \otimes \stq G \right) - \stq \mathsf{h}_\hbar \left[ F_\mathcal{C}^n, G \right]_\hbar
\end{equation}
also the second identity of~\eqref{proof_obs_contactdef} can be fulfilled.

Finally, we show that the interacting time-ordered products including the contact terms are $\st$-closed and, up to $\st$-exact terms, independent of the choice of representative $F$ of an element in $H^0(\stq)$. Using the form~\eqref{anomward_quantumbracket} of the interacting anomalous Ward identity, we calculate
\begin{splitequation}
&\st \mathcal{T}_L\left[ \exp_\otimes\left( \frac{\mathi}{\hbar} F - \frac{\mathi}{\hbar} \mathcal{C}\left( \mathe_\otimes^F \right) \right) \right] \\
&\quad= \frac{\mathi}{\hbar} \mathcal{T}_L\left[ \left[ \exp\left( F - \mathcal{C}\left( \mathe_\otimes^F \right) \right) \right]_\hbar \otimes \exp_\otimes\left( \frac{\mathi}{\hbar} F - \frac{\mathi}{\hbar} \mathcal{C}\left( \mathe_\otimes^F \right) \right) \right] \eqend{,}
\end{splitequation}
which vanishes by the identity~\eqref{thm_obs_contactdef}. Consider then a local functional $G \in \mathcal{F}^{-1}$ and the representative $F' = F + \alpha \stq G$ of the same element in $H^0(\stq)$. We then want to show that
\begin{splitequation}
&\frac{\partial}{\partial \alpha} \mathcal{T}_L\left[ \exp_\otimes\left[ \frac{\mathi}{\hbar} F' - \frac{\mathi}{\hbar} \mathcal{C}\left( \mathe_\otimes^{F'} \right) \right] \right] \\
&\quad= \frac{\mathi}{\hbar} \mathcal{T}_L\left[ \exp_\otimes\left[ \frac{\mathi}{\hbar} F' - \frac{\mathi}{\hbar} \mathcal{C}\left( \mathe_\otimes^{F'} \right) \right] \otimes \left[ \stq G - \mathcal{C}\left( \mathe_\otimes^{F'} \otimes \stq G \right) \right] \right]
\end{splitequation}
is $\st$-exact. Using the identity~\eqref{thm_obs_contactdef2}, it follows that
\begin{splitequation}
\mathcal{C}\left[ \mathe_\otimes^{F'} \otimes \stq G \right] = \left[ 1 - \exp\left[ F' - \mathcal{C}\left( \mathe_\otimes^{F'} \right) \right], G \right]_\hbar \eqend{,}
\end{splitequation}
and from this
\begin{splitequation}
&\frac{\partial}{\partial \alpha} \mathcal{T}_L\left[ \exp_\otimes\left[ \frac{\mathi}{\hbar} F' - \frac{\mathi}{\hbar} \mathcal{C}\left( \mathe_\otimes^{F'} \right) \right] \right] \\
&\quad= \frac{\mathi}{\hbar} \mathcal{T}_L\left[ \exp_\otimes\left[ \frac{\mathi}{\hbar} F' - \frac{\mathi}{\hbar} \mathcal{C}\left( \mathe_\otimes^{F'} \right) \right] \otimes \left[ \exp\left( F' - \mathcal{C}\left[ \mathe_\otimes^{F'} \right] \right), G \right]_\hbar \right] \eqend{.}
\end{splitequation}
Multiplying the anomalous interacting Ward identity in the form~\eqref{proof_linfty_anomward2} by $\left( \mathi/\hbar \right)^k/k!$ and summing over $k$, we obtain
\begin{splitequation}
\st \mathcal{T}_L\left[ G \otimes \exp_\otimes\left( \frac{\mathi}{\hbar} F \right) \right] &= \mathcal{T}_L\left[ \left[ \mathe^F, G \right]_\hbar \otimes \exp_\otimes\left( \frac{\mathi}{\hbar} F \right) \right] \\
&\quad+ \frac{\mathi}{\hbar} \mathcal{T}_L\left[ \left[ \mathe^F \right]_\hbar \otimes G \otimes \exp_\otimes\left( \frac{\mathi}{\hbar} F \right) \right] \eqend{.}
\end{splitequation}
Substituting in this equation $F \to F' - \mathcal{C}\left[ \exp_\otimes\left( F' \right) \right]$, the quantum bracket in the second term vanishes by the identity~\eqref{thm_obs_contactdef} for the contact terms, while the first term is exactly the right-hand side of the previous equation, such that
\begin{equation}
\frac{\partial}{\partial \alpha} \mathcal{T}_L\left[ \exp_\otimes\left[ \frac{\mathi}{\hbar} F' - \frac{\mathi}{\hbar} \mathcal{C}\left( \mathe_\otimes^{F'} \right) \right] \right] = \frac{\mathi}{\hbar} \st \mathcal{T}_L\left[ G \otimes \exp_\otimes\left[ \frac{\mathi}{\hbar} F' - \frac{\mathi}{\hbar} \mathcal{C}\left( \mathe_\otimes^{F'} \right) \right] \right] \eqend{.}
\end{equation}
\end{proof}
\begin{remark*}
Expectation values in a BRST-invariant state $\omega$, a state fulfilling
\begin{equation}
\omega\left( \st \mathcal{T}_L\left( F_1 \otimes \cdots \otimes F_k \right) \right) = 0
\end{equation}
for arbitrary $F_i$, are thus independent of the choice of representative for the observables in $H^0(\stq)$. In general, the construction of such a state is quite difficult, in particular if one wants to study global (infrared) issues. However, if one is only interested in local statements it is possible to do an explicit construction of $\omega$ for the free theory and then proceed to the interacting theory by a deformation argument. The details of such a construction are described in~\cite{duetschfredenhagen1999,junkerschrohe2002,fewsterpfenning2003,hollands2008} for spacetimes with compact Cauchy surfaces, but can be generalised to general spacetimes for the free theory (with a certain cohomological condition on the spacetime region under consideration)~\cite{hollands_rev}, and then to the interacting theory using the above definition of $\Delta Q$.
\end{remark*}

\section{Summary and Outlook}
\label{sec_summary}

We have shown that any derivation on the algebra of free quantum fields gives rise to an anomalous Ward identity when acting on time-ordered products. The classical, i.e., $\hbar$-independent part of these identities can be explicitly calculated, and we have given an explicit formula for case that the derivation is inner (its action is given by the graded commutator with an element of the algebra), and the classical limit of its generator is at most quadratic in fields. In this case, we have further shown that all except the first two classical terms vanish. This applies in particular to all symmetries of the free theory, whose Noether charge is quadratic in fields. Our main interest was the application to gauge theories, which can be treated in the BV--BRST formalism even if the algebra of gauge transformations only closes on-shell, as happens for supersymmetric theories without auxiliary fields. The introduction of antifields (sources for the BRST transformation of the fields) is strictly necessary in this case to obtain an off-shell nilpotent BRST differential. The cohomologies of this differential (and its restriction to the free theory) are highly relevant in the construction of the quantum theory; in particular if a certain cohomology is empty the anomaly in the anomalous Ward identities can be removed by a field redefinition (change of renormalisation prescription). Generalising previous works, we have explicitly determined the classical part of the anomalous Ward identities for the free BRST differential, and shown that also in this case all except the first two classical terms vanish. To obtain an anomalous Ward identity for the full interacting BRST differential, previous authors have constructed the BRST differential as the graded commutator with the interacting BRST Noether charge, which exists and fulfils the required properties on-shell provided another cohomology class is empty. However, the classical BRST symmetry is an off-shell symmetry (even for open gauge algebras in the BV--BRST formulation), and we have suceeded in constructing an element of the algebra whose commutator gives the action of the interacting BRST differential (i.e., the required interacting anomalous Ward identity) even off-shell, and without any further restriction on cohomology classes.

We have then analysed the structure of the interacting anomalous Ward identities further. Combining the classical terms with the anomalous ones, we obtained the quantum BRST differential $\stq = \brst + \bigo{\hbar}$ and the quantum antibracket $(\cdot,\cdot)_\hbar = (\cdot,\cdot) + \bigo{\hbar}$. While classical observables are (representatives of) elements of the cohomology of the classical BRST differential $\brst$ at zero ghost number $H^0(\brst)$, observables in the quantum theory are (representatives of) elements of $H^0(\stq)$. We have shown that if the classical cohomology at ghost number $1$ is empty, $H^1(\brst) = \{0\}$, each element of $H^0(\brst)$ can be extended to an element of $H^0(\stq)$, and thus there is a quantum observable corresponding to each classical one. From the construction, it follows that the extension involves higher powers of $\hbar$, and thus corresponds to genuine quantum corrections. The quantum BRST differential, the quantum antibracket and the higher anomalous terms form an $L_\infty$ algebra, which in particular ensures that the quantum BRST differential is nilpotent (such that one can examine its cohomology classes) and that the quantum antibracket is a map between cohomology classes fulfilling the Jacobi identity. Furthermore, the $L_\infty$ structure guarantees (again if $H^1(\brst) = \{0\}$) that one can find contact terms (local functionals supported on the total diagonal) such that the interacting time-ordered products are independent of the choice of representative for the quantum observable, elements of $H^0(\stq)$, up to a BRST-exact term. Since the expectation value of BRST-exact terms vanishes in a (physical) BRST-invariant state, this shows that physical results are independent of the choice of representatives, i.e., the anomalous Ward identities become proper Ward identities for expectation values in physical states, including all contact terms.

In a next step, one would also like to show that expectation values in physical states are independent of the choice of gauge fixing. While this follows formally because the gauge-fixing term (including the contribution from ghosts) is a BRST-exact term in the classical action, a fully rigorous proof is still missing. The issue is complicated by the fact that already the free quantum theory depends on the gauge-fixing, and one would thus have to show that perturbative agreement holds for BRST-exact terms such that one can freely shift the gauge-fixing terms from the free action to the interaction. However, the very definition of the BRST differential, and thus of BRST-exact terms depends on the action (i.e., acting on antifields one obtains the gauge-fixed equations of motion), and it would be necessary to restrict to gauge-fixing terms such that this is unproblematic, e.g., to antifield-independent ones for Yang--Mills theory. Independence of the expectation values of interacting time-ordered products in physical states (up to possible contact terms) then seems to follow quite straightforwardly from the proper Ward identities we have derived in this article, by the fact that a BRST-exact term is cohomologically equivalent to zero. However, also the definition of the quantum BRST differential and the BRST-invariant state depend on the gauge fixing, and one would have to show that all these can be consistently deformed.

A related issue is the existence of the limit $g \to 1$. On the algebraic level, one shows~\cite{brunettifredenhagen2000} that for interacting time-ordered products $T$ supported within a causally closed region $R$, a change of the cutoff function $g(x) \to g'(x)$ outside of $R$ (i.e., such that $\supp (g'-g) \cap R = \emptyset$) results in the conjugation $T \to T' = V \star_\hbar T \star_\hbar V^{-1}$ with a unitary operator $V(g,g')$, due to the factorisation property of the interacting time-ordered products. Therefore, the algebraic relations between the interacting time-ordered products are unaffected, and the limit $g \to 1$ can be realised as an inductive limit, called the algebraic adiabatic limit. To obtain the algebraic adiabatic limit of the constructions presented in this work, one would need to show their compatibility with this conjugation, i.e., show that the ($g$-dependent) interacting nilpotent BRST differential $\st$ and consequently the quantum brackets $[\cdot]_\hbar$ and contact terms $\mathcal{C}_n$ transform appropriately under a change of the cutoff function. However, what is needed for physical applications is the weak adiabatic limit, i.e., the limit $g \to 1$ on the level of expectation values of interacting time-ordered products in physical, BRST-invariant states. This is much more complicated problem, also because the construction of an interacting BRST-invariant state depends on the definition of the interacting BRST differential, and thus on the cutoff function. Due to infrared issues, it becomes even more complicated when massless particles are involved, which is the case for gauge theories (see, e.g., the work~\cite{duch2018} for a recent construction in Minkowski spacetime, and references therein).

Lastly, we would like to apply the general theory to the case of observables in perturbative quantum gravity, of the kind considered in~\cite{brunettietal2016,froeb2018,froeblima2018}. While it is well known that gravity is non-renormalisable as a quantum theory, one can still treat it in the sense of an effective field theory, and the perturbative expansion is well defined up to any fixed order. These observables are non-local, and it is not clear whether they can be renormalised, even at first order in perturbation theory. A possible way to accomplish this is to find a special (non-linear) gauge in which they become local, and where then the general theorems of perturbative AQFT apply. At linear order, this can be done~\cite{froeblima2018}, but the extension to the fully interacting theory is more difficult. This special gauge condition must be imposed strictly, i.e., inside time-ordered products, which is accomplished by obtaining it as equation of motion of the auxiliary field. In turn, this can be obtained as the BRST transformation of the corresponding antifield, and we thus require the corresponding anomalous terms to vanish. By the results of this article, this follows if perturbative agreement holds for the free equation of motion of the auxiliary field, i.e., the linear part of the special gauge condition.

\begin{acknowledgements}
It is a pleasure to thank Chris Fewster, Atsushi Higuchi, Stefan Hollands, Kasia Rejzner, Mojtaba Taslimi Tehrani and Jochen Zahn for discussions on (algebraic) quantum field theory, Igor Khavkine for comments on $L_\infty$ algebras and a critical reading of the manuscript, Pawe{\l} Duch for pointing out a mistake and a simplification in the proof of Theorem~\ref{thm_brstward}, and the anonymous referee for a careful reading of the manuscript and for pointing out various mistakes and typos. This work is part of a project that has received funding from the European Union's Horizon 2020 research and innovation programme under the Marie Sk{\l}odowska-Curie grant agreement No. 702750 ``QLO-QG''.
\end{acknowledgements}

\bibliography{literature}

\end{document}